\newtheorem{lemma}{Lemma}
\newtheorem{theorem}{Theorem}
\newtheorem{definition}[theorem]{Definition}
\newtheorem{proposition}[theorem]{Proposition}
\newcommand{\BR}{\mathrm{BR}}
\newcommand{\SNR}{ \mathrm{SNR}}
\newcommand{\bs}{\boldsymbol}
\newcommand{\mc}{\mathcal}
\newcommand{\defn}{\stackrel{\triangle}{=} }
\newcommand{\ds}{\displaystyle}
\renewcommand{\d}{\mathrm{d}}
\newcommand{\diag}{\mathrm{diag}}
\newcommand{\gameone}{\mathcal{G}_{(a)}}
\newcommand{\gametwo}{\mathcal{G}_{(b)}}
\newcommand{\gamei}{\mathcal{G}_{(i)}}
\newcommand{\channelvector}{\left(g_{11}, g_{12},g_{21},g_{22}\right)}
\newcommand{\channelset}{\left\lbrace g_{ij}\right\rbrace_{\forall (i,j) \in \mathcal{K} \times \mathcal{S}}}
\begin{document}
\title{Equilibria of Channel Selection Games in Parallel Multiple Access Channels}
\author{Samir~M.~Perlaza,%~\IEEEmembership{Student Member,~IEEE,}
           ~and~Samson~Lasaulce,%~\IEEEmembership{Member,~IEEE,}
 %       H.~Tembin\'{e},    %~\IEEEmembership{Member,~IEEE,}
          ~and~M\'{e}rouane~Debbah %~\IEEEmembership{Senior~Member,~IEEE}% <-this % stops a space
\thanks{S.~ M. Perlaza is with Alcatel-Lucent Chair in Flexible Radio at SUPELEC. $3$ rue Joliot-Curie, $91192$, Gif-sur-Yvette, cedex. France. (samir.medina-perlaza@supelec.fr)}%
\thanks{S. Lasaulce is with LSS (CNRS--SUPELEC--Univ. Paris Sud). $3$ rue Joliot-Curie, $91192$, Gif-sur-Yvette, Cedex. France. (lasaulce@lss.supelec.fr) }%
\thanks{M. Debbah is with Alcatel-Lucent Chair in Flexible Radio at SUPELEC. $3$ rue Joliot-Curie, $91192$, Gif-sur-Yvette, Cedex. France. (merouane.debbah@supelec.fr)}%
\thanks{Part of this work was presented at the ICST/ACM International Workshop on Game Theory in Communication Networks \cite{Perlaza-Gamecomm-09}}
}

\maketitle
\begin{abstract}

\boldmath
In this paper, we study the decentralized parallel multiple access channel (MAC) when transmitters selfishly maximize their individual spectral efficiency by selecting a single channel to transmit. More specifically, we investigate the set of Nash equilibria (NE) of decentralized networks comprising several transmitters communicating with a single receiver that implements single user decoding. 
This scenario is modeled as a one-shot game where the players (the transmitters) have discrete action sets (the channels).
We show that the corresponding game has always at least one NE in pure strategies, but, depending on certain parameters, the game might possess several NE. We provide an  upper bound for the maximum number of NE as a function of the number of transmitters and available channels.  
The main contribution of this paper is a mathematical proof of the existence of a Braess-type paradox. In particular, it is shown that under the assumption of a fully loaded network, when transmitters are allowed to use all the available channels, the corresponding sum spectral efficiency achieved at the NE is lower or equal than the sum spectral efficiency achieved when transmitters can use only one channel. A formal proof of this observation is provided in the case of small networks. For general scenarios, we provide numerical examples that show that the same effect holds as long as the network is kept fully loaded. We conclude the paper by considering the case of successive interference cancellation at the receiver. In this context, we show that the power allocation vectors at the NE are capacity maximizers. Finally, simulations are presented to verify  our theoretical results.
 \end{abstract}

%%%%%%%%%%%%%%%%%%%%%%%%%%%%%%%%%%%%%%%%%%%%%%%%%%%%%%%%%%%%%%%%%%%%
%%%%%%%%%%%%%%%%%%%%%%%%%%%%%%%%%%%%%%%%%%%%%%%%%%%%%%%%%%%%%%%%%%%%
%%%%%%%%%%%%%%%%%%%%%%%%%%%%%%%%%%%%%%%%%%%%%%%%%%%%%%%%%%%%%%%%%%%%
\section{Introduction}\label{SecIntroduction}

In their original definition \cite{Cover-Book-91}, multiple access channels (MAC) correspond to a communication scenario where several transmitters communicate with a single receiver trough a common channel. In parallel MAC, each transmitter can exploit a common set of orthogonal (sub)channels to communicate with the receiver. Often, channel orthogonality is assumed in the frequency domain, and thus, channels can be understood as different non-overlapping frequency bands. This model allows one to study communication scenarios such as 802.11-based wireless local area networks (WLANs) \cite{Chieochan-2010, Lasaulce-2008}, 
distributed soft or hard  handovers in cellular systems \cite{Perlaza-Gamecomm-09},  or throughput-maximizing power control in multi-carrier code division multiple access (MC-CDMA) systems \cite{Meshkati2006}. 

\noindent
In this paper, we analyze the parallel MAC assuming that transmitters selfishly maximize their individual spectral efficiency (ISE)  by autonomously selecting a single channel to perform their transmission. Here, the channel selection policy is not imposed by the receiver to the transmitters, which justifies the terms decentralized parallel MAC.
The motivation for studying this scenario and, in particular, the limitation of using a single channel for transmitting, stems from the fact that this is often a practical constraint in some wireless networks, for instance, Wi-Fi networks. Moreover, the choice of enforcing radio devices to use only a subset of all the available channels has been proved to be beneficial in the case of rate-efficient centralized parallel MAC when using successive interference cancellation \cite{Perlaza-Gamecomm-09}. In the case of energy-efficient decentralized parallel MAC, it has been shown that using a single channel is a dominant strategy \cite{Meshkati2006}.
Within this framework, we study this scenario, to which we refer to as channel selection (CS) problem, using a one-shot game model. The players (the transmitters) have discrete action sets (the channels) and their utility function (performance metric) corresponds to their ISE. Our interest focuses on the analysis of the set of Nash equilibria (NE)  \cite{Nash-1950} of this game. The relevance of the NE relies, in part, on the fact that it describes a network state where the channel used by each transmitter is individually optimal with respect to the channels adopted by all the other transmitters in the network. Another reason is that an NE can be reached in a fully decentralized fashion when radio devices interact during certain time following particular behavioral rules \cite{Rose-Perlaza-CommMag-2011, Lasaulce-Tembine-Book-2011}.   

\noindent
We distinguish the CS game described above from the power allocation (PA) game. In the PA game, transmitters can simultaneously use all the available channels, and thus, the set of actions is a convex and closed set  \cite{Mertikopoulos-JSAC-2011}. Indeed, the PA problem in decentralized parallel MAC has been well investigated in the wireless literature \cite{Belmega-TWC-2009, Mertikopoulos-JSAC-2011, ElGamal-08, He-10, Yu-2004}. In these works, the main contribution consists in conducting a complete analysis of the set of NE of the corresponding one-shot games. Nonetheless, very few is known about the set of equilibria in decentralized parallel MAC, with CS policies. As we shall see, by dropping the convexity of the set of actions, fundamental differences arise. For instance, the uniqueness of the NE no longer holds. 
%The relevance of this result hinges on the central problem of equilibrium selection, which remains being a subject of intense research in wireless communications \cite{Lasaulce-Tutorial-09}. Moreover, practical implementations require all the system variables to be described by a finite number of bits, which implies a discrete set of actions.

\noindent
Although the parallel MAC is, in terms of signal model, a special case of the MIMO (multiple input multiple output) interference channel (IC), the NE analysis conducted in  \cite{Scutari-Palomar-09, Berry-IT-2011, Larsson-Jorswieck-2009, Larsson-Jorswieck-2008, Yu-Cioffi-06} does not directly apply to the case of parallel MAC. First, \cite{Belmega-TWC-2009, ElGamal-08} address the case of fast fading links and shows substantial differences especially in terms of the uniqueness of the NE. This reveals the importance of the channel coherence time. Second, even if identical channel variation models are considered, the sufficient condition for the uniqueness of the NE derived in \cite{scutari-tsp_I-2008, Scutari-Palomar-09} for static or block fading IC, can be shown to hold with probability zero in parallel MAC \cite{Mertikopoulos-JSAC-2011}.

\noindent
Within this context, the main contributions of this paper are described hereunder.
\begin{itemize}
\item The set of NE of the decentralized parallel MAC is fully described in the case of CS policies and  single user decoding at the receiver. This set is shown to be non-empty and an upper bound of its cardinality is provided as a function of the number of transmitters and available channels.
\item In the $2$-transmitter $2$-channel case, it is formally proved that for any realization of the channel gains, there always exists at least one NE in the CS game that produces a higher or equal network spectral efficiency (NSE) than the unique NE in the PA game. In wireless communications, this kind of observations is often associated with a Braess type paradox \cite{Braess-69} as in \cite{Altman-wiopt-08b, Altman-wiopt-2008, Rose-Perlaza-2011}, where similar observations have been made in other scenarios. For an arbitrary number of transmitters and channels, we only provide numerical results that support the aforementioned claim.
\item The set of NE of the decentralized parallel MAC is also studied in the asymptotic regime, that is for a large number of transmitters in the case of CS policies. In this context, we provide closed-form expressions of the fraction of players which transmit over each channel as a function of the ratio between the channel bandwidth and the total available bandwidth (sum of all channel bandwidths).
\item Finally, we show show that in the case of successive interference cancellation (SIC), the power allocation vectors at the NE of the PA and CS games are capacity achieving. We provide both the power allocation vectors and the corresponding individual transmission rates at the NE such that the maximum NSE is achieved.
\end{itemize}

\noindent
The content of this paper can be briefly summarized as
follows. In Sec. \ref{SecModels}, the system and game models are
described. In Sec. \ref{SecPreliminaries}, we revisit the existing results regarding the PA problem and we provide new results on the CS problem in terms of existence and uniqueness of the NE. In Sec. \ref{SecSpecialCases}, the contribution aforementioned are fully detailed. In Sec. \ref{SecNumericalExamples}, we present simulation results in order to verify our theoretical results.
The paper is concluded by Sec. \ref{SecConclusions}.

%%%%%%%%%%%%%%%%%%%%%%%%%%%%%%%%%%%%%%%%%%%%%%%%%%%%%%%%%%%%%%%%%%%%
%%%%%%%%%%%%%%%%%%%%%%%%%%%%%%%%%%%%%%%%%%%%%%%%%%%%%%%%%%%%%%%%%%%%
%%%%%%%%%%%%%%%%%%%%%%%%%%%%%%%%%%%%%%%%%%%%%%%%%%%%%%%%%%%%%%%%%%%%
\section{Models}\label{SecModels}

\subsection{System Model}\label{SecSignalModel}

Let us define the sets $\mathcal{K} \defn \lbrace 1, \ldots, K
\rbrace$ and $\mathcal{S} \defn \lbrace 1, \ldots, S \rbrace$.
Consider a parallel multiple access channel with $K$ transmitters
and $S$ subchannels (namely non-overlapping bands). Denote by
$\bs{y} = \left( y_1, \ldots, y_S\right)^T$ the $S$-dimensional
vector representing the received signal, which can be written in the
baseband at the symbol rate as follows
\begin{equation}
    \bs{y} = \displaystyle\sum_{k = 1}^K \bs{H}_k \bs{x}_k  + \bs{w}.
\end{equation}
Here,  $\forall k \in \mathcal{K}$,  $\bs{H}_k$ is the channel
 transfer matrix from transmitter $k$ to the receiver, $\bs{x}_k$ is
  the vector of symbols transmitted by transmitter $k$, and vector
  $\bs{w}$ represents the noise observed at the receiver. We will
   exclusively deal with the scenario where the matrices
    $\bs{H}_k$ are $S$-dimensional diagonal matrices (parallel MAC), i.e.,
     $\bs{H}_k = \diag\left( h_{k,1}, \ldots, h_{k,S} \right)$.
In our analysis, block fading channels are assumed. Hence, for each channel use, the entries $h_{k,s}$, for all $(k,s) \in \mathcal{K} \times \mathcal{S}$, are time-invariant realizations of a complex circularly symmetric Gaussian random variable with zero mean and unit variance. Here, we assume that each transmitter is able to perfectly estimate its own channel realizations (coherent communications), i.e., the channels $h_{k,1}\ldots h_{k,S}$.
The vector of transmitted symbols $\bs{x}_k$, $\forall k \in \mathcal{K}$, is
 an $S$-dimensional complex circularly symmetric Gaussian random variable
  with zero mean and covariance matrix $\bs{P}_k = \mathds{E}\left(\bs{x}_k \bs{x}_k^H\right)
   = \diag\left( p_{k,1} , \ldots, p_{k,S}\right)$. Assuming the transmit symbols to be Gaussian and independent is optimal
   in terms of spectral efficiency, as shown in \cite{Telatar-99, Telatar-95}. For all $(k,s) \in
\mathcal{K}\times\mathcal{S}$, $p_{k,s}$  represents the transmit
power allocated by transmitter $k$ over channel $s$ and transmitters
are power-limited, that is,
\begin{equation}
\forall k \in \mathcal{K}, \quad    \displaystyle\sum_{s = 1}^S p_{k,s} \leqslant p_{k,\max},
        \label{EqPowerConstraints}
\end{equation}
where $p_{k,\max}$ is the maximum transmit power of transmitter $k.$
A PA vector for transmitter $k$ is any
 vector $\bs{p}_k = \left(p_{k,1}, \ldots, p_{k,S}\right)$ with
  non-negative entries satisfying (\ref{EqPowerConstraints}).
The noise vector $\bs{w}$ is an $S$-dimensional zero mean Gaussian random
variable with independent, equal variance real and imaginary parts. Here,
$\mathds{E}\left(\bs{w} \bs{w}^H \right) = \diag\left(\sigma^2_1, \ldots, \sigma^2_S\right)$,
where, $\sigma^2_s$ represents the noise power over channel $s$.
We respectively denote the noise spectral density and the bandwidth of
channel $s$ by $N_0$ and $B_s$, thus, $\sigma^2_s = N_0 B_s$. The
total bandwidth is denoted by $B = \textstyle\sum_{s = 1}^S B_s$.

\subsection{Game Models}
The PA and CS problems described in Sec. \ref{SecIntroduction} can be
respectively modeled by the following two non-cooperative static
games in strategic form (with $i \in \{a,b\}$):
\begin{equation}
\mc{G}_{(i)} = \left(\mathcal{K},
\left(\mathcal{P}^{(i)}_k\right)_{k \in \mathcal{K}},
\left(u_k\right)_{k \in \mathcal{K}}\right).
\end{equation}
In both games, the set of transmitters $\mathcal{K}$ is the set of players. An action of a given transmitter $k \in \mathcal{K}$ is a particular PA scheme, i.e., an $S$-dimensional PA vector $\bs{p}_k = \left(p_{k,1}, \ldots, p_{k,S}\right) \in \mathcal{P}_k^{(i)}$, where $\mathcal{P}_k^{(i)}$ is the set of all possible PA vectors which transmitter $k$ can use either in the game $\gameone$ ($i = a$) or in the game $\gametwo$ ($i =b$). An action profile of the game $i \in \lbrace a, b \rbrace$ is a super vector
$$\bs{p} =\left(\bs{p}_{1}, \ldots, \bs{p}_{K}\right) \in \mathcal{P}^{(i)},$$
where $\mathcal{P}^{(i)}$ is a  set obtained from the Cartesian product of all the action sets, i.e., $\mathcal{P}^{(i)} = \mathcal{P}_1^{(i)} \times \ldots \times \mathcal{P}_{K}^{(i)} $, where,
\begin{eqnarray}
\nonumber
\mathcal{P}_k^{(a)} &=& \left\lbrace \left(p_{k,1}, \ldots p_{k,S}\right) \in \mathds{R}^{S}: \forall s \in \mathcal{S},  p_{k,s} \geqslant 0, \right. \\
\label{EqStrategySetGa} 
& & \left.
\displaystyle\sum_{ s \in \mathcal{S}} p_{k,s} \leqslant p_{k,\max} \right\rbrace, \mbox{ and }\\
\nonumber \mathcal{P}_k^{(b)} &=& \left\lbrace p_{k,\max} \, \bs{e}_{s}: \forall s \in \mathcal{S}, \; \bs{e}_s = \left(e_{s,1}, \ldots, e_{s,S}\right), \right. \\
\label{EqStrategySetGb} 
& & \left.
 \forall r \in \mathcal{S}\setminus s, e_{s,r} = 0, \text{ and } e_{s,s} = 1 \right\rbrace.
\end{eqnarray}
In the sequel, we respectively refer to the games $\gameone$ and
$\gametwo$ as the PA game and CS game. Let us denote by
$\bs{p}_{-k}$ any vector in the
 set 
 \begin{equation}\mathcal{P}_{-k}^{(i)} \defn \mathcal{P}_1^{(i)} \times \ldots
 \times \mathcal{P}_{k-1}^{(i)}\times \mathcal{P}_{k+1}^{(i)}\times
 \ldots \times \mathcal{P}_{K}^{(i)}
 \end{equation} 
 with $(i,k) \in \lbrace a, b \rbrace \times \mathcal{K}$. For a given $k \in \mathcal{K}$, the
  vector denoted by $\bs{p}_{-k}$ represents the strategies adopted by
   all the players other than player $k$. With a slight
    abuse of notation, we can write any vector
    $\bs{p} \in \mathcal{P}^{(i)}$
     as $\left(\bs{p}_k,\bs{p}_{-k}\right)$, in order to emphasize
      the $k$-th  component of the super vector $\bs{p}$. The utility
      for player $k$ in the game $\gamei$ is its spectral efficiency
      $u_k: \; \mathcal{P}^{(i)} \rightarrow \mathds{R}_+$, and
\begin{equation}\label{EqUtilityFunction}
u_k(\bs{p}_{k},\bs{p}_{-k}) = \displaystyle\sum_{s \in \mathcal{S}}
\frac{B_s}{B} \log_2\left(1 + \gamma_{k,s}\right) \mbox{   [bps/Hz]},
\end{equation}
where $\gamma_{k,s}$ is the signal-to-interference
 plus noise ratio (SINR) seen by player $k$ over its channel $s$, i.e.,
\begin{equation}\label{EqSINR}
    \gamma_{k,s} = \frac{p_{k,s} g_{k,s}}{\sigma^2_s +
    \ds\sum_{j \in \mathcal{K}\setminus\lbrace k \rbrace}p_{j,s} g_{j,s}},
\end{equation}
and $g_{k,s} \triangleq \left| h_{k,s}\right|^2$. Note that from \eqref{EqSINR}, it is implied that single-user decoding (SUD) is used at the
receiver. 
The choice of SUD is basically due to scalability (in terms of signaling cost) and fairness for the
decoding scheme. 
Clearly, optimality is not sought here, nonetheless, these constraints are inherit to the decentralized nature of the network. 

\noindent
As a solution concept for both $\gameone$ and $\gametwo$, we focus on the notion of NE \cite{Nash-1950}, which we define, using our notation, as follows,
\begin{definition}[Pure Nash Equilibrium]\label{DefNE} \emph{In the
non-cooperative games in strategic form $\mathcal{G}_{(i)}$, with $i
\in \lbrace a, b \rbrace$, an action profile $\boldsymbol{p} \in
\mathcal{P}^{(i)}$ is a pure NE if it satisfies, for all $k \in
\mathcal{K}$ and for all $\bs{p}'_k \in \mathcal{P}_k^{(i)}$, that
\begin{equation}
u_k(\bs{p}_k,\bs{p}_{-k}) \geqslant  u_k(\bs{p}'_k,\bs{p}_{-k}).
\end{equation}
}
\end{definition}
The relevance of the NE is that at such state, the PA or CS policy chosen by any transmitter is optimal with respect to the choice of all the other transmitters. Thus, in a decentralized network, the NE is a stable network state, since no player has a particular interest in deviating from the actual state.

\noindent
In the following, we provide some fundamental results which we use in the further analysis of the games $\gameone$ and $\gametwo$.
%
%%%%%%%%%%%%%%%%%%%%%%%%%%%%%%%%%%%%%%%%%%%%%%%%%%%%%%%%%%%%
\section{Auxiliary Results}
\label{SecPreliminaries}

In this section, we introduce some existing results on the existence and uniqueness of the NE in the games $\gameone$ and $\gametwo$. For doing so, we use the fact that both games $\gameone$ and $\gametwo$ have been shown to be potential games (PG) \cite{Monderer-Shapley-1996, Scutari-2006} in \cite{Perlaza-Gamecomm-09} and \cite{Perlaza-Latincom-2010}, respectively. We conclude this section, by introducing a new result which allows us to establish an upper bound on the number of NE that the game $\gametwo$ might possess.

\subsection{General Results}

The analysis presented in the following strongly relies on the fact that both games $\gameone$ and $\gametwo$ are potential games. Thus, for the sake of completeness, we define exact PG using our notation.
\begin{definition}[Exact Potential Game]\label{DefPotentialGame} \emph{Any game in strategic form defined by the triplet
$$\mathcal{G} = \left( \mathcal{K},\left(\mathcal{P}_k\right)_{k \in \mathcal{K}}, \left(u_k\right)_{k \in \mathcal{K}}\right)$$
is an exact potential game if there exists a function $\phi\left(\bs{p}\right)$ for all $\bs{p} \in \mathcal{P} = \mathcal{P}_{1}\times\ldots\times\mathcal{P}_{K}$ such that
 for all players $k \in \mathcal{K}$ and for all $\bs{p}'_k \in \mathcal{P}_k$, it holds that
\begin{equation}\nonumber
 u_k(\bs{p}_{k},\bs{p}_{-k}) - u_k(\bs{p}'_{k},\bs{p}_{-k})
 = \phi(\bs{p}_{k},\bs{p}_{-k}) - \phi(\bs{p}'_{k},\bs{p}_{-k}).
\end{equation}
}
\end{definition}
Now, from the definition of the utility functions (\ref{EqUtilityFunction}) and Def. \ref{DefPotentialGame}, the results proved by the authors in \cite{Perlaza-Gamecomm-09} and \cite{Perlaza-Latincom-2010} follow immediately.
\begin{lemma}\label{LemmaGisPG}
\emph{The strategic form games $\mathcal{G}_{(i)}$, with $i \in \lbrace a, b\rbrace$, are exact potential games with potential function
\begin{equation}\label{EqPotential}
     \phi(\bs{p})= \displaystyle\sum_{s \in \mathcal{S}}\frac{B_s}{B}\log_2\left(
     \sigma^2_s  +  \displaystyle\sum_{k = 1}^K p_{k,s} g_{k,s}\right).
\end{equation}
}
\end{lemma}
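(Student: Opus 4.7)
The plan is to verify Definition \ref{DefPotentialGame} directly by showing that the utility $u_k$ in \eqref{EqUtilityFunction} decomposes as the proposed $\phi$ in \eqref{EqPotential} plus a term that is independent of $\bs{p}_k$. Since the algebraic identity involves only the functional form of $u_k$ and $\phi$, and not the geometry of the action sets $\mathcal{P}_k^{(i)}$, the claim for both $i = a$ and $i = b$ will follow from a single computation.

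First I would rewrite $1 + \gamma_{k,s}$ from \eqref{EqSINR} as the single ratio
\begin{equation}\nonumber
1 + \gamma_{k,s} = \frac{\sigma_s^2 + \displaystyle\sum_{j \in \mathcal{K}} p_{j,s}\, g_{j,s}}{\sigma_s^2 + \displaystyle\sum_{j \in \mathcal{K}\setminus \lbrace k \rbrace} p_{j,s}\, g_{j,s}},
\end{equation}
substitute into \eqref{EqUtilityFunction}, and split the logarithm as a difference. This yields the decomposition $u_k(\bs{p}_k,\bs{p}_{-k}) = \phi(\bs{p}_k,\bs{p}_{-k}) - \psi_k(\bs{p}_{-k})$, where $\psi_k(\bs{p}_{-k}) \defn \sum_{s \in \mathcal{S}} \frac{B_s}{B}\log_2\bigl(\sigma_s^2 + \sum_{j \in \mathcal{K}\setminus\lbrace k \rbrace} p_{j,s}\, g_{j,s}\bigr)$. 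The key observation is that $\psi_k$ depends only on the actions of the players other than $k$.

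Next, for any deviation $\bs{p}'_k \in \mathcal{P}_k^{(i)}$, I would subtract $u_k(\bs{p}'_k,\bs{p}_{-k})$ from $u_k(\bs{p}_k,\bs{p}_{-k})$ using the decomposition above. The two $\psi_k(\bs{p}_{-k})$ contributions are identical and cancel, leaving exactly $\phi(\bs{p}_k,\bs{p}_{-k}) - \phi(\bs{p}'_k,\bs{p}_{-k})$, which is precisely the identity required by Definition \ref{DefPotentialGame}.

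There is no genuine obstacle here: the proof reduces to a one-line algebraic cancellation resting on the fact that the interference at player $k$ over channel $s$ is an additive function of the other players' powers, and the logarithm turns the ratio $1 + \gamma_{k,s}$ into a difference. The only conceptual remark worth making is that the argument never uses the specific structure of $\mathcal{P}_k^{(a)}$ (a continuous simplex-like set) or $\mathcal{P}_k^{(b)}$ (the finite set of vertex vectors scaled by $p_{k,\max}$), so the same potential \eqref{EqPotential} serves both $\gameone$ and $\gametwo$.
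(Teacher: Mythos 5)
Your proof is correct and is exactly the ``immediate'' verification the paper has in mind (it defers the details to the cited references): writing $1+\gamma_{k,s}$ as the ratio of total received power to interference-plus-noise gives $u_k = \phi - \psi_k$ with $\psi_k$ independent of $\bs{p}_k$, so the exact-potential identity follows by cancellation. Your remark that the argument is insensitive to the structure of the action sets, and hence covers both $\gameone$ and $\gametwo$ at once, is also the right observation.
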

The relevance of PG relies on the fact that it is a class of games for which the existence of at least one pure NE is always guaranteed \cite{Monderer-Shapley-1996}.
Additionally, many known learning procedures, such as, best response dynamics, fictitious play and some reinforcement learning dynamics converge in PG. As a consequence, any of these dynamics can be used to implement algorithms to achieve an equilibrium in a fully decentralized fashion. Nonetheless, we leave the design of decentralized techniques for achieving NE out of the scope of this paper and we focus on the analysis of the equilibria. We refer the interested reader to \cite{Lasaulce-Tembine-Book-2011, scutari-tsp_II-2008, Scutari-Palomar-09, Yu-2004, Rose-Perlaza-CommMag-2011} for more details.
In the following, we use Prop. \ref{LemmaGisPG} to analyze  the set of NE of both our power allocation game $\gameone$ and our channel selection game $\gametwo$.

%%%%%%%%%%%%%%%%%%%%%%%%%%%%%%%%%%%%%%%%%%%%%%%%%%%%%%%%%%%%
%%%%%%%%%%%%%%%%%%%%%%%%%%%%%%%%%%%%%%%%%%%%%%%%%%%%%%%%%%%%
\subsection{Known Results Concerning the Power Allocation Game $\gameone$}\label{SecGa}

In the following, we comment on the existence and uniqueness of the NE in $\gameone$.

\subsubsection{Existence of a NE}

Regarding the existence of pure NE, the following proposition is an immediate consequence of our Lemma \ref{LemmaGisPG} and \emph{Lemma $4.3$} in \cite{Monderer-Shapley-1996}.
\begin{proposition}[Existence of a pure NE]\label{PropExistenceGa} \emph{The
game  $\gameone$ has always at least one NE in pure strategies.}
\end{proposition}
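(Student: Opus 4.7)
The plan is to invoke the exact-potential structure of $\gameone$ established in Lemma \ref{LemmaGisPG} and appeal to the standard fact that any global maximizer of a potential function is a pure NE. Since the existence of such a maximizer is secured by a compactness-and-continuity argument, the proof reduces to verifying those two hypotheses for $\gameone$ and then translating the potential-maximization inequality into the NE inequality of Definition \ref{DefNE}. The statement of \emph{Lemma $4.3$} of \cite{Monderer-Shapley-1996} as cited in the remark is the finite-action version; for $\gameone$, whose action sets are continuous, I would either invoke its continuous-compact analogue or reproduce the short direct argument below.

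First, I would observe that the action set $\mathcal{P}_k^{(a)}$ defined in \eqref{EqStrategySetGa} is the intersection of the non-negative orthant of $\mathds{R}^S$ with the half-space $\sum_{s \in \mathcal{S}} p_{k,s} \leqslant p_{k,\max}$, so it is a closed and bounded, hence compact, simplex in $\mathds{R}^S$. Consequently the joint action set $\mathcal{P}^{(a)} = \mathcal{P}_1^{(a)} \times \cdots \times \mathcal{P}_K^{(a)}$ is also compact. The potential function in \eqref{EqPotential} is a finite sum of logarithms of strictly positive affine functions of $\bs{p}$ (since $\sigma^2_s > 0$ for all $s \in \mathcal{S}$), hence continuous on $\mathcal{P}^{(a)}$. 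Weierstrass's extreme value theorem then yields the existence of some $\bs{p}^\star \in \mathcal{P}^{(a)}$ at which $\phi$ attains its maximum.

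To conclude, I would fix an arbitrary player $k \in \mathcal{K}$ together with an arbitrary candidate deviation $\bs{p}'_k \in \mathcal{P}_k^{(a)}$. Global maximality of $\bs{p}^\star$ gives $\phi(\bs{p}^\star_k, \bs{p}^\star_{-k}) \geqslant \phi(\bs{p}'_k, \bs{p}^\star_{-k})$, and the exact-potential property of Definition \ref{DefPotentialGame} converts this immediately into $u_k(\bs{p}^\star_k, \bs{p}^\star_{-k}) \geqslant u_k(\bs{p}'_k, \bs{p}^\star_{-k})$, which is precisely the condition of Definition \ref{DefNE}. Since $k$ and $\bs{p}'_k$ were arbitrary, $\bs{p}^\star$ is a pure NE of $\gameone$. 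I do not anticipate any serious obstacle: the compactness-plus-continuity reasoning is textbook and the deviation argument is the standard one for potential games, so the whole proof fits in a few lines once the structural observations above are in place.
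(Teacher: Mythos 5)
Your proposal is correct and is essentially the paper's own argument: the paper simply cites Lemma~$4.3$ of Monderer--Shapley after establishing the exact-potential property in Lemma~\ref{LemmaGisPG}, and that cited lemma is precisely the compactness-plus-continuity maximizer argument you spell out (your only slight misreading is that the finite-action version is Corollary~$2.2$, while Lemma~$4.3$ is already the one asserting that any maximizer of the potential is a pure NE, which is what you reproduce).
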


\noindent
Regarding the existence of a mixed NE (i.e., a probability distribution on the possible actions which verifies Definition \ref{DefNE}), it follows from \cite{Fan-1952}, that the existence of at least one NE in mixed strategies always exists. This is basically
because the action spaces, $\mathcal{P}_k^{(a)}$  are compact spaces and the utility functions are
continuous with respect to the action profile. However, in compact strategy spaces,
mixed strategies are generally less attractive due to the difficulty of its implementation in wireless communications systems \cite{Lasaulce-Tembine-Book-2011}.

\subsubsection{Uniqueness of the pure NE}
In the game $\gametwo$, the uniqueness of the NE has been shown to hold with probability one \cite{Mertikopoulos-JSAC-2011}.
\begin{theorem}[NE uniqueness in parallel MAC]\label{theo:uniqueness-KS} The game
$\mc{G}_{(a)}$ has almost surely a unique pure NE.
\end{theorem}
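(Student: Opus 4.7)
The plan is to combine the potential-game structure of Lemma~\ref{LemmaGisPG} with the strict concavity of $\phi$ in an aggregated variable and a genericity argument on the channel gains. Since $\gameone$ is an exact potential game whose potential $\phi$ is concave on the convex compact set $\mathcal{P}^{(a)}$, the first-order optimality conditions for each player at an action profile are exactly the KKT conditions for global maximization of $\phi$ over $\mathcal{P}^{(a)}$. Consequently, the set of pure NE coincides with $\argmax_{\bs{p} \in \mathcal{P}^{(a)}} \phi(\bs{p})$ and is a nonempty, convex subset of $\mathcal{P}^{(a)}$.

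Next, I would exploit the fact that $\phi$ depends on $\bs{p}$ only through the $S$-vector of per-channel received powers, $\bs{x}(\bs{p})$ with entries $x_s = \sigma_s^2 + \sum_{k \in \mathcal{K}} p_{k,s}\,g_{k,s}$, and is a strictly concave function of $\bs{x}$. Therefore every pure NE $\bs{p}^\star$ produces the same received-power vector $\bs{x}^\star$, the unique maximizer of $\sum_s (B_s/B)\log_2 x_s$ on the polytope $\{\bs{x}(\bs{p}):\bs{p}\in\mathcal{P}^{(a)}\}$. Now suppose, for contradiction, that two distinct NE $\bs{p}^{(1)}$ and $\bs{p}^{(2)}$ coexist. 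Since the gains are a.s.\ positive, the KKT water level $\nu_k$ of each player must be strictly positive, so the power constraint binds at every NE: $\sum_{s} p_{k,s}^{(i)} = p_{k,\max}$ for both $i\in\{1,2\}$. Hence there must exist a player $k$ and two channels $s_1 \neq s_2$ such that $p_{k,s_1}^{(1)}>0$ and $p_{k,s_2}^{(2)}>0$. Using the KKT conditions for player~$k$ at each NE, together with the fact that both NE induce the same $\bs{x}^\star$, the active-set equalities and the complementary-slackness inequalities chain into
\[
\frac{B_{s_1} g_{k,s_1}}{B\,x_{s_1}^\star} \,=\, \nu_k^{(1)} \,\geqslant\, \frac{B_{s_2} g_{k,s_2}}{B\,x_{s_2}^\star} \,=\, \nu_k^{(2)} \,\geqslant\, \frac{B_{s_1} g_{k,s_1}}{B\,x_{s_1}^\star},
\]
which forces the algebraic identity $g_{k,s_1}/g_{k,s_2} = (B_{s_2}\,x_{s_1}^\star)/(B_{s_1}\,x_{s_2}^\star)$.

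The main obstacle is to show that this identity holds only on a Lebesgue-null set of channel-gain realizations. The subtlety is that $\bs{x}^\star$ itself depends on all $\{g_{k,s}\}$, so the offending equation is not a fixed polynomial in $(g_{k,s_1},g_{k,s_2})$. I would handle this by partitioning the parameter space $\mathds{R}_+^{KS}$ according to the finitely many possible active-set patterns at the NE; on each cell the map $\bs{g}\mapsto\bs{x}^\star$ is given by explicit rational expressions coming from the corresponding KKT system, so the locus where the above equality holds is a proper real-analytic subvariety of the cell and hence Lebesgue-negligible. Taking the finite union over active-set patterns and over the finite choices of $(k,s_1,s_2)$ still yields a null set, and since the $g_{k,s}$ admit a density with respect to Lebesgue measure, the event of non-uniqueness has probability zero. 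This genericity step is where the real work of the proof concentrates.
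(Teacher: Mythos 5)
Your overall strategy is sound and in fact mirrors the proof the paper actually relies on: the paper does not prove Theorem~\ref{theo:uniqueness-KS} itself, but defers to \cite{Mertikopoulos-JSAC-2011}, describing that proof as based on ``degeneracy'', i.e., a characterization of the directions along which the potential is constant; the only uniqueness argument carried out inside the paper is the $2\times2$ case (Appendix~\ref{AppProf1}), done by solving the best-response system explicitly and showing that non-uniqueness forces $g_{11}g_{22}=g_{12}g_{21}$, a probability-zero event. Your first three steps are correct: for a concave exact potential on a product of polytopes the pure NE set equals $\arg\max_{\bs{p}}\phi(\bs{p})$; strict concavity of $\phi$ in the aggregate variables $x_s=\sigma_s^2+\sum_k p_{k,s}g_{k,s}$ makes $\bs{x}^\star$ common to all maximizers; the budgets bind because the marginal utilities are strictly positive; and chaining the two KKT systems for a player whose allocation differs between two NE yields $B_{s_1}g_{k,s_1}/x_{s_1}^\star=B_{s_2}g_{k,s_2}/x_{s_2}^\star$.

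The gap is in the final genericity step, and you flag it yourself without closing it. Asserting that ``the locus where the above equality holds is a proper real-analytic subvariety of the cell'' is an assertion of precisely what must be proven: a rational function of $\bs{g}$ can vanish identically on an open cell, and since $\bs{x}^\star$ is produced by the same KKT system from which your identity is derived, properness does not follow from rationality alone. (Indeed, in the $2\times2$ degenerate case the identity $g_{k,1}/g_{k,2}=x_1^\star/x_2^\star$ \emph{does} hold throughout the non-uniqueness set, so the relation involving $\bs{x}^\star$ by itself carries no measure-zero information.) The standard way to finish --- and this is exactly the ``degeneracy'' device the paper attributes to \cite{Mertikopoulos-JSAC-2011} --- is to eliminate $\bs{x}^\star$. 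Let $\bs{d}=\bs{p}^{(1)}-\bs{p}^{(2)}\neq 0$; then $\sum_s d_{k,s}=0$ for every $k$ and $\sum_k d_{k,s}g_{k,s}=0$ for every $s$, so in the bipartite player--channel graph supported on $\lbrace (k,s): d_{k,s}\neq 0\rbrace$ every vertex has degree at least two (positivity of the gains forces at least two nonzero entries per column), hence the support contains a cycle $k_1,s_1,k_2,s_2,\ldots,k_m,s_m,k_1$. Every edge of that cycle is active for the corresponding player in at least one of the two equilibria, so $B_{s}g_{k,s}/x_s^\star=\nu_k$ along each edge; multiplying the resulting ratios around the cycle cancels all the $x_s^\star$ and $B_s$ and leaves a fixed monomial identity in the gains alone, e.g.\ $g_{k_1,s_1}g_{k_2,s_2}\cdots = g_{k_2,s_1}g_{k_3,s_2}\cdots$ (reducing to $g_{k,s_1}g_{j,s_2}=g_{k,s_2}g_{j,s_1}$ for a two-player, two-channel cycle, consistent with the paper's $2\times2$ condition). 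Each such identity is a nontrivial polynomial equation, hence a Lebesgue-null set, and there are finitely many cycles to union over. With that substitution your argument is complete; as written, the key step is missing.
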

A formal proof of Theorem \ref{theo:uniqueness-KS} is provided in \cite{Mertikopoulos-JSAC-2011}.  This proof is based on the concept of degeneracy which allows one to characterize the directions along which the potential remains constant. A simpler proof, for the special case of $2$
-players and $2$-channels is given in \cite{Perlaza-PhdThesis}.
\subsubsection{Determination of the NE}

From Def. \ref{DefPotentialGame}, it follows that the unique NE in pure strategies, denoted by $\bs{p}^{\dagger}$, is the unique solution of the following optimization problem:
\begin{equation}\label{EqOPGameA}
    %\bs{p}^{\dagger} \in
    \displaystyle\arg\max_{p \in \mathcal{P}^{(a)}} \phi\left( \bs{p}
    \right).
\end{equation}
The components of the PA vector $\bs{p}^{\dagger}$ in (\ref{EqOPGameA})
are for all $(k,s) \in \mathcal{K}\times\mathcal{S}$,

\begin{equation}\label{EqNEGameA}
\begin{array}{lcl}
p_{k,s}^{\dagger} & = &         \left[\displaystyle \frac{B_s}{B}
\frac{1}{\beta_k} - \frac{\sigma_s^2 + \ds\sum_{j \in \mathcal{K}\setminus
                                                                                                \lbrace k \rbrace} p_{j,s}^{\dagger} g_{j,s}}{\textstyle g_{k,s}} \right]^+,
\end{array}
\end{equation}
where, $\beta_k$ is a Lagrangian multiplier  chosen to saturate the power constraints (\ref{EqPowerConstraints}). Note that this result shows the connections between the notion of NE and the well-known water-filling PA policy \cite{Yu-2004}. For a further discussion on this connection, the reader is referred to  \cite{Lasaulce-Tembine-Book-2011, Rose-Perlaza-CommMag-2011}.

%%%%%%%%%%%%%%%%%%%%%%%%%%%%%%%%%%%%%%%%%%%%%%%%%%%%%%%%%%%%
\subsection{New Results Concerning the Channel Selection Game $\gametwo$}\label{SecGb}

In the game $\gametwo$, it can be checked that given a vector $\bs{p}_{-k} \in \mathcal{P}^{(b)}_{-k}$, it follows that  $\forall k \in \mathcal{K}$ 
and $\forall p_k \in \left[ 0, p_{k,\max}\right]$, the potential function satisfies that $\phi\left(p_k \;
\bs{e}_{s},\bs{p}_{-k}\right) \leqslant 
\phi\left(p_{k,\max}\;\bs{e}_{s},\bs{p}_{-k}\right)$, where $\bs{e}_s = \left(e_{s,1}, \ldots, e_{s,S}\right) \in \mathds{R}^S$, $\forall r \in \mathcal{S}\setminus s$, $e_{s,r} = 0$, and $e_{s,s} = 1$. Thus, the problem of transmit power control disappears and there is no lost of generality by choosing the action sets as $\mathcal{P}_{k}^{(b)}$. Technically, the main difference between $\gameone$ and $\gametwo$ is that the latter is a finite game ($|\mc{K}\times\mc{S}|<+\infty$). In the following, we investigate the consequences of this fact on the existence and multiplicity of the NE .

%%%%%%%%%%%%%%%%%%%%%%%%%%%%%%%%%%%%%%
\subsubsection{Existence of a pure NE}

Regarding the existence of at least one NE in pure strategies, we have that from our Prop. \ref{LemmaGisPG} and \emph{Corollary $2.2$} in \cite{Monderer-Shapley-1996}, the following proposition holds.
\begin{proposition}[Existence of a pure NE]\label{PropExistenceGb} \emph{The game
  $\gametwo$ has always at least one NE in pure strategies.}
\end{proposition}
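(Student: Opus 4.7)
The plan is to invoke the standard fact that every finite exact potential game admits a pure NE, specialized to $\gametwo$, by exhibiting a maximizer of the potential $\phi$ from Lemma \ref{LemmaGisPG} as an explicit equilibrium. The key observation is that, unlike $\gameone$, the action sets $\mathcal{P}_k^{(b)}$ defined in (\ref{EqStrategySetGb}) are finite of cardinality $S$, so the joint strategy space $\mathcal{P}^{(b)} = \mathcal{P}_1^{(b)} \times \cdots \times \mathcal{P}_K^{(b)}$ has cardinality $S^K < +\infty$. This finiteness is what makes the proof essentially one line, in contrast to the compactness-plus-continuity argument that would otherwise be needed.

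First I would recall from Lemma \ref{LemmaGisPG} that $\gametwo$ is an exact potential game with potential $\phi$ given in (\ref{EqPotential}). Next, since $\mathcal{P}^{(b)}$ is a nonempty finite set, the real-valued function $\phi$ attains its maximum on $\mathcal{P}^{(b)}$; pick any
\begin{equation}
\bs{p}^{\star} \in \argmax_{\bs{p} \in \mathcal{P}^{(b)}} \phi(\bs{p}).
\end{equation}
Then for every player $k \in \mathcal{K}$ and every unilateral deviation $\bs{p}'_k \in \mathcal{P}_k^{(b)}$, the exact potential property of Definition \ref{DefPotentialGame} yields
\begin{equation}
u_k(\bs{p}^{\star}_k, \bs{p}^{\star}_{-k}) - u_k(\bs{p}'_k, \bs{p}^{\star}_{-k}) = \phi(\bs{p}^{\star}_k, \bs{p}^{\star}_{-k}) - \phi(\bs{p}'_k, \bs{p}^{\star}_{-k}) \geqslant 0,
\end{equation}
where the inequality holds because $\bs{p}^{\star}$ is a global maximizer of $\phi$ and $(\bs{p}'_k, \bs{p}^{\star}_{-k}) \in \mathcal{P}^{(b)}$. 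By Definition \ref{DefNE}, this is precisely the statement that $\bs{p}^{\star}$ is a pure NE of $\gametwo$.

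There is no real obstacle here: the whole content is the combination of (i) the already-established potential structure and (ii) the finiteness of the discrete action sets $\mathcal{P}_k^{(b)}$, which guarantees that $\phi$ attains its supremum. The argument also implicitly shows something slightly stronger than claimed, namely that every global maximizer of the potential is a pure NE, a fact that will be useful later when the authors move on to the multiplicity analysis of NE in $\gametwo$.
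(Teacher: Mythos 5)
Your proof is correct and follows essentially the same route as the paper: the paper establishes the claim by combining Lemma \ref{LemmaGisPG} with Corollary~2.2 of Monderer and Shapley (finite potential games admit a pure NE), and your argument simply unpacks that corollary by exhibiting a global maximizer of $\phi$ over the finite set $\mathcal{P}^{(b)}$ and verifying it is an equilibrium via the exact potential property. The only difference is that you make explicit what the paper leaves to the citation.
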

Regarding the existence of an equilibrium in mixed strategies, we have that given that the actions sets are discrete and
finite, then the existence of at least one NE in mixed strategies is ensured \cite{Nash-1950}.
%
%%%%%%%%%%%%%%%%%%%%%%%%%%%%%%%%%%%%%%
\subsubsection{Multiplicity of the pure NE}
In the following, we introduce a new result that allows us to determine  that the NE in the game $\gametwo$ is not necessarily unique.
\begin{proposition}\label{PropMultiplicityGb} \emph{Let $\hat{K}\in\mathds{N}$ be the highest even number which is less or equal to $K$. Then, the game
$\gametwo$ has   $L $ pure NE strategy profiles, where,
\begin{equation}\label{EqL}
1 \leqslant L
\leqslant 1 + \left(S - 1\right)\ds\sum_{i \in \lbrace 2, 4, \ldots, \hat{K}\rbrace} \left(\begin{array}{c} K \\ i\end{array}\right).
\end{equation}
}
\end{proposition}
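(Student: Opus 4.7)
My plan for the upper bound (the lower bound $L \geqslant 1$ is immediate from Proposition~\ref{PropExistenceGb}) is to exploit the fact that $\gametwo$ is a finite exact potential game (Lemma~\ref{LemmaGisPG}) with potential $\phi$ given in \eqref{EqPotential}, so pure NE are exactly the local maximizers of $\phi$ under single-player channel switches. I would fix one reference NE $\bs{p}^*$ (existence by Proposition~\ref{PropExistenceGb}) and index every other NE $\bs{p}$ by its deviation set $D(\bs{p}) = \left\lbrace k \in \mc{K} : \bs{p}_k \neq \bs{p}_k^* \right\rbrace$.

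A first easy step rules out $|D(\bs{p})| = 1$: if only a single player $k$ were to differ between $\bs{p}^*$ and $\bs{p}$, then $\bs{p}_{-k} = \bs{p}_{-k}^*$, so player $k$ would face identical best-response problems at the two profiles. Since the channel gains $g_{k,s}$ are drawn from a continuous distribution, this best response is almost surely unique, contradicting $\bs{p}_k \neq \bs{p}_k^*$. Hence $|D(\bs{p})| \geqslant 2$ for any $\bs{p} \neq \bs{p}^*$.

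The main combinatorial step is to show that $|D(\bs{p})|$ must be even. My plan here is to compare, for each player $k \in D$, the two local NE inequalities (``no profitable deviation from $\bs{p}^*$'' at $\bs{p}^*$ and ``no profitable deviation from $\bs{p}$'' at $\bs{p}$). Unpacking these via the additive structure of $\phi$ in \eqref{EqPotential} and the SINR formulas in \eqref{EqSINR} should force a pairing on $D$: each player $k \in D$ who vacates a channel $s$ in passing from $\bs{p}^*$ to $\bs{p}$ has to be compensated by another player of $D$ arriving at $s$, and symmetrically on $k$'s destination channel. Turning this informal balance into a rigorous bookkeeping statement yields $|D|$ even, so only subset sizes $i \in \lbrace 2, 4, \ldots, \hat K \rbrace$ can occur.

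Finally, for each such even $D \subseteq \mc{K}$, I would bound the number of NE $\bs{p}$ with $D(\bs{p}) = D$ by $S - 1$. Outside $D$ the channel assignment is pinned to $\bs{p}^*$, and the joint NE conditions for the players in $D$ should leave only a one-parameter family of admissible re-assignments, parameterized over the $S$ channels but with one choice corresponding to the configuration already used by $\bs{p}^*$ itself, hence at most $S-1$ alternatives. Summing $(S-1) \binom{K}{i}$ over $i \in \lbrace 2, 4, \ldots, \hat K \rbrace$ and adjoining the reference NE produces the bound in \eqref{EqL}.

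The hardest part will be making the parity argument rigorous: the evenness of $|D(\bs{p})|$ is the combinatorial core of the bound, and the ingredients (potential function, SINR inequalities, best-response optimality at two different profiles) have to be combined carefully to rule out odd-sized deviation sets. The per-$D$ bound of $S - 1$ is a second nontrivial step, which I expect to follow from a similar close reading of the unilateral deviation conditions once the parity structure is in hand.
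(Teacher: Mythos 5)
Your skeleton is essentially the paper's: the paper also fixes a reference NE, measures every other profile by the number of players whose action differs (the graph-theoretic ``distance'' in its Appendix A), rules out distance one exactly as you do, restricts admissible distances to $\lbrace 2,4,\ldots,\hat{K}\rbrace$, and allots $S-1$ candidates per subset of deviating players. So the decomposition and the first step are fine. The genuine gap is that the two load-bearing claims --- evenness of $|D(\bs{p})|$ and the cap of $S-1$ NE per deviation set --- are announced as plans rather than proved, and the mechanism you propose for the parity step cannot close it. Your ``vacancy must be compensated by an arrival'' bookkeeping only yields that each mover's preference flip requires some other member of $D$ to arrive on the vacated channel or to depart from the destination channel; a cyclic rotation of $\ell\geqslant 3$ players through $\ell$ distinct channels satisfies this balance with $|D|=\ell$ odd. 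Concretely, for $K=S=3$ with near-equal gains at high SNR, every one-to-one assignment of players to channels is a pure NE (being alone on any channel beats sharing any channel), and the identity assignment and a $3$-cycle of it differ in exactly three players' actions --- an odd deviation set. So parity of $|D|$ is not a consequence of the NE conditions, and the bound \eqref{EqL} cannot be obtained by summing only over even subset sizes via this route (in that example the bound itself happens to survive: $6\leqslant 7$).

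The per-$D$ cap of $S-1$ is likewise left unestablished: a fixed $D$ of size $\ell$ admits $(S-1)^{\ell}$ reassignments of its members, and nothing in the unilateral-deviation inequalities obviously collapses these to a one-parameter family of size $S-1$. To be fair, the paper's own appendix is terse on exactly these two points --- it asserts the distance set $\lbrace 2,4,\ldots,\hat{K}\rbrace$ and the $S-1$ count per $\ell$-tuple without further argument --- but reproducing those assertions as ``to be made rigorous'' does not discharge them, and the pairing heuristic you lean on for the parity step is demonstrably insufficient.
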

The proof of Proposition \ref{PropMultiplicityGb} is given in Appendix \ref{AppProofOfPropMultiplicityGb}.
An interesting point in Proposition \ref{PropMultiplicityGb} is that it shows that the property of uniqueness in the game $\gameone$ no longer holds (Theorem \ref{theo:uniqueness-KS} ). By quantizing the set of actions, the game loses the uniqueness property for the NE.
The upper bound of the number of NE in Proposition \ref{PropMultiplicityGb} does not depend on the realizations of the channel
gains but only on the number of transmitters $K$ and available channels $S$. As we shall see, when the exact realizations of the channels are known, then the exact number of NE can be identified. Nonetheless, relying only on the number of transmitters $K$ and number of channels $S$, the tightest upper bound is given by Proposition \ref{PropMultiplicityGb}. 
%In Sec. \textbf{XXX-large system}, we will comment further on the upper bound given by Proposition \ref{PropMultiplicityGb}.

\subsubsection{Determination of the NE}\label{SecDeterminationOfNEGameb}

In order to fully identify the action profiles corresponding to an NE, we use the graph  $G$ described in the proof of Proposition \ref{PropMultiplicityGb}  in Appendix \ref{AppProofOfPropMultiplicityGb}. Basically, we convert the non-directed graph $G$ into an oriented graph
$\hat{G}$ whose adjacency matrix is the  non-symmetric square matrix
$\bs{\hat{A}}$ whose entries are $\forall (i,j) \in \mathcal{I}^2$
and  $i\neq j$,
\begin{equation}\label{EqMatrixAHat}
    \hat{a}_{ij} = \left\lbrace \begin{array}{ccl} 1 & \text{ if }        & i \in \mathcal{V}_j \text{ and } \phi\left(\bs{p}^{(j)}\right) > \phi\left(\bs{p}^{(i)}\right)\\
                                                                                                    0 & \text{ otherwise }, &
                                                                     \end{array} \right.
\end{equation}
and $\hat{a}_{i,i} = 0$ for all $i \in \mathcal{I}$.

\noindent
From the definition of the matrix $\hat{\bs{A}}$, we have that a
necessary and sufficient condition for a vertex $v_i$ to represent
an NE action profile is to have a null out-degree in the oriented
graph $\hat{G}$, i.e., there are no outgoing edges from the node
$v_i$ (sink vertex).
Finally, one can conclude that determining the set of NE in the game $\gametwo$ boils down to
identifying all the sink vertices in the oriented graph $\hat{G}$.
In Fig. \ref{FigGraph}, we show an example of the non-directed $G$ and oriented $\hat{G}$ graphs for the case where $K = 3$ and $S = 2$.
Note that this method can be used only to determine the whole set of NE. It does not pretend to be an algorithm which can be directly implemented in decentralized wireless networks, since it requires complete information at each transmitter. 
Methods for achieving equilibria in wireless networks are described in \cite{Lasaulce-Tembine-Book-2011, scutari-tsp_II-2008, Scutari-Palomar-09, Yu-2004, Rose-Perlaza-CommMag-2011}.

%%%%%%%%%%%%%%%%%%%%%%%%%%%%%%%%%%%%%
\section{Equilibrium Performance Analysis and Special Cases}\label{SecSpecialCases}
In this section, we study in detail two special cases of relevant interest to understand previous conclusions and provide more insights into decentralized power allocation problem in terms of design.
First, the games $\gameone$ and $\gametwo$ are studied assuming that there exist only $K=2$ transmitters and $S=2$ available channels.
In particular, we analyze the set of NE action profiles of both games and compare the network spectral efficiency (NSE), $U^{(i)}: \mathcal{P}^{(i)} \rightarrow \mathds{R}$, obtained by playing both games. Here, for all $i \in \lbrace 1, 2 \rbrace$,
\begin{equation}\label{EqGlobalUtilityFunction}
U^{(i)}(\bs{p}_1, \ldots,\bs{p}_{K}) = \displaystyle\sum_{k=1}^K u_k(\bs{p}_1, \ldots,\bs{p}_{K}) \mbox{  [bps/Hz]}.
\end{equation}
From this analysis, we conclude that from a network performance point of view, limiting the transmitters to use a unique channel brings a better result in terms of network spectral efficiency \eqref{EqGlobalUtilityFunction}.
Second, we consider the case of a large number of transmitters. This study leads to two important conclusions: $(i)$ the fraction of players using a given channel depends mainly on the bandwidth of each channel and not on the exact channel realization nor the number of players and channels; $(ii)$ in the asymptotic regime ($K \rightarrow \infty$) both games exhibit the same performance.
Before we start, let us introduce the notion of best response correspondence, since it plays a central role in the following analysis.
\begin{definition}[Best-Response Correspondence]\label{DefBR} \emph{In a non-cooperative game described by the $3$-tuple $\left(\mathcal{K},\left(\mathcal{P}_k\right)_{\forall k \in \mathcal{K}},\left(u_{k}\right)_{\forall k \in \mathcal{K}}\right)$, the relation $\BR_k: \mathcal{P}_{-k} \rightarrow \mathcal{P}_{k}$ such that
    \begin{equation}\label{EqBRi}
         \BR_{k}\left(\bs{p}_{-k}\right) =
         \ds\arg\max_{\bs{q}_k \in \mathcal{P}_k}
            u_k\left(\bs{q}_k,\bs{p}_{-k}\right),
    \end{equation}
is defined as the best-response correspondence of player $k \in
\mathcal{K}$, given the actions $\bs{p}_{-k}$ adopted by all the
other players. }
\end{definition}

\subsection{The $2$-Transmitter $2$-Channel Case}\label{Sec2x2Game}

Consider the games $\gameone$ and $\gametwo$ with $K = 2$ and $S = 2$. Assume also that $\forall k \in \mathcal{K}$, $p_{k,\max} = p_{\max}$ and $\forall s \in \mathcal{S}$, $\sigma^2_s = \sigma^2$ and $B_s = \frac{B}{S}$. Denote by $\SNR = \frac{p_{\max}}{\sigma^2}$ the average signal to noise ratio (SNR) of each active communication.

\subsubsection{The power allocation game}

Let us denote by $\bs{p}^{\dagger} = \left(\bs{p}_1^{\dagger},\bs{p}_2^{\dagger}\right)$ the NE of the game $\gameone$. Then, following Def. \ref{DefNE}, one can write the following set of inclusions,
\begin{equation}\label{EqSystemNE2x2}
\forall k \in \mathcal{K}, \quad    \bs{p}_k^{\dagger} \in \BR_{k}\left(\bs{p}_{-k}^{\dagger}\right).
\end{equation}
Note that, for all $k \in \mathcal{K}$ and for all $\bs{p}_{-k} \in \mathcal{P}^{(a)}$, the set $\BR_{k}\left(\bs{p}_{-k}\right)$ is a singleton (Def. \ref{DefBR}) and thus, \eqref{EqSystemNE2x2} represents a system of equations.
By solving the resulting system of equations (\ref{EqSystemNE2x2}) for a given realization of the channels $\channelset$, one can determine the NE of the game $\gameone$. We present such a solution in the following proposition.
\begin{proposition}[Nash Equilibrium in $\gameone$]\label{PropNEGa} \emph{Let the action profile $\bs{p}^{\dagger} = \left(\bs{p}_1^{\dagger},\bs{p}_2^{\dagger}\right) \in \mathcal{P}^{(a)}$, with $\bs{p}^{\dagger}_{1}=\left(p_{11}^{\dagger},p_{\max}-p_{11}^{\dagger}\right)$ and  $\bs{p}^{\dagger}_{2}=\left(p_{\max}-p_{22}^{\dagger},p_{22}^{\dagger}\right)$ be an NE action profile of the game $\gameone$. Then, with probability one, $\bs{p}^{\dagger}$ is the unique NE and it can be written as follows:
\begin{itemize}
    \item Equilibrium $1$: if $\bs{g} \in \mathcal{B}_1 = \lbrace \bs{g} \in \mathds{R}_+^4: \frac{g_{11}}{g_{12}} \geqslant \frac{1 +\SNR g_{11}}{1 + \SNR g_{22}}, \, \frac{g_{21}}{g_{22}} \leqslant \frac{1 +\SNR g_{11}}{1 + \SNR g_{22}}\rbrace,
$ then, $p_{11}^{\dagger} =  p_{\max}$ and $p_{22}^{\dagger} =  p_{\max}$.
    \item Equilibrium $2$: if $\bs{g} \in \mathcal{B}_2 = \lbrace \bs{g} \in \mathds{R}_+^4: \frac{g_{11}}{g_{12}} \geqslant 1 + \SNR \left(g_{11} + g_{21}\right), \, \frac{g_{21}}{g_{22}} \geqslant 1 + \SNR \left(g_{11} + g_{21}\right)\rbrace$,
then, $p_{11}^{\dagger} = p_{\max}$ and $p_{22}^{\dagger} = 0$.%
    \item Equilibrium $3$: if $\bs{g} \in \mathcal{B}_3 = \lbrace \bs{g} \in \mathds{R}_+^4: \frac{g_{11}}{g_{12}} \leqslant \frac{1}{1 + \SNR \left(g_{12} + g_{22}\right)}, \, \frac{g_{21}}{g_{22}} \leqslant \frac{1}{1 + \SNR\left(g_{12} + g_{22}\right)}\rbrace$ then, $p_{11}^{\dagger} = 0$ and $p_{22}^{\dagger} = p_{\max}$.
    \item Equilibrium $4$: if $\bs{g} \in \mathcal{B}_4 = \lbrace \bs{g} \in \mathds{R}_+^4:  \frac{g_{11}}{g_{12}} \leqslant \frac{1 + \SNR g_{21}}{1 + \SNR g_{12}}, \, \frac{g_{21}}{g_{22}} \geqslant \frac{1 + \SNR g_{21}}{1 + \SNR g_{12}}\rbrace$,
then, $p_{11}^{\dagger} = 0$ and $p_{22}^{\dagger} =  0$.
    \item Equilibrium $5$: if $\bs{g} \in \mathcal{B}_{5} \lbrace \bs{g} \in \mathds{R}_+^4: \frac{g_{11}}{g_{12}}  \geqslant  \frac{g_{21}}{g_{22}}, \, \frac{1 + \SNR g_{11}}{1+\SNR g_{22}} < \frac{g_{21}}{g_{22}} <  1 + \SNR  \left(g_{11}+ g_{21}\right)\rbrace$, then, $p_{11}^{\dagger} = p_{\max}$ and  $p_{22}^{\dagger} = \frac{1}{2}\left(p_{\max} - \frac{\sigma^2}{g_{22}} + \frac{\sigma^2+g_{11} p_{\max}}{g_{21}}\right)$.
    \item Equilibrium $6$: if $\bs{g} \in \mathcal{B}_6 \lbrace \bs{g} \in \mathds{R}_+^4: \frac{g_{11}}{g_{12}}  \geqslant  \frac{g_{21}}{g_{22}}, \, \frac{1}{1 + \SNR \left(g_{12} + g_{22}\right)} < \frac{g_{11}}{g_{12}} < \frac{1 + \SNR  g_{11}}{1 + \SNR g_{22}}\rbrace$, then, $p_{11}^{\dagger} = \frac{1}{2}\left(p_{\max} - \frac{\sigma^2}{g_{11}} + \frac{\sigma^2 + p_{\max}g_{22}}{g_{12}}\right)$ and $ p_{22}^{\dagger} =  p_{\max}$.
    \item Equilibrium $7$: if $\bs{g} \in \mathcal{B}_{7} = \lbrace \bs{g} \in \mathds{R}_+^4: \frac{g_{11}}{g_{12}} \leqslant \frac{g_{21}}{g_{22}}, \, \frac{1 + \SNR  g_{21}}{1 + \SNR  g_{12}} < \frac{g_{11}}{g_{12}} < 1 + \SNR\left(g_{11} + g_{21} \right)\rbrace$, then, $p_{11}^{\dagger} = \frac{1}{2}\left(p_{\max} - \frac{\sigma^2 + p_{\max}g_{21}}{g_{11}} + \frac{\sigma^2}{g_{12}}\right)$ and $p_{22}^{\dagger}  = 0$.
\item Equilibrium $8$: if $\bs{g} \in \mathcal{B}_{8} \lbrace \bs{g} \in \mathds{R}_+^4: \frac{g_{11}}{g_{12}}  \leqslant \frac{g_{21}}{g_{22}}, \,  \frac{1}{1 + \SNR \left(g_{12} +g_{22} \right)} < \frac{g_{21}}{g_{22}} <  \frac{1 + \SNR g_{21}}{1 + \SNR g_{12}}\rbrace$, then, $p_{11}^{\dagger} = 0$ and $p_{22}^{\dagger} = \frac{1}{2}\left(p_{\max} - \frac{\sigma^2 + g_{12} p_{\max}}{g_{22}} + \frac{\sigma^2}{g_{21}}\right)$.
    \end{itemize}
}
    \end{proposition}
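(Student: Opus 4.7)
The plan is to combine the uniqueness guarantee of Theorem~\ref{theo:uniqueness-KS} with an explicit best-response analysis. Because the potential $\phi$ in \eqref{EqPotential} is strictly increasing in each $p_{k,s}$ whenever $g_{k,s}>0$, every NE saturates both power constraints almost surely, so any candidate equilibrium can be parametrized by two scalars $a = p_{11}^{\dagger}\in[0,p_{\max}]$ and $b = p_{22}^{\dagger}\in[0,p_{\max}]$, with $p_{12}^{\dagger}=p_{\max}-a$ and $p_{21}^{\dagger}=p_{\max}-b$. Since uniqueness is free, it suffices, for every realization $\bs{g}\in\mathcal{B}_i$, to exhibit a single pair $(a,b)$ solving the joint fixed-point system \eqref{EqSystemNE2x2}.

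The next step is to compute $\BR_1(b)=\arg\max_{a\in[0,p_{\max}]} u_1((a,p_{\max}-a),(p_{\max}-b,b))$, which is the maximizer of a sum of two logarithms and is therefore strictly concave in $a$. A standard water-filling KKT argument yields three regimes: $\BR_1(b)=p_{\max}$ when the marginal utility of channel~$1$ at the upper corner is still non-negative, $\BR_1(b)=0$ when the marginal utility of channel~$2$ dominates even at the lower corner, and otherwise a unique interior stationary point $a^{\star}(b)=\tfrac{1}{2}\bigl[p_{\max}+(\sigma^2+b g_{22})/g_{12}-(\sigma^2+(p_{\max}-b)g_{21})/g_{11}\bigr]$. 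After clearing denominators and dividing by $\sigma^2$, the two corner thresholds become inequalities comparing the intrinsic channel ratio $g_{11}/g_{12}$ with an interference-adjusted ratio involving $\SNR$. Player~$2$'s best response is obtained by a symmetric computation with the roles of the two channels swapped.

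I would then cross-tabulate the three regimes for each player, producing $3\times 3=9$ candidate NE types. Four arise when both players are at a boundary and account for Equilibria~$1$--$4$; four more arise when exactly one player is interior and the other at a boundary, yielding Equilibria~$5$--$8$ with the explicit formulas above obtained by substituting the pure player's boundary value into $a^{\star}(\cdot)$ or $b^{\star}(\cdot)$. For each of these eight cases, the fixed-point consistency conditions — optimality of the pure player's boundary against the opponent's interior response, and membership of the interior response in $(0,p_{\max})$ — reduce, after algebraic simplification, to exactly the region $\mathcal{B}_i$ stated. The remaining ninth case (both interior) yields a $2\times 2$ linear system in $(a,b)$ whose solution falls inside $(0,p_{\max})^2$ only on a Lebesgue-null subset of $\mathds{R}_+^4$, which is absorbed into the almost-sure qualifier of Theorem~\ref{theo:uniqueness-KS}.

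The main obstacle is the bookkeeping: rewriting the raw KKT inequalities in the clean ratio form used to define the $\mathcal{B}_i$, and verifying that these eight regions, together with measure-zero boundaries and the interior-interior set, cover $\mathds{R}_+^4$. The asymmetry between \emph{inner} bounds such as $(1+\SNR g_{11})/(1+\SNR g_{22})$ and \emph{outer} bounds such as $1+\SNR(g_{11}+g_{21})$ arises from whether the opposing player's water level lies above or below the contested channel's effective gain; tracking the resulting inequality signs across the eight cases is the main source of tedium but involves no substantive new idea beyond the one-dimensional best-response analysis above.
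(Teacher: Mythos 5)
Your plan follows essentially the same route as the paper's proof in Appendix~\ref{AppProf1}: saturate the power constraints, reduce every candidate NE to the two scalars $(p_{11},p_{22})$, run the water-filling best-response analysis for each player, and cross-tabulate the corner/interior regimes to obtain the eight regions $\mathcal{B}_1,\ldots,\mathcal{B}_8$, with the doubly-interior case discarded. Your interior stationary point $a^{\star}(b)$ coincides with the paper's expression, and your mixed corner/interior cases reproduce Equilibria $5$--$8$ exactly as the paper does.

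The one step you should not treat as bookkeeping is the dismissal of the doubly-interior case. You assert that the solution of the $2\times 2$ linear system lies in $(0,p_{\max})^2$ only on a Lebesgue-null subset of $\mathds{R}_+^4$, but this is not a genericity statement that comes for free: a priori the solution of a nondegenerate linear system in two unknowns is a continuous function of $\bs{g}$ and could perfectly well be feasible on an open set of positive measure, which would create a ninth, interior--interior equilibrium type and break the classification. The paper devotes its entire first step to ruling this out by solving the system explicitly: when $g_{12}g_{21}-g_{11}g_{22}\neq 0$ the solution takes the form of a strictly positive quantity divided by $g_{12}g_{21}-g_{11}g_{22}$, so it is either negative or exceeds $p_{\max}$ for \emph{every} such realization, and the feasible set is therefore contained in the genuinely null set $\lbrace g_{11}g_{22}=g_{12}g_{21}\rbrace$ (on which there is in fact a continuum of NE, cf.\ Prop.~\ref{LemmaNonUniqueness}). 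Once you carry out that computation, the rest of your plan goes through and matches the paper step for step; your use of Theorem~\ref{theo:uniqueness-KS} to import uniqueness, rather than re-deriving it from the almost-sure disjointness of the $\mathcal{B}_n$, is a harmless shortcut.
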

\begin{proof} See Appendix \ref{AppProf1}
\end{proof}

\noindent
In Fig. \ref{FigNashRegionsG1} we plot the different types
of NE of the game $\gameone$ as a function of the channel ratios
$\frac{g_{11}}{g_{12}}$ and $\frac{g_{21}}{g_{22}}$. Note that under the knowledge of all channels, the set of all possible pure NE can be obtained by simply placing the point $\left( \frac{g_{11}}{g_{12}}, \frac{g_{21}}{g_{22}}\right)$ in Fig. \ref{FigNashRegionsG1} .
The uniqueness of the NE is not ensured under certain conditions. In fact, infinitely many NE can be observed, however, the conditions for this to happen are zero probability events, as we shall see.

\begin{proposition}\label{LemmaNonUniqueness} \emph{Let $\alpha \defn \frac{g_{11}}{g_{21}} = \frac{g_{12}}{g_{22}}$ and assume that the set of channels $\channelset$ satisfies the following conditions
\begin{eqnarray}\nonumber
\frac{1}{1 + \frac{p_{\max}}{\sigma^2}(g_{12}+g_{22})} < & \alpha & <  1 + \frac{p_{\max}}{\sigma^2}\left( g_{11} + g_{21} \right),
\end{eqnarray}
Then, any PA vector $\bs{p} = \left(p_{11},p_{\max} - p_{11}, p_{\max} - p_{22}, p_{22}\right) \in \mathcal{P}^{(a)}$, such that 
$$p_{11} = \textstyle\frac{1}{2}\left(p_{\max} \left(1 - \alpha \right) + \sigma^2 \left(\frac{1}{g_{12}} - \frac{1}{g_{11}}\right) \right) + \alpha p_{22}$$ 
is an NE action profile of the game $\gameone$.
}
\end{proposition}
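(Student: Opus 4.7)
The plan is to exploit the exact-potential structure of $\gameone$ (Lemma~\ref{LemmaGisPG}) together with the degeneracy induced by the parallel-channel hypothesis $g_{11}/g_{21} = g_{12}/g_{22} = \alpha$. The key observation I would aim for is that under this hypothesis the potential $\phi$ collapses to a function of a single scalar, so $\phi$ is constant along a one-dimensional affine subspace of the joint action space, and this forces a continuum of profiles at which both players are simultaneously best-responding.

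Concretely, I would first substitute the per-player power constraints $p_{k,1}+p_{k,2}=p_{\max}$ into the expression \eqref{EqPotential} for $\phi$, so that everything is written in terms of the two scalars $p_{11}$ and $p_{22}$, and then factor $g_{21}$ and $g_{22}$ out of the two logarithms using $g_{11}=\alpha g_{21}$ and $g_{12}=\alpha g_{22}$. With the change of variable $z \defn \alpha p_{11}-p_{22}$, the potential becomes
\[
\phi \;=\; \tfrac{1}{2}\log_2\!\bigl(\sigma^{2} + g_{21}(p_{\max}+z)\bigr) + \tfrac{1}{2}\log_2\!\bigl(\sigma^{2} + g_{22}(\alpha p_{\max}-z)\bigr),
\]
which depends only on $z$ and is strictly concave in $z$. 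A one-line first-order condition then pins down the unique maximizer $z^{*}$ in closed form.

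Next I would use the exact-potential property: for any fixed $\bs{p}_{-k}$, the maximizer of $u_k(\cdot,\bs{p}_{-k})$ coincides with the maximizer of $\phi(\cdot,\bs{p}_{-k})$. A unilateral deviation by player~$1$ in $p_{11}$ shifts $z$ by $\alpha\,\Delta p_{11}$, and a deviation by player~$2$ in $p_{22}$ shifts $z$ by $-\Delta p_{22}$, so each player's interior best-reply condition is identically $z = z^{*}$. Therefore the two best-reply correspondences collapse onto the single line $\alpha p_{11}-p_{22}=z^{*}$, and every point of this line that lies inside the box $[0,p_{\max}]^{2}$ is a pure NE. Solving for $p_{11}$ and rewriting the constants through $g_{11}=\alpha g_{21}$, $g_{12}=\alpha g_{22}$ should reproduce the closed-form expression stated in the proposition.

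The main obstacle I anticipate is the last verification: showing that the stated $\alpha$-condition is exactly what guarantees that the line $\alpha p_{11}-p_{22}=z^{*}$ crosses the interior of $[0,p_{\max}]^{2}$ in a segment of positive length, so that infinitely many profiles on this line are genuine NE and neither player has an incentive to jump to a boundary. Since $z$ sweeps the interval $[-p_{\max},\alpha p_{\max}]$ as $(p_{11},p_{22})$ varies over the box, the requirement reduces to $z^{*}\in(-p_{\max},\alpha p_{\max})$. Unfolding this with the explicit formula for $z^{*}$ and the substitutions $g_{11}=\alpha g_{21}$, $g_{12}=\alpha g_{22}$ should yield the two inequalities appearing in the statement. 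This last step is routine algebra but is the only bookkeeping hurdle; the conceptual content of the proof lies entirely in the collapse of $\phi$ to a function of the single variable $z$.
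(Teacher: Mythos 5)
Your proposal is correct, and it reaches the proposition by a noticeably different packaging than the paper's own argument. The paper (first step of the proof of Prop.~\ref{PropNEGa} in Appendix~\ref{AppProf1}) writes the two water-filling best-response conditions, combines them with $p_{k,1}+p_{k,2}=p_{\max}$ into the $2\times 2$ linear system \eqref{EqSysEq2x2}, and observes that when $g_{12}g_{21}=g_{11}g_{22}$ the system drops rank so that the two equations coincide on the affine line \eqref{EqConditions2x2}; the interval condition on $\alpha$ is then obtained by imposing $0<p_{11}<p_{\max}$ at the extremes $p_{22}\in\{0,p_{\max}\}$ (conditions $(i)$--$(iv)$). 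You instead exploit the exact-potential property directly: under the degeneracy the potential collapses to a strictly concave function of the single scalar $z=\alpha p_{11}-p_{22}$, both best-reply correspondences reduce to $z=z^{*}$, and the NE set is that line intersected with the box, with your condition $z^{*}\in(-p_{\max},\alpha p_{\max})$ equivalent to the paper's four boundary checks. Since the paper's best-response equations are exactly the first-order conditions of $\phi$, the two arguments perform the same algebra, but your version makes the continuum of equilibria and its one-dimensionality transparent, and strict concavity in $z$ gives you at once that no interior profile off the line can be an equilibrium. Two small cautions for the write-up: (a) you should state explicitly that $\phi$ is strictly increasing in each $p_{k,s}$, so best responses saturate the power constraint and the reduction to the box $[0,p_{\max}]^{2}$ is without loss; (b) the proposition's displayed formula and inequalities are actually written with the reciprocal convention $\alpha=g_{21}/g_{11}=g_{22}/g_{12}$ used in the appendix (the statement's definition $\alpha=g_{11}/g_{21}$ is inconsistent with its own closed form), so carrying out your final substitution with the literal definition will produce $1/\alpha$ where the statement has $\alpha$; this mismatch is the paper's, not yours, but you should fix a convention and keep it throughout.
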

The proof of Prop. \ref{LemmaNonUniqueness} is the first part of the proof of Prop. \ref{PropNEGa}.
In the next subsection, we perform the same analysis presented above for the game $\gametwo$.
\subsubsection{The channel selection game}\label{SecNEinDiscrete}
When $K = 2$ and $S = 2$, the game $\gametwo$ has four possible
outcomes, i.e., $\left|\mathcal{P}^{(b)}\right| = 4$. We detail such
outcomes and its respective potential in Fig. \ref{TabPotential}.
Following Def. \ref{DefNE}, each of those outcomes can be an NE
depending on the channel realizations $\channelset$, as shown in the
following proposition.
\begin{figure}[h]
\begin{center}
$\begin{array}{|c|c|c|}\hline
 \scriptscriptstyle Tx_1 \backslash Tx_2 
 &   \scriptscriptstyle \bs{p}_2 = \left(p_{\max},0\right) &   \scriptscriptstyle \bs{p}_2 = \left(0, p_{\max}\right) \\ \hline
\begin{array}{c} \scriptscriptstyle  \bs{p}_1 = \\ \scriptscriptstyle   \left(p_{\max},0\right) \end{array} & \begin{array}{l}\scriptscriptstyle   \frac{1}{2} \log_2\left(\sigma^2 + p_{\max}(g_{11} + g_{21})\right) \\ \scriptscriptstyle + \frac{1}{2} \log_{2}\left(\sigma^2 \right) \end{array} &  \begin{array}{l}  \scriptscriptstyle \frac{1}{2} \log_2\left(\sigma^2 + p_{\max} g_{11}\right) \\ \scriptscriptstyle + \frac{1}{2} \log_{2}\left(\sigma^2 + p_{\max} g_{22}\right) \end{array} \\ \hline
\begin{array}{c} \scriptscriptstyle  \bs{p}_1 = \\ \scriptscriptstyle \left(0, p_{\max}\right) \end{array}&
 \begin{array}{l}  \scriptscriptstyle \frac{1}{2} \log_2\left(\sigma^2 + p_{\max} g_{12}\right) \\ \scriptscriptstyle  + \frac{1}{2} \log_{2}\left(\sigma^2 + p_{\max} g_{21}\right) \end{array}
 &  \begin{array}{l}  \scriptscriptstyle \frac{1}{2} \log_2\left(\sigma^2 + p_{\max} (g_{12} + g_{22})\right)  \\  \scriptscriptstyle + \frac{1}{2} \log_{2}\left(\sigma^2 \right)\end{array} \\ \hline
\end{array}$
\end{center}
\caption{Potential function $\phi$ of the game $\gameone$, with $K = 2$ and $S = 2$. Player $1$ chooses rows and player $2$ chooses columns.}
\label{TabPotential}
\end{figure}
\begin{proposition}[Nash Equilibria in $\gametwo$]\label{PropNEGb} \emph{Let the PA vector $\bs{p}^* = \left(\bs{p}_{1}^*,\bs{p}_{2}^*\right) \in \mathcal{P}^{(b)}$ be one NE in the game $\gametwo$. Then, depending on the channel gains $\channelset$, the NE $\bs{p}^*$ can be written as follows:
\noindent
\begin{itemize}
\item Equilibrium $1$: when $\bs{g} \in \mathcal{A}_{1}  = \lbrace \bs{g} \in \mathds{R}_+^4: \frac{g_{11}}{g_{12}} \geqslant \frac{1}{1 + \SNR g_{22}} \mbox{ and } \frac{g_{21}}{g_{22}} \leqslant 1 + \SNR  g_{11} \rbrace$, then, $\bs{p}_1^* = \left(p_{\max},0\right)$ and $\bs{p}_2^* = \left(0,p_{\max}\right)$.
\item Equilibrium $2$: When $\bs{g} \in \mathcal{A}_{2} = \lbrace \bs{g} \in \mathds{R}_+^4:  \frac{g_{11}}{g_{12}} \geqslant 1 + \SNR g_{21} \mbox{ and } \frac{g_{21}}{g_{22}} \geqslant 1 + \SNR g_{11} \; \rbrace$, then, $\bs{p}_1^* = \left(p_{\max},0\right)$ and $\bs{p}_2^* = \left(p_{\max},0\right)$.
\item Equilibrium $3$: when $\bs{g} = \left(g_{11},g_{12},g_{21},g_{22}\right) \in \mathcal{A}_{3} \lbrace \bs{g} \in \mathds{R}_+^4: \frac{g_{11}}{g_{12}} \leqslant \frac{1}{1 + \SNR g_{22}}  \mbox{ and } \frac{g_{21}}{g_{22}} \leqslant \frac{1}{1 + \SNR g_{12}} \rbrace$, then, $\bs{p}_1^* = \left(0,p_{\max}\right)$ and $\bs{p}_2^* = \left(0,p_{\max}\right)$.
\item Equilibrium $4$: when $\bs{g} \in \mathcal{A}_{4} = \lbrace \bs{g} \in \mathds{R}_+^4: \frac{g_{11}}{g_{12}} \leqslant 1 + \SNR g_{12} \mbox{ and }  \frac{g_{21}}{g_{22}} \geqslant \frac{1}{1 + \SNR g_{12}} \; \rbrace$, then,  $\bs{p}_1^* = \left(0,p_{\max}\right)$ and $\bs{p}_2^* = \left(p_{\max},0\right)$.
\end{itemize}
}
\end{proposition}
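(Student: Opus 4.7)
The plan is to exploit the finiteness of $\gametwo$: with $K=S=2$ each player has only two pure actions, so the strategy space contains only four candidate profiles, and Definition~\ref{DefNE} reduces for each of them to two inequalities (one per player) asserting that the only alternative pure action is not strictly better. I would therefore handle the four profiles in turn.

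First, using \eqref{EqUtilityFunction}--\eqref{EqSINR} with $B_s/B = 1/2$ and $\sigma_s^2 = \sigma^2$, I would tabulate the eight utilities $u_k$ that arise. The two off-diagonal profiles (players on distinct channels) produce interference-free rates $\tfrac{1}{2}\log_2(1+\SNR\, g_{k,s})$, while the two diagonal profiles (both players on the same channel $s$) produce SINR rates $\tfrac{1}{2}\log_2\!\bigl(1+\SNR\, g_{k,s}/(1+\SNR\, g_{j,s})\bigr)$.

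For each candidate profile $\bs{p}^*$ I would then write the two best-response inequalities $u_k(\bs{p}_k^*,\bs{p}_{-k}^*)\geq u_k(\bs{p}_k',\bs{p}_{-k}^*)$ with $\bs{p}_k'$ the unique alternative action. Monotonicity of $\log_2$ and cross-multiplication (all quantities are positive) reduce each inequality to a bound on the ratio $g_{k,1}/g_{k,2}$ in terms of $\SNR$ and the opponent's channel gains. Checking these against the definitions of $\mathcal{A}_1,\ldots,\mathcal{A}_4$ closes the proof. A more compact route is to invoke Lemma~\ref{LemmaGisPG}: since $\gametwo$ is an exact potential game, a profile is an NE iff its potential value is no smaller than that of any unilateral deviation. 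The four values of $\phi$ are already displayed in Fig.~\ref{TabPotential}, so the NE characterization reduces to pairwise comparisons among those four entries, yielding exactly the same eight inequalities in a more symmetric form.

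There is no genuine obstacle; the only point requiring care is keeping track of which gain appears in the interference term after a deviation, since switching one player's channel simultaneously changes the interference seen by the other. The argument is therefore a systematic enumeration supported by the potential-game structure, with the four sets $\mathcal{A}_i$ emerging as the pairs of best-response conditions associated to each of the four profiles.
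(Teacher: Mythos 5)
Your proposal is correct and matches the paper's own (one-line) proof, which likewise derives the four regions $\mathcal{A}_1,\ldots,\mathcal{A}_4$ by checking, for each of the four pure profiles, the two unilateral-deviation inequalities via Def.~\ref{DefNE} and the table of potential values in Fig.~\ref{TabPotential}. Carrying out your computation for Equilibrium~4 would in fact yield $\frac{g_{11}}{g_{12}} \leqslant 1 + \SNR\, g_{21}$ for player $1$'s condition, exposing an apparent typo ($g_{12}$ in place of $g_{21}$) in the statement of $\mathcal{A}_4$.
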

\begin{proof} The proof follows immediately from Def. \ref{DefNE} and Tab. \ref{TabPotential}.
\end{proof}
\noindent
In Fig. \ref{FigNashRegionsG2}, we plot the different types of NE action profiles as a function of the channel ratios $\frac{g_{11}}{g_{12}}$ and $\frac{g_{21}}{g_{22}}$. Note that under the knowledge of all channels, the set of all possible pure NE can be obtained by simply placing the point $\left( \frac{g_{11}}{g_{12}}, \frac{g_{21}}{g_{22}}\right)$ in \ref{FigNashRegionsG2}.
Note how the action profiles $\bs{p}^{*} = \left(p_{\max},0,0,p_{\max}\right)$ and $\bs{p}^{+} = \left(0,p_{\max},p_{\max},0\right)$ are both NE, when the channel realizations satisfy that $\bs{g} \in \mathcal{A}_5 = \mathcal{A}_{1} \cap \mathcal{A}_{4}$, i.e.,
$\mathcal{A}_5 =  \lbrace \bs{g} \in \mathds{R}_+^4: \frac{1}{1+ \SNR g_{22}} \leqslant \frac{g_{11}}{g_{12}} \leqslant 1+ \SNR g_{21} \text{ and } \frac{1}{1+ \SNR g_{12}} \leqslant \frac{g_{21}}{g_{22}} \leqslant  1+ \SNR g_{11} \rbrace$.
\noindent This confirms the fact that several NE might exists in the
game $\gametwo$ depending on the exact channel realization, as
stated in Prop. \ref{PropMultiplicityGb}. Moreover, one can also
observe that there might exist an NE action profile which is not a global
maximizer of the potential function (\ref{EqPotential}) \cite{Ui-08}
(e.g., $\phi\left(\bs{p}^*\right) <
\phi\left(\bs{p}_{2}^{+}\right)$).

\noindent
In the sequel, the performance
achieved by the transmitters at the equilibrium in both games are
compared.
\subsection{A Braess Type Paradox}
As suggested in \cite{Braess-69}, a Braess-type paradox refers to a counter-intuitive observation consisting in a reduction of the individual utility at the equilibria, when the players are granted with a larger set of actions. That is, by letting the players to choose from a larger set of options, their individual benefit reduces. Recently, the Braess-type paradox has been also associated with the reduction of the sum-utility instead of the individual utilities, as in \cite{Altman-wiopt-2008, Altman-03,Rose-Perlaza-2011} in the wireless communications arena.

\noindent
In our particular case, the set of actions for player $k$ in the game $\gametwo$, is a
subset of its set of actions in the game $\gameone$, i.e., $\forall
k \in \mathcal{K}$, $\mathcal{P}_{k}^{(b)} \subseteq
\mathcal{P}^{(a)}_{k}$. Interestingly, as observed in \cite{Braess-69}, reducing the set of actions of each player leads, in this particular game, to a better global performance. This effect has been reported in  the parallel interference channel in \cite{Altman-wiopt-2008, Altman-wiopt-08b} under the consideration of particular channel conditions and later, more generally in \cite{Rose-Perlaza-2011}. However, a formal proof of the existence of this paradox is not provided in the aforementioned references.
This observation, has been also reported in the parallel MAC for the case of successive interference cancellation (SIC) in
\cite{Perlaza-Crowncom-09}. Nonetheless, the channel in \cite{Perlaza-Crowncom-09} was not fully decentralized, as it required a central controller to dictate the channel policy to all the transmitters.
In the following, we study this observation in more detail.

\noindent
Let us denote by $\bs{p}_k^{(\dagger,n)}$, the unique NE action profile of game $\gameone$, when the vector $\bs{g} = \channelvector \in \mathcal{B}_{n}$, for all $n \in \lbrace 1 \ldots, 8 \rbrace$. Let us also denote by $\bs{p}^{(*,n)}$ one of the NE action profiles of game $\gametwo$ when $\channelvector \in \mathcal{A}_n$, for all $n \in \lbrace 1, \ldots 4\rbrace$.
The sets $\mathcal{A}_n$ and $\mathcal{B}_n$ are defined in Prop. \ref{PropNEGa} and \ref{PropNEGb}.
Then, for a finite SNR level, one can observe that $\forall n \in \lbrace 1, \dots, 4\rbrace$, $\mathcal{A}_{n} \cap \mathcal{B}_{n} = \mathcal{B}_{n}$ and $\forall \bs{g} = \channelvector \in \mathcal{B}_n$, the following equality always holds $\bs{p}_k^{(\dagger,n)} = \bs{p}_k^{(*,n)}$, which implies the same network performance. However, when the NE of both games are different, one can not easily compare the utilities achieved by each player since they depend on the exact channel realizations. Fortunately, the analysis largely simplifies by considering either a low SNR regime or a high SNR regime and more general conclusions can be stated. The performance comparison between games $\gameone$ and $\gametwo$ for the low SNR regime is presented in the following proposition.

\begin{proposition}\label{PropPerformanceLowSNR}
\emph{In the low SNR regime, both games $\gameone$ and $\gametwo$, with $K = 2$ and $S=2$, possess a unique NE, denoted by $\bs{p}^*$. Here, for all $k \in \mathcal{K}$ and $s \in \mathcal{S}$,
\begin{eqnarray}
    p_{k,s}^*  & = & p_{\max} \mathds{1}_{\left\lbrace s = \ds\arg\max_{\ell \in \mathcal{S}} g_{k,\ell}\right\rbrace}\label{EqPAinLowSNR1}\\
    p_{k,-s}^* & = & p_{\max} - p_{k,s}\label{EqPAinLowSNR2}.
\end{eqnarray}
}
\end{proposition}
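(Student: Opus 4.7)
The plan is to apply the limit $\SNR\to 0$ to the explicit NE descriptions in Propositions \ref{PropNEGa} and \ref{PropNEGb} and to show that in this limit both games collapse to the same unique NE, namely the one in which each transmitter places its entire budget on the channel with the largest gain.

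For the PA game $\gameone$, I would analyze each of the eight regions $\mathcal{B}_n$ separately. In $\mathcal{B}_1,\ldots,\mathcal{B}_4$, every boundary ratio of the form $(1+\SNR\,a)/(1+\SNR\,b)$ or $1+\SNR\,a$ collapses to $1$ as $\SNR\to 0$, so these four regions reduce to the four ``quadrants'' determined by the signs of $g_{11}-g_{12}$ and $g_{21}-g_{22}$. Since the $h_{k,s}$ are continuous, these quadrants cover $\mathds{R}_+^4$ up to a zero-probability boundary, and a direct inspection of the associated PA vectors in Prop. \ref{PropNEGa} shows that in each quadrant the NE is already of the single-best-channel type \eqref{EqPAinLowSNR1}--\eqref{EqPAinLowSNR2}. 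For the interior equilibria $\mathcal{B}_5,\ldots,\mathcal{B}_8$, the defining two-sided inequalities sandwich a channel ratio strictly between quantities both tending to $1$; hence these regions shrink to a zero-measure subset of $\mathds{R}_+^4$ and can be discarded almost surely in the low-SNR regime.

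An analogous limit computation on Prop. \ref{PropNEGb} handles $\gametwo$: the four regions $\mathcal{A}_1,\ldots,\mathcal{A}_4$ collapse to the same four quadrants with the same NE in each, while the overlap $\mathcal{A}_5=\mathcal{A}_1\cap\mathcal{A}_4$ in which multiple NEs may coexist (see Sec. \ref{SecNEinDiscrete}) reduces to $\{g_{11}=g_{12}\}\cup\{g_{21}=g_{22}\}$, which again has zero probability. Uniqueness therefore holds almost surely in $\gametwo$ as well and the two games share the same NE. Since $g_{k,1}\neq g_{k,2}$ with probability one, the index $s_k^{\star}\defn\arg\max_{\ell\in\mathcal{S}} g_{k,\ell}$ is well defined, and in every quadrant the unique NE is $p_{k,s_k^{\star}}^{*}=p_{\max}$ and $p_{k,-s_k^{\star}}^{*}=0$, which is exactly the expression in the statement.

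The main subtle point is the formal interpretation of the ``low-SNR regime'': the arguments above are phrased as limits ($\SNR\to 0$), but what we actually need is that for any fixed realization with $g_{k,1}\neq g_{k,2}$ the boundary conditions, being continuous and monotone in $\SNR$, place the pair $(g_{11}/g_{12},\,g_{21}/g_{22})$ strictly inside one of the four asymptotic quadrants as soon as $\SNR$ is below a realization-dependent threshold. Making this quantitative, and thereby showing that the interior-NE regions $\mathcal{B}_5,\ldots,\mathcal{B}_8$ and the multiplicity region $\mathcal{A}_5$ are empty under the same threshold, is the main (but essentially routine) technical step.
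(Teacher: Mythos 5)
Your proposal is correct and takes essentially the same route as the paper's own proof in Appendix C: let $\SNR\to 0$ in the region descriptions of Propositions \ref{PropNEGa} and \ref{PropNEGb}, note that $\mathcal{B}_1,\ldots,\mathcal{B}_4$ and $\mathcal{A}_1,\ldots,\mathcal{A}_4$ collapse to the same four quadrants determined by the signs of $g_{k,1}-g_{k,2}$ (each carrying the best-channel NE), and discard the interior-equilibrium and multiplicity regions as asymptotically empty or of probability zero. The only nit is that the limit of $\mathcal{A}_5$ is the intersection $\lbrace g_{11}=g_{12}\rbrace\cap\lbrace g_{21}=g_{22}\rbrace$ rather than the union you wrote, but both sets have probability zero, so the conclusion is unaffected.
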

\begin{proof}See App. \ref{ProofPerformanceLowSNR} \end{proof}
From, Prop \ref{PropPerformanceLowSNR}, it can be stated that at the low SINR regime, players achieve the NE by simply choosing the channel with the highest channel gain independently of the other player's action. 
The performance comparison between games $\gameone$ and $\gametwo$ for the high SNR regime is presented in the following proposition.
\begin{proposition}\label{PropPerformanceHighSNR} \emph{In the high SNR regime, with $K = 2$ and $S=2$, the game $\gameone$ has a unique pure NE denoted by $\bs{p}^{\dagger}$ and the game $\gametwo$ has two pure NE denoted by $\bs{p}^{(*,1)}$ and $\bs{p}^{(*,4)}$, respectively. Then, at least for one $n \in \lbrace 1, 4\rbrace$, there exists a SNR value $0 < \SNR_0 < \infty$,  such that $\forall \SNR \geqslant \SNR_0$,
\begin{equation}\label{EqEqualityHighSNR}
    \ds\sum_{k = 1}^2   u_{k}\left(\bs{p}^{(*,n)}\right) - \ds\sum_{k = 1}^2   u_{k}(\bs{p}^{\dagger}) \geqslant \delta,
\end{equation}
and $\delta \geqslant 0$.
}
\end{proposition}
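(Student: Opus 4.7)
The plan is to exploit a degree-of-freedom mismatch: at high $\SNR$ the two pure NE of $\gametwo$ are orthogonal allocations whose sum spectral efficiency scales as $\log_2\SNR + O(1)$, whereas whenever the PA NE of $\gameone$ is not itself orthogonal it forces both players onto one of the two channels and only achieves $\tfrac{1}{2}\log_2\SNR + O(1)$. For each channel realisation one picks $n\in\{1,4\}$ accordingly (either will work at high $\SNR$; one may take the maximising choice), and the resulting inequality will hold for all $\SNR$ beyond some finite threshold.

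First I would take the $\SNR\to\infty$ limit of the NE partitions in Prop.~\ref{PropNEGa} and Prop.~\ref{PropNEGb}. The defining inequalities of $\mathcal{A}_1$ and $\mathcal{A}_4$ become trivial in this limit, so $\bs{p}^{(*,1)}=(p_{\max},0,0,p_{\max})$ and $\bs{p}^{(*,4)}=(0,p_{\max},p_{\max},0)$ are almost surely pure NE of $\gametwo$ for large enough $\SNR$. On the $\gameone$ side, the regions $\mathcal{B}_2$ and $\mathcal{B}_3$ require channel ratios of order $\SNR$ and shrink to measure zero; $\mathcal{B}_1$ and $\mathcal{B}_4$ converge to open subsets in which the unique PA NE coincides with $\bs{p}^{(*,1)}$ and $\bs{p}^{(*,4)}$ respectively; the remaining regions $\mathcal{B}_5,\ldots,\mathcal{B}_8$ cover the rest of the parameter space and all correspond to interior NE in which one player splits power across both channels while the other concentrates on a single channel.

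I would then proceed by cases. For $\bs{g}\in\mathcal{B}_1$ one has $\bs{p}^{\dagger}=\bs{p}^{(*,1)}$, so choosing $n=1$ makes \eqref{EqEqualityHighSNR} an equality with $\delta=0$; the case $\bs{g}\in\mathcal{B}_4$ is symmetric. For $\bs{g}$ in the interior of $\mathcal{B}_5$, the closed form of $p_{22}^{\dagger}$ in Prop.~\ref{PropNEGa} gives $p_{22}^{\dagger}\to \tfrac{p_{\max}}{2}(1+g_{11}/g_{21})\in(0,p_{\max})$ as $\sigma^2\to 0$, so both players maintain strictly positive power on channel~$1$. Substituting into \eqref{EqSINR}, the two SINRs on the shared channel are bounded above by channel-dependent constants (the interference term dominates the vanishing noise), contributing $O(1)$ to the sum, while the lone transmitter on channel~$2$ has SINR that grows linearly with $\SNR$, contributing $\tfrac{1}{2}\log_2\SNR + O(1)$. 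Hence
\begin{equation}
\sum_{k=1}^{2}u_k(\bs{p}^{\dagger})=\tfrac{1}{2}\log_2\SNR + O(1),
\end{equation}
while a direct evaluation at the orthogonal point gives
\begin{equation}
\sum_{k=1}^{2}u_k(\bs{p}^{(*,1)})=\tfrac{1}{2}\log_2(1+\SNR g_{11})+\tfrac{1}{2}\log_2(1+\SNR g_{22})=\log_2\SNR+O(1).
\end{equation}
The regions $\mathcal{B}_6,\mathcal{B}_7,\mathcal{B}_8$ follow identically after permuting players and channels, possibly requiring the choice $n=4$ in order to name the non-shared channel correctly. The difference in \eqref{EqEqualityHighSNR} is then $\tfrac{1}{2}\log_2\SNR + O(1)$, which exceeds any prescribed $\delta\geq 0$ for $\SNR$ larger than some finite $\SNR_0$.

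The main obstacle will be verifying, case by case inside $\mathcal{B}_5,\ldots,\mathcal{B}_8$, that the splitting player's two power components stay bounded away from $0$ and $p_{\max}$ in the high-$\SNR$ limit, since otherwise the ``shared'' channel would degenerate into a single-user one and the $\tfrac{1}{2}\log_2\SNR$ asymptotic could collapse to the orthogonal rate. This positivity is guaranteed by the strict form of the inequalities defining the interior of each $\mathcal{B}_n$ after taking $\SNR\to\infty$, and the boundaries between regions form a measure-zero set that may be excluded. All remaining steps reduce to routine first-order high-$\SNR$ expansions of $\log_2(1+x)$.
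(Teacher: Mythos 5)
Your proposal is correct and follows the same skeleton as the paper's proof in Appendix D: take the $\SNR\to\infty$ limits of the regions in Prop.~\ref{PropNEGa} and Prop.~\ref{PropNEGb} (the paper packages these as Lemmas \ref{LemmaNEG1HighSNR} and \ref{LemmaNEG2HighSNR}, showing $\mathcal{A}_1,\mathcal{A}_4\to\mathds{R}_+^4$ so that both orthogonal profiles are NE of $\gametwo$, and that $\mathcal{B}_2,\mathcal{B}_3$ vanish), dispose of $\mathcal{B}'_1,\mathcal{B}'_4$ by coincidence of the equilibria with $\delta=0$, and then show the gap diverges on $\mathcal{B}'_5,\ldots,\mathcal{B}'_8$. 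The one place you genuinely diverge is the last step: the paper writes out the differences $\Delta_1(\SNR)$ and $\Delta_4(\SNR)$ in closed form using $\psi(x)=1+\SNR x$ and verifies term by term that they tend to $+\infty$, whereas you replace that computation with a degrees-of-freedom scaling argument --- the interior water-filling NE keeps both players' powers on the shared channel bounded away from zero (e.g.\ $p_{22}^{\dagger}\to\frac{p_{\max}}{2}(1+g_{11}/g_{21})\in(0,p_{\max})$ on $\mathcal{B}'_5$ precisely because $g_{21}>g_{11}$ there), so the sum rate saturates at $\frac{1}{2}\log_2\SNR+O(1)$ against $\log_2\SNR+O(1)$ for either orthogonal profile. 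Your version is shorter and makes the mechanism (interference-limited shared channel versus two interference-free channels) more transparent; the paper's explicit $\Delta_n(\SNR)$ expressions are messier but give the finite-$\SNR$ form of the gap rather than only its asymptotic order. The caveat you flag yourself --- that the split powers must stay in the open interval $(0,p_{\max})$ in the limit --- is exactly the right thing to check, and it does follow from the strict inequalities defining the interiors of $\mathcal{B}'_5,\ldots,\mathcal{B}'_8$, so there is no gap.
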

The proof of Prop. \ref{PropPerformanceHighSNR} is given in App. \ref{AppendixB}. From the proof of Prop. \ref{PropPerformanceHighSNR}, it can be stated that in none of the games, players transmit simultaneously on the same channels. Now, from Prop. \ref{PropPerformanceLowSNR} and Prop. \ref{PropPerformanceHighSNR}, it can be concluded that at low SNR both games $\gameone$ and $\gametwo$ induce the same network spectral efficiency. On the contrary, the game $\gametwo$ always induce a higher or equal network spectral efficiency than the game $\gameone$ in the high SNR regime. This counter-intuitive result implies a Braess type paradox, as suggested in the beginning of this subsection.  

\subsection{The Case of Large Systems}\label{SecNonAtomicGame}

In this section, we deal with the games $\gametwo$ for the case of large networks, i.e., networks with a large number of transmitters.  
Within this scenario, the dominant parameter to analyze these games is the fraction of transmitters using a particular channel. As we shall see, contrary to the case of small number of transmitters and channels analyzed in the previous section, in the
case of large networks, each player becomes indifferent to the
action adopted by each of the other players. Here, each player is
rather concerned with the fractions of players simultaneously
playing the same action, i.e., using the same channel. Hence, one of the interesting issues to be solved is the determination of the repartition of the users between the different
channels at the NE. 
%Like in routing problems, transmitters are attracted by
%the best channels (the highest channel gains and largest bands) but the corresponding
%channels can turn out to be ``too crowded''. 

\noindent
As a first step towards
identifying the fractions of transmitters per channel at the NE, we
first re-write the potential (\ref{EqPotential}) as a function
of the vector $\bs{x}(\bs{p}) = \left(x_1(\bs{p}), \ldots, x_S(\bs{p})\right)$, where $x_s(\bs{p})$,
with $s \in \mathcal{S}$, denotes the fraction of players
transmitting over channel $s$ given the action profile $\bs{p}\in \mathcal{P}^{(b)}$. Hence,
\begin{equation}\label{EqFractions}
\begin{array}{cccc}
    \forall s \in \mathcal{S},&                            x_s(\bs{p}) &=& \frac{|\mathcal{K}_s(\bs{p})|}{K} \\
                              &\displaystyle\sum_{i = 1}^S x_i(\bs{p}) &=&  1,
\end{array}
\end{equation}
where $\mathcal{K}_s(\bs{p}) \subseteq \mathcal{K}$ is the set of players
 using channel $s$ given the action profile $\bs{p}\in\mathcal{P}^{(b)}$, i.e., $\mathcal{K}_s(\bs{p})= \lbrace k \in \mathcal{K} : p_{k,s} \neq 0 \rbrace$. Let  $b_s = \frac{B_s}{B}$ denote the
 fraction of bandwidth associated with channel $s$, such that $\sum_{s = 1}^{S} b_s = 1$.
% and let $\bar{B} = \frac{B}{K}$ denote the mean bandwidth per player. 
 %the mean bandwidth per channel. 
Then, one can write the potential as follows
\begin{eqnarray}
\nonumber
\phi(\bs{p}) &=& \displaystyle\sum_{s=1}^{S} b_s
 \log_2\left( N_0 B_s  +    p_{\max} \displaystyle\sum_{k \in \mathcal{K}_s(\bs{p})} g_{k,s}\right)\\
\nonumber   &=&  S \log_2(K) + \displaystyle\sum_{s=1}^{S} b_s \log_2\left(\frac{No\,B_s}{K}   \right.\\
\label{EqPotentialFiniteCase} 
& + &\left.
x_s(\bs{p})\,p_{\max} \left(\frac{1}{\left| \mathcal{K}_s(\bs{p})\right|} \displaystyle\sum_{k \in \mathcal{K}_s(\bs{p})} g_{k,s}\right) \right).
\end{eqnarray}
Note that the term $S \log_2(K)$ in \eqref{EqPotentialFiniteCase} does not depend on the actions of the players. Thus, in the following, we drop it for the sake of simplicity.
We assume that the number of players $K$ and the available bandwidth $B$ grows to infinite at a constant rate $\mu > 0$, while the fractions $b_s$, for all $s \in \mathcal{S} $ are kept invariant. That is, the average bandwidth per transmitters is asymptotically constant,
\begin{equation}
\ds\lim_{K, \, B  \rightarrow \infty } \frac{B}{K} = \mu.
\end{equation}
Thus, under the assumption of large number of transmitters and for any action profile $\bs{p} \in \mathcal{P}^{(b)}$, it follows that,
\begin{equation}
\nonumber \forall s \in \mathcal{S}, \; \frac{1}{\left| \mathcal{K}_s(\bs{p})\right|}
 \displaystyle\sum_{k \in \mathcal{K}_s(\bs{p})}   g_{k,s}
 \stackrel{K \rightarrow \infty}{\longrightarrow} \displaystyle\int_{0}^{\infty} \lambda \d F_{g_s}(\lambda) = \Omega_s,
\end{equation}
where $F_{g_s}$ is the cumulative probability function associated with the channel gains over dimension $s$.
Hence, for all action profile $\bs{p} \in \mathcal{P}^{(b)}$ adopted by the players, maximizing the function $\phi\left(\bs{p}\right)$ in the asymptotic regime reduces to maximize the function $\tilde{\phi}\left(\bs{x}(\bs{p})\right) $,
\begin{equation}\nonumber
 \tilde{\phi}\left(\bs{x}(\bs{p})\right) = \displaystyle\sum_{s=1}^S   b_s  \log_2\left(\mu No\,b_s  +  x_s(\bs{p}) \,p_{\max}\,\Omega_s \right).
\end{equation}
That is, solving the OP,
\begin{equation}\nonumber%\label{EqOPNonAtomic}
\left\lbrace
\begin{array}{cc}
    \displaystyle\max_{\bs{x} = \left( x_1, \ldots, x_S \right) \in \mathds{R}_+^{S}} & \displaystyle\sum_{s =1}^S   b_s \log_2\left(\mu N_0\, b_s   + x_s p_{\max} \Omega_s \right),\\
    \text{s.t.} & \displaystyle\sum_{i = 1}^{S}  x_i = 1 \; \text{ and  }  \;  \forall i \in \mathcal{S}, \; x_i \geqslant 0,
\end{array}
\right..
\end{equation}
The optimization problem above %(\ref{EqOPNonAtomic}) 
has a unique solution of the form,
\begin{equation}
    \forall s \in \mathcal{S}, \quad x_s = b_s \left[\frac{1}{\beta_{k}} - \frac{\mu N_0 }{ p_{\mathrm{max}} \Omega_s}\right]^+,
\end{equation}
where $\beta_k$ is Lagrangian multiplier to satisfy the optimization constraints. Interestingly,
in the case when all the channels are described with the same statistics, that is, $\forall s \in \mathcal{S}$, $F_{g_s}(\lambda) = F_{g}(\lambda)$, ($\forall s \in \mathcal{S}$, $\Omega_s = \Omega$) it holds that,
\begin{equation}\label{EqFractionSolution}
    \forall s \in \mathcal{S}, \quad x_s = \frac{B_s}{B}.
\end{equation}

\noindent
The above provides a very simple relation between the repartition of the users among the available channels in the asymptotic regime. Indeed, it can be implied that the number of transmitters using a given channel $s$ is proportional to the bandwidth allocated to the corresponding channel.
In particular, note that this result generalize the work in \cite{Belmega-isccsp-08}.

\noindent
To conclude on the usefulness of the large system analysis, let us consider the upper bound on the number of NE which is given by Proposition \ref{PropMultiplicityGb}. Let us normalize the upper bound on the number of pure NE $L$ in \eqref{EqL} by the total number of pure (channel selection) strategy profiles, and let us write, 
\begin{equation} 
\begin{array}{ccl}
\frac{ L}{S^K}
& < &  \frac{1}{S^K} \left(1 + \left(S - 1\right) 2^{K} \right).
\end{array}
\end{equation}
Now, for a sufficiently large number $K$, the following approximation holds,
\begin{equation} \label{nb-of-NE}
\begin{array}{ccl}
\frac{1}{S^K} \left(1 + \left(S - 1\right) 2^{K} \right) 
& \approx &  \left(S - 1\right) \left(\frac{2}{S}\right)^{K}.
\end{array}
\end{equation}
Although the number of pure NE in channel selection games my be conjectured to be large, it is in fact relatively small in the asymptotic regime. Indeed, (\ref{nb-of-NE}) indicates that when the number of users is large, the fraction of pure NE goes to zero whenever the number of channels is greater or equal to $3$. This result shows the difficulty of using methodologies such as the one proposed in Sec. \ref{SecDeterminationOfNEGameb} to study the set of NE or the difficulty of  
achieving equilibria using decentralized learning algorithms as proposed in \cite{Rose-Perlaza-CommMag-2011, Lasaulce-Tembine-Book-2011}.

%%%%%%%%%%%%%%%%%%%%%%%%%%%%%%%%%%%%%%%%%%
\subsection{The Case of Successive Interference Cancellation}\label{SecSIC}

In order to analyze the case of successive interference cancellation (SIC), let us denote by $R_{\mathrm{SIC}}\left( \bs{p} \right)$ the NSE achieved with SIC assuming perfect decoding given the power allocation profile $\bs{p}$. That is,
\begin{eqnarray}\nonumber
R_{\mathrm{SIC}}\left( \bs{p} \right) & = & \displaystyle\sum_{s \in \mathcal{S}}\frac{B_s}{B} \log_2\left( 1 + \frac{\displaystyle\sum_{k = 1}^K p_{k,s} g_{k,s}}{\sigma^2_s}\right).
\end{eqnarray}
Now, the NSE with SIC $R_{\mathrm{SIC}}\left( \bs{p} \right)$ can be written in terms of the potential function \eqref{EqPotential} as follows,
\begin{eqnarray}\nonumber
R_{\mathrm{SIC}}\left( \bs{p} \right) & = & \displaystyle\sum_{s \in \mathcal{S}}\frac{B_s}{B}\log_2\left( \sigma^2_s  + \displaystyle\sum_{k \in \mathcal{K}} p_{k,s} g_{k,s}\right) \\
\nonumber
& & -  \displaystyle\sum_{s \in \mathcal{S}}\frac{B_s}{B}  \log_2\left(\sigma^2_s\right),\\
\label{EqSICsumRate}							
&= &	\phi(\bs{p}) - \displaystyle\sum_{s \in \mathcal{S}}\frac{B_s}{B}  \log_2\left(\sigma^2_s\right).
\end{eqnarray}
From \eqref{EqSICsumRate}, it can be immediately implied that maximizing the NSE of the parallel MAC under the assumption of perfect decoding is equivalent to maximize the potential function $\phi(\bs{p})$ in \eqref{EqPotential}. This observation and the notion of potential game (Def. \ref{DefPotentialGame}) lead us immediately to the following propositions (assume that player $k$ is decoded in the $k$-th place) :
\begin{proposition}\label{EqSICGameOne}
Let $\bs{p}^{\dagger} \in \mathcal{P}^{(a)}$ be the unique NE of the game $\gameone$. The maximum NSE of the network is achieved if for all $k \in \mathcal{K}$,  transmitter $k$ uses the power allocation vector $\bs{p}_k^{\dagger}$ and transmits with rate
\begin{equation}\nonumber
R_k^{\dagger} = \displaystyle\sum_{s \in \mathcal{S}}\frac{B_s}{B} \displaystyle\sum_{k = 1}^K \log_2\left( 1 + \frac{p_{k,s}^{\dagger} g_{k,s}}{\sigma^2_s + \displaystyle\sum_{j \in \mathcal{K}\setminus\lbrace 1, \ldots, k\rbrace} p_{j,s}^{\dagger} g_{j,s}}\right) .
\end{equation}
\end{proposition}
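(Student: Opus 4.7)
The plan is to exploit identity \eqref{EqSICsumRate} that was just derived, namely $R_{\mathrm{SIC}}(\bs{p}) = \phi(\bs{p}) - \sum_{s \in \mathcal{S}} \frac{B_s}{B}\log_2(\sigma_s^2)$. Since the subtracted term is a constant independent of $\bs{p}$, maximizing $R_{\mathrm{SIC}}$ over $\mathcal{P}^{(a)}$ is equivalent to maximizing the potential $\phi$ over the same set. I would then invoke Lemma \ref{LemmaGisPG} together with \eqref{EqOPGameA}, which states explicitly that the unique pure NE $\bs{p}^{\dagger}$ of $\gameone$ is precisely the unique maximizer of $\phi$ on $\mathcal{P}^{(a)}$. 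Combining these two observations immediately yields that $\bs{p}^{\dagger}$ achieves the maximum NSE under SIC decoding, settling the first half of the claim.

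For the individual-rate formula, I would apply the standard chain rule for the Gaussian MAC on each sub-channel: when transmitter $k$ is decoded in the $k$-th place, the contributions of transmitters $1,\ldots,k-1$ have been perfectly cancelled while transmitters $k+1,\ldots,K$ remain as Gaussian interference, yielding the SINR $p_{k,s}^{\dagger} g_{k,s}/(\sigma^2_s + \sum_{j>k} p_{j,s}^{\dagger} g_{j,s})$. Weighting $\log_2(1+\mathrm{SINR})$ by $B_s/B$ and summing over $s$ reproduces $R_k^{\dagger}$ exactly as stated. Consistency with the sum-rate expression, namely $\sum_{k=1}^K R_k^{\dagger} = R_{\mathrm{SIC}}(\bs{p}^{\dagger})$, would then be verified through the elementary telescoping identity
\begin{equation}\nonumber
\sum_{k=1}^K \log_2\!\left(\frac{\sigma^2_s + \sum_{j \geq k} p_{j,s}^{\dagger} g_{j,s}}{\sigma^2_s + \sum_{j > k} p_{j,s}^{\dagger} g_{j,s}}\right) = \log_2\!\left(1 + \frac{\sum_{k=1}^K p_{k,s}^{\dagger} g_{k,s}}{\sigma^2_s}\right),
\end{equation}
in which the denominator at index $k$ cancels with the numerator at index $k+1$, leaving only the full received signal plus noise at $k=1$ over the bare noise at $k=K$.

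I do not anticipate a real obstacle: once \eqref{EqSICsumRate} and the potential characterization of the NE are in place, the proposition reduces to two short observations plus a telescoping sum. The only conceptual point worth flagging is that the optimality statement concerns the \emph{sum} rate only; the tuple $(R_1^{\dagger},\ldots,R_K^{\dagger})$ is one specific vertex of the Gaussian MAC polymatroid evaluated at $\bs{p}^{\dagger}$, and any permutation of the decoding order would produce a different achievable rate vector with the same sum. The substantive content of the result is therefore conceptual rather than technical: the selfish equilibrium of $\gameone$ turns out to coincide with the globally sum-rate-optimal operating point once SIC is available at the receiver.
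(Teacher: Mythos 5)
Your argument is correct and follows the same route the paper takes: the identity \eqref{EqSICsumRate} reduces NSE maximization under SIC to maximization of the potential $\phi$, which by \eqref{EqOPGameA} is achieved precisely at the unique NE $\bs{p}^{\dagger}$ of $\gameone$, with the individual rates read off from the SIC decoding order. Your explicit telescoping verification that $\sum_k R_k^{\dagger} = R_{\mathrm{SIC}}(\bs{p}^{\dagger})$ is a detail the paper leaves implicit, but it is the standard chain-rule computation and adds nothing beyond what the paper's argument already contains.
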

A similar result is obtained for the game $\gametwo$.

\begin{proposition}\label{EqSICGameTwo}
Let $\mathcal{P}^{+} \subset \mathcal{P}^{(b)}$ be the set of NE of the game $\gametwo$. The maximum NSE (achievable in the space $\mathcal{P}^{(b)}$) of the network is achieved if for all $k \in \mathcal{K}$,  transmitter $k$ uses the power allocation vector $\bs{p}_k^{+}$ and transmits with rate
\begin{equation}\nonumber
R_k^{+} = \displaystyle\sum_{s \in \mathcal{S}}\frac{B_s}{B} \displaystyle\sum_{k = 1}^K \log_2\left( 1 + \frac{p_{k,s}^{+} g_{k,s}}{\sigma^2_s + \displaystyle\sum_{j \in \mathcal{K}\setminus\lbrace 1, \ldots, k\rbrace} p_{j,s}^{+} g_{j,s}}\right) ,
\end{equation}
with $\bs{p}^{+} \in \mathcal{P}^{+}$ and
\begin{equation}
\forall \bs{p} \in \mathcal{P}^{+}, \quad \phi(\bs{p}^+) \geqslant \phi(\bs{p}).
\end{equation}
\end{proposition}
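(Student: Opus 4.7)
The plan is to reduce the claim to Lemma~\ref{LemmaGisPG} plus the key observation already recorded in equation~\eqref{EqSICsumRate}. First, I would rewrite
\[
R_{\mathrm{SIC}}(\bs{p}) \;=\; \phi(\bs{p}) \;-\; \sum_{s\in\mathcal{S}} \frac{B_s}{B}\log_2\left(\sigma_s^2\right),
\]
and note that the second term does not depend on $\bs{p}$. Consequently, maximizing the sum spectral efficiency with perfect SIC over the discrete set $\mathcal{P}^{(b)}$ is equivalent to maximizing the potential $\phi$ over $\mathcal{P}^{(b)}$.

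Since $\mathcal{P}^{(b)}=\mathcal{P}_1^{(b)}\times\cdots\times\mathcal{P}_K^{(b)}$ is finite (each factor has exactly $S$ elements), the set
\[
\arg\max_{\bs{p}\in\mathcal{P}^{(b)}} \phi(\bs{p})
\]
is non-empty; let $\bs{p}^{+}$ be any of its elements. The second step is to show $\bs{p}^{+}\in\mathcal{P}^{+}$. By Lemma~\ref{LemmaGisPG}, for every $k\in\mathcal{K}$ and every $\bs{p}_k'\in\mathcal{P}_k^{(b)}$,
\[
u_k(\bs{p}^{+}_k,\bs{p}^{+}_{-k}) - u_k(\bs{p}_k',\bs{p}^{+}_{-k}) = \phi(\bs{p}^{+}) - \phi(\bs{p}_k',\bs{p}^{+}_{-k}) \;\geqslant\; 0
\]
by definition of $\bs{p}^{+}$. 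Hence $\bs{p}^{+}$ satisfies Definition~\ref{DefNE}, so $\bs{p}^{+}\in\mathcal{P}^{+}$, and the inequality $\phi(\bs{p}^{+})\geqslant\phi(\bs{p})$ for every $\bs{p}\in\mathcal{P}^{+}$ is automatic since the maximum is taken over the larger set $\mathcal{P}^{(b)}\supseteq \mathcal{P}^{+}$.

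Third, I would verify that the individual rates $R_k^{+}$ assigned by a SIC decoder that peels user $k$ in the $k$-th place sum (over $k$) to $R_{\mathrm{SIC}}(\bs{p}^{+})$. Fix a subchannel $s$ and set $a_k := p_{k,s}^{+}g_{k,s}$, $T_k := \sigma_s^2 + \sum_{j>k} a_j$. The telescoping identity
\[
\log_2\left(1+\frac{a_k}{T_k}\right) \;=\; \log_2(T_k + a_k) - \log_2(T_k) \;=\; \log_2(T_{k-1}) - \log_2(T_k)
\]
yields $\sum_{k=1}^{K}\log_2(1+a_k/T_k)=\log_2\left(\sigma_s^2+\sum_{k}a_k\right)-\log_2(\sigma_s^2)$. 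Weighting by $B_s/B$ and summing over $s$ recovers exactly $R_{\mathrm{SIC}}(\bs{p}^{+})$, so the announced rate vector achieves the sum spectral efficiency $\phi(\bs{p}^{+})-\sum_s(B_s/B)\log_2(\sigma_s^2)$, which is the maximum of the NSE over $\mathcal{P}^{(b)}$ by Step~1.

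No step looks genuinely hard: the core is the observation~\eqref{EqSICsumRate} plus the general principle that any global maximizer of the potential in a finite potential game is a pure NE. The only mildly technical point is the telescoping identity used to reconstruct $R_{\mathrm{SIC}}$ from the single-user SIC rates, and the verification that~\eqref{EqSICsumRate} indeed produces only an additive constant independent of $\bs{p}$ (this uses $B_s$ and $\sigma_s^2$ being parameters of the channel, not of the strategy). The argument is the direct finite-strategy analogue of the one behind Proposition~\ref{EqSICGameOne}, the only adaptation being that existence of a maximizer is by finiteness of $\mathcal{P}^{(b)}$ rather than by the KKT analysis used for the convex set $\mathcal{P}^{(a)}$.
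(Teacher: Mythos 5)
Your proof is correct and follows exactly the route the paper intends: it expands the paper's remark that the propositions "follow immediately" from the identity \eqref{EqSICsumRate} and the potential-game property, by noting that a global maximizer of $\phi$ over the finite set $\mathcal{P}^{(b)}$ exists, is a pure NE (exact potential game argument), and that the per-user SIC rates telescope to $R_{\mathrm{SIC}}(\bs{p}^{+})$. No gaps; this is the same argument, just written out in full.
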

Prop. \ref{EqSICGameOne} and Prop. \ref{EqSICGameTwo} show that for achieving the maximum NSE the players must use the power allocation corresponding to the NE but the transmission rate must be adjusted according to the decoding order. This implies that each player needs to know both the SINR with and without SIC in order to set up its power allocation vector and its transmission rate, respectively. In the context of decentralized networks this amount of signaling is not always affordable. Thus, in the following numerical analysis, we only consider the more practical (and scalable) case of single-user decoding. More discussions on the unfeasibility of SIC in decentralized networks can be found in \cite{Mertikopoulos-JSAC-2011}.

\section{Numerical Examples}\label{SecNumericalExamples}

In the previous sections, a mathematical argument has been provided to show that at the low and high SNR regime, using a channel selection policy  yields a higher or equal NSE than using a water-filling power allocation policy. A formal proof has been provided for the case of $K = 2$ transmitters and $S=2$ channels.  Moreover, we highlight that a CS policy is evidently simpler than a PA policy in terms of implementation.
Unfortunately, providing a formal proof for an arbitrary number of transmitters $K$ and channels $S$ at a finite SNR becomes a hard task since it will require to calculate all the types of NE depending on the exact channel realizations. Hence, for the case of arbitrary parameters $K$, $S$, and SNR, we provide only numerical examples to give an insight of the general behavior. First, we evaluate the impact of the SNR for a network with a fixed number of transmitters and channels. Second, we evaluate the impact of the network load, i.e., the number of transmitters per channel for a given fixed SNR.

\subsection{Impact of the SNR $\frac{p_{\max}}{\sigma_2}$}

In Fig. \ref{FigLoadImpactAndSNRImpact} (left), we plot the network spectral efficiency as a function of the average SNR of the transmitters. Here, it is shown that in fully-loaded and over-loaded networks, i.e., $\eta = \frac{K}{S} \geqslant 1$, the gain in NSE obtained by using a discrete action set (game $\gameone$) increases with the SNR. Conversely, for weakly loaded networks $\eta < 1$, limiting the transmitters to use a single channel appears to be suboptimal as the SNR increases. This is basically because using only one channel, necessarily implies letting some interference-free channels unused. Interestingly, at low SNR, the NSE observed in both games is the same, independently of the load of the system. In both cases, high SNR and low SNR regime, the observed results are in line with Prop. \ref{PropPerformanceLowSNR} and Prop. \ref{PropPerformanceHighSNR}.

In Fig. \ref{FigNumberOfNE}, we plot the probability of observing a specific number of NE in the game $\gametwo$ for different values of SNR. In the first case (Fig. \ref{FigNumberOfNE} (left)) we consider $S = 2$ and $K = 3$, whereas in the second case (Fig. \ref{FigNumberOfNE} (right)), $K = 3$ and $S = 3$. Note that from Prop. \ref{PropMultiplicityGb}, the maximum number of NE is $4$ and $7$, respectively. However, only $3$ and $6$ NE are respectively observed in the simulations.
This mismatch is due to the fact that Prop. \ref{PropMultiplicityGb} relies only on the number of players and channels and takes into account only the distance (Def. \ref{DefDistance}) between two action profiles. Thus, it does not consider the utility function in \eqref{EqUtilityFunction} and the set of actions $\mathcal{P}^{(b)}$, for which there are some action profiles which are mutually exclusive of the set of NE. For instance, in the game $\gametwo$ with $K = 3$ and $S = 2$, the set of power allocations $p_{\max}\left(\bs{e}_1,\bs{e}_1,\bs{e}_1\right)$, $p_{\max}\left(\bs{e}_1,\bs{e}_2,\bs{e}_2\right)$, $p_{\max}\left(\bs{e}_2,\bs{e}_2,\bs{e}_1\right)$, and $p_{\max}\left(\bs{e}_2,\bs{e}_1,\bs{e}_2\right)$ are all at distance $2$ of each other. Nonetheless, if $p_{\max}\left(\bs{e}_1,\bs{e}_1,\bs{e}_1\right)$ is an equilibrium for a given set of channels, then the other three action sets are not NE for the same set of channels and vice-versa. Thus, only $3$ out of the $4$ candidates can be NE simultaneously. The exact number of NE can be determined following the method described in Sec. \ref{SecDeterminationOfNEGameb}, but it requires the complete knowledge of the channel gains. Prop. \ref{PropMultiplicityGb} aims at providing an estimation based only on the parameters $K$ and $S$.

\noindent
Finally, we remark that low SNR levels are associated with a unique NE (with high probability), whereas, high SNR levels are associated with multiple NE (with high probability). Note that this observation, at least for the case of $K = 2$ and $S = 2$, is inline with  Prop. \ref{PropPerformanceLowSNR} and Prop. \ref{PropPerformanceHighSNR}.

\subsection{Impact of the Number of Transmitters ($K$)}

In Fig. \ref{FigLoadImpactAndSNRImpact} (right), we plot the NSE as a function of the number of transmitters per channel, i.e., the system load $\eta = \frac{K}{S}$. Therein, one can observe that for weakly loaded systems $\eta < 1$, playing $\gameone$ always leads to higher NSE than playing $\gametwo$. This is natural since restricting the transmitters to use only one channel implies not using other channels which are interference-free, as $S > K$. On the contrary, for fully-loaded and over-loaded systems, the NSE of the game $\gameone$ is at least equal or better than that of the game $\gametwo$.
Indeed, the fact that for high system loads $\eta > 2$, the NSE obtained by playing the game $\gameone$ and $\gametwo$ become identical stems from the fact that under this condition the system becomes dominated by the interference.  
Finally, in Fig. \ref{FigFractions}, we show the fractions $x_s$ of transmitters using channel $s$, with $s \in
\mathcal{S}$, obtained by Monte-Carlo simulations and using (\ref{EqFractionSolution}) for a large network with an asymptotic ratio of players per channel equivalent to $\eta = 10$. Therein, it becomes clear that (\ref{EqFractionSolution}) is a precise estimation of the outcome of the games $\gameone$ and $\gametwo$ in the regime of large number of players.

\section{Conclusions}\label{SecConclusions}

In this paper, it is shown to what extent the equilibrium
analysis of the decentralized parallel MAC differs from those conducted for other channels like
Gaussian MIMO interference channels and fast fading MIMO MAC. In
particular, the special structure of parallel MAC  and the assumption of
single-user decoding at the receiver leads to the potential game property. The channel selection game was merely
introduced in the literature but not investigated in details as it
is in this paper. In particular, a graph-theoretic interpretation is
used to characterize the number of NE. In the case where the number of transmitters is large, the fraction between pure NE and the total number of action profiles is relatively small, which makes both the analysis and the achievability of the NE a challenging task. Now, from a design point of view, we provide theoretical results and numerical examples to show that a fully loaded network, when transmitters use only one channel, can be more efficient than its counterpart when all the channels can be exploited by the transmitters. Although all of
these results are encouraging about the relevance of game-theoretic
analyses of power allocation problems, important practical issues
have been deliberately ignored. For example, the impact of channel
estimation is not assessed at all. Also, it is important to conduct
a detailed analysis on the signaling cost involved by all the power
allocation algorithms arising from this game formulations to learn NE.

\appendices

%%%%%%%%%%%%%%%%%%%%%%%%%%%%%%%%%%%%%%%%%%%%%%%%%%%%%%%%%%%%%%%%
\section{Proof of Proposition \ref{PropMultiplicityGb} }\label{AppProofOfPropMultiplicityGb}
In this appendix, we provide a proof of Proposition \ref{PropMultiplicityGb}, which establishes an upper bound for the number of NE of the game $\gameone$. Here, we exploit some basic tools from graph theory. Let
us index the elements of the action set $\mathcal{P}^{(b)}$ in any
given order using the index $n \in \mathcal{I} = \left\lbrace 1,
\ldots, S^K\right\rbrace$. Denote by $\bs{p}^{(n)}$ the $n$-th
element of the action set $\mathcal{P}^{(b)}$. We write each vector
$\bs{p}^{(n)}$ with $n \in \mathcal{I}$,  as  $\bs{p}^{(n)} =
\left(\bs{p}^{(n)}_1, \ldots, \bs{p}^{(n)}_{K}\right)$, where for
all $j \in \mathcal{K}$, $\bs{p}^{(n)}_{j} \in \mathcal{P}_j^{(b)}$.
Consider that each action profile $\bs{p}^{(n)}$ is associated with a vertex $v_n$ in a given
non-directed graph $G$. Each vertex $v_n$ is adjacent to the
$K(S-1)$ vertices associated with the action profiles resulting when
only one player deviates from the action profile $\bs{p}^{(n)}$,
i.e., if two vertices $v_{n}$ and $v_{m}$, with $(n,m) \in
\mathcal{I}^2$ and $n \neq m$, are adjacent, then there exists one and only one $k
\in \mathcal{K}$, such that
\begin{eqnarray}
\nonumber   \forall j \in \mathcal{K}\setminus\lbrace k \rbrace, &  \bs{p}^{(n)}_j = \bs{p}^{(m)}_j, \mbox{ and }
                                                                                                             &  \bs{p}^{(n)}_k \neq \bs{p}^{(m)}_k.
\end{eqnarray}
More precisely, the graph $G$ can be defined by the pair $G = \left( \mathcal{V}, \bs{A}\right)$, where the set $\mathcal{V} = \left\lbrace v_1, \ldots, v_{S^K}\right\rbrace$ (nodes) represents the $S^K$ possible actions profiles of the game and $\bs{A}$ (edges) is
a symmetric matrix (adjacency matrix of $G$) with dimensions $S^K \times S^K$ and entries defined as follows $\forall (n,m) \in \mathcal{I}^2$ and  $n\neq m$,
\begin{equation}
    a_{n,m} = a_{m,n} = \left\lbrace \begin{array}{cl} 1 & \text{ if } n \in \mathcal{V}_m\\
                                                        0 & \text{ otherwise },
                                     \end{array} \right.
\end{equation}
and $a_{n,n} = 0$ for all $n \in \mathcal{I}$, where the set $\mathcal{V}_n$ is the set of indices of the adjacent vertices of vertex $v_n$. In the following, we use the concept of distance between two vertices of the graph $G$. We define this concept using our notation:

\begin{definition}[Shortest Path]\label{DefDistance} \emph{The distance (shortest path) between vertices $v_n$ and $v_m$, with $(n,m) \in \mathcal{I}^2$ in a given non-directed graph $G =\left(\mathcal{V},A \right)$, denoted by $d_{n,m}(G) \in \mathds{N}$ is:
\begin{equation}
  d_{n,m}(G) = d_{m,n}(G) = \displaystyle\sum_{k = 1}^{K} \mathds{1}_{\left\lbrace \bs{p}^{(n)}_k \neq \bs{p}^{(m)}_k \right\rbrace}.
\end{equation}
}
\end{definition}

\noindent
Here, for any pair of action
profiles $\bs{p}^{(n)}$ and $\bs{p}^{(m)}$, with $(n,m) \in
\mathcal{I}^2$ and $n  \neq m$, we have that $\phi(\bs{p}^{(n)})
\neq  \phi(\bs{p}^{(m)})$ with probability one. This is because
channel gains are random variables drawn from continuous probability
distributions and thus,  $\Pr\left(\phi(\bs{p}^{(n)}) =
\phi(\bs{p}^{(m)})\left|\right. n \neq m \right) = 0$. Hence,
following Def. \ref{DefNE}, one can state that if the action profile
$\bs{p}^{(n^*)}$, with $n^* \in \mathcal{I}$, is an NE of the game
$\gametwo$, then, it follows that
\begin{equation}\label{EqCondNE}
\forall m \in \mathcal{V}_{(n^*)}, \quad    \phi(\bs{p}^{(n^*)}) > \phi(\bs{p}^{(m)}),
\end{equation}
and vice versa with probability one. However, several action profiles might simultaneously satisfy the condition (\ref{EqCondNE}), which is what we proof in the following.

\begin{proof}
From Prop. \ref{PropExistenceGb} it is ensured that $L \geqslant 1$.
Then, assume that a given action profile $\bs{p}^{(n)}$ (vertex
$v_n$) with $n \in \mathcal{I}$ is an NE. Given condition
(\ref{EqCondNE}), it follows that none of the vertices in the set
$\mathcal{V}_n$ is an NE. Nonetheless, if there exists another
action profile $\bs{p}^{(m)}$, with $m \in
\mathcal{I}\setminus\lbrace n \cup \mathcal{V}_n \rbrace$, which
satisfies (\ref{EqCondNE}), then $\bs{p}^{(m)}$ can be also an NE. Thus,
for the action profile $\bs{p}^{(m)}$, with $n \neq m$, to be an NE candidate, it must be (at least) at distance two of $\bs{p}^{(n)}$ and any other NE candidate, i.e., $d_{n,m}(G)=d_{m,n}(G) \in \lbrace 2, 4, \ldots, \hat{K}\rbrace$. An action profile at distance $\ell \in \lbrace 2, 4, \ldots, \hat{K}\rbrace$ from $\bs{p}^{(n)}$, is a vector where $\ell$ players have simultaneously deviated from $\bs{p}^{(n)}$. Hence, for each $\ell$-tuple of players, there always exists $S-1$ action profiles at distance $\ell$ from $\bs{p}^{(n)}$ and at distance $2$ from each other.
Thus, considering  the initial NE action profile
$\bs{p}^{(n)}$, there might exists at most
\begin{equation}\label{EqUpperBoundNE}
L \leqslant 1 + \ds\sum_{i \in \lbrace 2, 4, \ldots, \hat{K}\rbrace} \left(\begin{array}{c} K \\ i\end{array}\right)\left(S - 1\right)
\end{equation}
NE candidates. This establishes an
upper bound for $L$ and completes the proof.
\end{proof}

%%%%%%%%%%%%%%%%%%%%%%%%%%%%%%%%%%%%%%%%%%%%%%%%%%%%%%%%%%%%%%%%
\section{Proof of Prop. \ref{PropNEGa}} \label{AppProf1}
In this appendix, we provide a proof for the Prop. \ref{PropNEGa}.  
The proof is separated in two steps. First, we show that a power allocation vector $\bs{p} = \left(\bs{p}_{1},\bs{p}_{2}\right) \in \mathcal{P}^{(a)}$ of the form
\begin{equation}
\nonumber   \bs{p}_{1} = \left(p_{11},p_{\max} - p_{11}\right) \text{ and } \bs{p}_{2} = \left( p_{\max} - p_{22},p_{22}\right),
\end{equation}
is not an NE of the game $\gameone$, when $p_{11} \in \left] 0, p_{\max}\right[$ and $p_{22}  \in
\left] 0, p_{\max}\right[$. 
Second, we show that if $\bs{p}$ is an NE, then, $\bs{p}$ is unique and satisfies that, $\bs{p} \in  \mathcal{P}^{\dagger}$, where
\begin{eqnarray}\nonumber
    \mathcal{P}^{\dagger} &=& \mathcal{P}^{(a)} \setminus \lbrace \bs{p} = \left(p_{11},p_{\max}-p_{11},p_{\max}-p_{22},p_{22}\right)  \\ 
    \nonumber
    &  & \in \mathds{R}_{+}^{4}:  p_{11}\in \left]0,p_{\max}\right[ \text{ and } p_{22} \in \left]0,p_{\max}\right[ \, \rbrace.
\end{eqnarray}
In the following, we use the notation $-c$ to denote the element other than $c$ in the binary set $\mathcal{C}$.

\begin{proof}
\textbf{First Step:} Assume that the action profile $\bs{p} =
\left(\bs{p}_1 ,\bs{p}_2 \right)$, with $\bs{p}_{1} = \left(p_{11}
,p_{12} \right)$ and $\bs{p}_{2} = \left(p_{21},p_{22}\right)$ is an
NE of the game $\gameone$, and assume that for all $(k,s) \in
\mathcal{K}\times \mathcal{S}$, $p_{k,s} > 0$, with strict
inequality. Then, from the best response correspondence, it holds that $\forall (k,s) \in \mathcal{K}\times\mathcal{S}$,
\begin{eqnarray}\label{EqPANE2x2}
p_{k,s}^{\dagger} & = & \left[\frac{1}{\beta_k} - \frac{\sigma^2 + g_{-k,s}p_{-k,s}^{\dagger}}{g_{k,s}}\right]^{+},
\end{eqnarray}
with $\beta_k$ a Langrangian multiplier chosen to satisfy \eqref{EqPowerConstraints}. Then, from \ref{EqPANE2x2}, it can be implied that
$\forall k \in \mathcal{K}$,
\begin{eqnarray}
    p_{k,s} & = & \frac{1}{\beta_k}  - \frac{\sigma^2 + p_{-k,s} g_{-k,s}}{g_{k,s}} >  0 \text{ and } \\
        p_{k,-s} & = & \frac{1}{\beta_k}  -  \frac{\sigma^2 + p_{-k,-s} g_{-k,-s}}{g_{k,-s}} > 0.
\end{eqnarray}
Then, from the fact that $\forall k \in \mathcal{K}$, $p_{k,s} + p_{k,-s} = p_{\max}$, we have that,
\begin{equation}\label{EqProof0}
\begin{array}{lcl}
p_{k,k} &=& \frac{1}{2}\left(p_{\max} - \frac{\sigma^2 + g_{-k,k}\left(p_{\max} - p_{-k,-k}\right)}{g_{k,k}} \right. \\ &+ & \left. \frac{\sigma^2 + g_{-k,-k} p_{-k,-k}}{g_{k,-k}}\right)\\
p_{k,-k} &=& p_{\max} - p_{k,k}.
\end{array}
\end{equation}
Using a matrix notation, the system of equations (\ref{EqProof0}) can be written as follows:
\begin{equation}\label{EqSysEq2x2}
     \bs{C} \left(\begin{array}{c}   p_{11}\\   p_{22} \end{array}\right) =   \bs{A},
\end{equation}
where, the matrix $\bs{C}$ is
\begin{equation}\nonumber
\bs{C} =  \left(\begin{array}{cc}  2g_{11}g_{12} &   -\left(g_{22}g_{11} + g_{21}g_{12}\right) \\
                                                      -\left(g_{22}g_{11} + g_{21}g_{12}\right) &   2g_{11}g_{12} \\
                \end{array} \right) 
\end{equation}
and, the vector $\bs{A}$ is
\begin{equation}\label{EqProof1}
    \bs{A} = \left( \begin{array}{c} p_{\max}g_{12}\left( g_{11} - g_{21} \right) + \sigma^2 \left(g_{11} - g_{12} \right)\\
                                                    p_{\max}g_{21}\left( g_{22} - g_{12} \right) + \sigma^2 \left(g_{22} - g_{21} \right) \end{array} \right).
\end{equation}
Note that the system of equations (\ref{EqSysEq2x2}) has a unique solution as long as the set of channels $\lbrace g_{11}, g_{12}, g_{21}, g_{22} \rbrace$ satisfies the condition $g_{12}g_{21} - g_{11}g_{22} \neq 0$.
Let us continue the analysis under the assumption that, $g_{12}g_{21} - g_{11}g_{22} \neq 0$ (the case where $g_{12}g_{21} - g_{11}g_{22} = 0$ is treated later). Then, the unique solution of (\ref{EqSysEq2x2}) is $\forall k \in \mathcal{K}$,
\begin{eqnarray}
\nonumber
    p_{k,k} & = &   \frac{p_{\max}g_{-k,k} \left( g_{k,-k} + g_{-k,-k} \right)}{g_{12}g_{21} - g_{11}g_{22}}\\
\nonumber    
    & + &   \frac{ \sigma^2 \left(g_{-k,k} + g_{-k,-k}\right)}{g_{12}g_{21} - g_{11}g_{22}},\\
\nonumber 
    p_{k,-k} & = & p_{\max} - p_{k,k}.
\end{eqnarray}
Note that if $g_{12}g_{21} - g_{11}g_{22} < 0$, then $\forall k \in \mathcal{K}$, $p_{k,k} < 0$, and, if $g_{12}g_{21} - g_{11}g_{22} > 0$, then $\forall k \in \mathcal{K}$, $p_{k,k} > p_{\max}$, which contradicts the initial power constraints (\ref{EqPowerConstraints}). Hence, any vector $\bs{p} = \left(\bs{p}_1 ,\bs{p}_2 \right)$, with $\bs{p}_{1} = \left(p_{11} ,p_{\max}-p_{11} \right)$ and $\bs{p}_{2} = \left(p_{\max}-p_{22},p_{22}\right)$, such that $\forall (k,s) \in \mathcal{K}\times\mathcal{S}$,  $0 < p_{k,s} < p_{\max}$ is not an NE for the game $\gameone$ when $g_{12}g_{21} - g_{11}g_{22} \neq 0$.
Assume now that $g_{12}g_{21} - g_{11}g_{22} =  0$, and let $\alpha = \frac{g_{21}}{g_{11}} = \frac{g_{22}}{g_{12}}$. Then,
the PA vector in \eqref{EqProof0} can be written as follows, for $k = 1$
\begin{equation}\label{EqFirstProof3}
\left\lbrace \begin{array}{lcl}
  p_{11} & = &   \alpha p_{22} + \frac{1}{2}\left(p_{\max} \left(1 - \alpha \right) + \sigma^2 \left(\frac{1}{g_{12}} - \frac{1}{g_{11}}\right) \right)\\
  p_{12} & = &   p_{\max} - p_{11}. \end{array}\right.
\end{equation}
and, for $k = 2$,
\begin{equation}\label{EqFirstProof4}
 \left\lbrace \begin{array}{lcl}
  p_{22} & = &  \frac{1}{2}\left(p_{\max} \left(1 + \frac{1}{\alpha} \right) + \sigma^2 \left(\frac{1}{g_{21}} - \frac{1}{g_{22}}\right) \right) + \frac{1}{\alpha} p_{11}\\
  p_{21} & = &   p_{\max} - p_{22}, \end{array}\right.
\end{equation}
Note that the first equations in both (\ref{EqFirstProof3}) and (\ref{EqFirstProof4}) are identical. Thus, we focus only on the first equation in (\ref{EqFirstProof3}). This implies that any PA vector, $\bs{p} = \left(\bs{p}_{1},\bs{p}_{2}\right)$, with  $\bs{p}_{1} = \left(p_{11},p_{\max} - p_{11}\right) \in \mathcal{P}_{1}^{(a)}$ and $\bs{p}_{2} = \left(p_{\max} - p_{22},p_{22}\right) \in \mathcal{P}_2^{(a)}$ satisfying the condition
\begin{equation}\label{EqConditions2x2}
p_{11} = \frac{1}{2}\left(p_{\max} \left(1 - \alpha \right) + \sigma^2 \left(\frac{1}{g_{12}} - \frac{1}{g_{11}}\right) \right) + \alpha p_{22}
\end{equation}
is an NE of the game $\gameone$ when $g_{12}g_{21} - g_{11}g_{22} =  0$ as long as $\forall (k,s) \in \mathcal{K}\times\mathcal{S}$,  $0 < p_{k,s} < p_{\max}$. For satisfying the latter conditions, it suffices to ensure that: $ 0 < p_{11} < p_{\max}$ when $p_{22} = 0$ and $p_{22} = p_{\max}$. Solving these inequalities leads to the following conditions over the channels:
\begin{description}
\item[$(i)$] $p_{11} > 0$, when $p_{22} = 0$, if  
\begin{equation}
\frac{1 + \frac{p_{\max}}{\sigma^2}g_{21}}{1 + \frac{p_{\max}}{\sigma^2} g_{12}} <  \frac{g_{11}}{g_{12}}\end{equation}.
\item[$(ii)$] $p_{11} > 0$, when $p_{22} = p_{\max}$, if  
\begin{equation}
\frac{1}{1 + \frac{p_{\max}}{\sigma^2}\left( g_{12} + g_{22} \right)} <  \frac{g_{11}}{g_{12}}.
\end{equation}
\item[$(iii)$] $p_{11} < p_{\max}$, when $p_{22} = 0$, if  
\begin{equation}
 \frac{g_{11}}{g_{12}} < 1 + \frac{p_{\max}}{\sigma^2}\left( g_{11} + g_{21} \right)
\end{equation}
\item[$(iv)$] $p_{11} < p_{\max}$, when $p_{22} = p_{\max}$, if  
\begin{equation}
\frac{g_{11}}{g_{12}} < 1 + \frac{p_{\max}}{\sigma^2}\left( g_{11} + g_{21} \right)
\end{equation}
\end{description}
Finally, we obtain that if the vector of channels $\bs{g} = \left(g_{11},g_{12},g_{21},g_{22}\right)$ satisfies that
\begin{eqnarray}
\nonumber
	\min\left(\frac{1}{1 + \frac{p_{\max}}{\sigma^2}(g_{12}+g_{22})}, \frac{1 + \frac{p_{\max}}{\sigma^2}g_{21}}{1 + \frac{p_{\max}}{\sigma^2}g_{12}}\right) & < & \alpha  \text{ and }\\
\nonumber	
  \max\left(  1 + \frac{p_{\max}}{\sigma^2}\left( g_{11} + g_{21} \right),  \frac{1 + \frac{p_{\max}}{\sigma^2}g_{11}}{1 + \frac{p_{\max}}{\sigma^2}g_{22}} \right) & > &\alpha,
\end{eqnarray}
that is,
\begin{eqnarray}\nonumber
\frac{1}{1 + \frac{p_{\max}}{\sigma^2}(g_{12}+g_{22})} < & \alpha & <  1 + \frac{p_{\max}}{\sigma^2}\left( g_{11} + g_{21} \right),
\end{eqnarray}
then any vector $\bs{p} = \left(\bs{p}_{1},\bs{p}_{2}\right)$, with  $\bs{p}_{1} = \left(p_{11},p_{\max} - p_{11}\right) \in \mathcal{P}_{1}^{(a)}$ and $\bs{p}_{2} = \left(p_{\max} - p_{22},p_{22}\right) \in \mathcal{P}_2^{(a)}$ satisfying the condition \eqref{EqConditions2x2} is an NE of the game $\gameone$.
Note that infinitely many PA vectors might satisfy \eqref{EqConditions2x2}, which implies infinitely many NE. However, since the channels are realizations of random variables drawn from a continuous distribution, the probability of observing a realization such that $g_{12}g_{21} - g_{11}g_{22} =  0$ is zero.
Thus, with probability one, any vector $\bs{p} = \left(\bs{p}_1 ,\bs{p}_2 \right)$, with $\bs{p}_{1} = \left(p_{11} ,p_{\max}-p_{11} \right)$ and $\bs{p}_{2} = \left(p_{\max}-p_{22},p_{22}\right)$, such that $\forall (k,s) \in \mathcal{K}\times\mathcal{S}$,  $0 < p_{k,s} < p_{\max}$ is not an NE for the game $\gameone$.

\textbf{Second Step: } Consider that $\bs{p}^{\dagger} =
\left(\bs{p}_{1}^{\dagger},\bs{p}_{2}^{\dagger}\right) \in \mathcal{P}^{(a)}$ is an
NE. Then, it must follow that $\bs{p}^{\dagger} \in \mathcal{P}^{\dagger}$,
where,
\begin{equation}
\begin{array}{lcl}
    \mathcal{P}^{\dagger}&=& \mathcal{P}\setminus\lbrace \bs{p} = \left(p_{11},p_{\max}-p_{11},p_{\max}-p_{22},p_{22}\right) \\
    \nonumber
    & &\in \mathds{R}_{+}^{4}: p_{11}\in \left]0,p_{\max}\right[ \text{ and } p_{22} \in \left]0,p_{\max}\right[ \, \rbrace\\
                      &=& \displaystyle \bigcup_{n = 1}^{8} \mathcal{P}^{\dagger}_{i},
\end{array}
\end{equation}
where the sets $\mathcal{P}^{\dagger}_{n} \subset \mathcal{P}^{(a)}$, for all $n \in \lbrace 1, \ldots, 8\rbrace$ are described as follows. The singletons $\mathcal{P}^{\dagger}_{1} =   \lbrace  \bs{p} = \left(p_{\max},0,0,p_{\max}\right)\rbrace$, $\mathcal{P}^{\dagger}_{2} =  \lbrace  \bs{p} = \left(p_{\max},0,p_{\max},0\right)\rbrace$, $\mathcal{P}^{\dagger}_{3}=   \lbrace  \bs{p} = \left(0,p_{\max},0,p_{\max}\right)\rbrace$, $\mathcal{P}^{\dagger}_{4} = \lbrace  \bs{p} = \left(0,p_{\max},p_{\max},0\right)\rbrace$ and the convex non-closed sets,
\begin{eqnarray}
\nonumber   \mathcal{P}^{\dagger}_{5}& = &  \lbrace  \bs{p} = \left(p_{11},p_{\max}-p_{11},p_{\max}-p_{22},p_{22}\right) \\ 
\nonumber
& & \in \mathds{R}_{+}^{4}: p_{11} = p_{\max}, \text{ and } p_{22} \in \left]0,p_{\max}\right[ \rbrace,\\
\nonumber   \mathcal{P}^{\dagger}_{6}& = &  \lbrace  \bs{p} = \left(p_{11},p_{\max}-p_{11},p_{\max}-p_{22},p_{22}\right) \\
\nonumber
&  & \in \mathds{R}_{+}^{4}: p_{11} \in \left]0,p_{\max}\right[ \text{ and } p_{22} = p_{\max} \rbrace,\\
\nonumber   \mathcal{P}^{\dagger}_{7}& = &  \lbrace  \bs{p} = \left(p_{11},p_{\max}-p_{11},p_{\max}-p_{22},p_{22}\right) \\
\nonumber
& & \in \mathds{R}_{+}^{4}: p_{11} \in \left]0,p_{\max}\right[ \text{ and } p_{22} = 0 \rbrace,\\
\nonumber   \mathcal{P}^{\dagger}_{8}& = &  \lbrace  \bs{p} = \left(p_{11},p_{\max}-p_{11},p_{\max}-p_{22},p_{22}\right) \\
\nonumber
& & \in \mathds{R}_{+}^{4}: p_{11} = 0, \text{ and } p_{22} \in \left]0,p_{\max}\right[ \rbrace.
\end{eqnarray}

In the following, we identify the conditions over the channel vector $\bs{g} = \left(g_{11},g_{12},g_{21},g_{22}\right)$ such that each $\bs{p}^{\dagger} \in \mathcal{P}^{\dagger}_n$, with $n \in \lbrace 1, \ldots, 8 \rbrace$ is an NE.

\noindent
Assume that $\bs{p}^{\dagger} \in \mathcal{P}^{\dagger}_{8}$, i.e., $\bs{p}_{1}^{\dagger} = \left(0,p_{\max}\right)$ and $\bs{p}_{2}^{\dagger} = \left(p_{\max}-p_{22}^{\dagger},p_{22}^{\dagger}\right)$, with $p_{22}^{\dagger} \in \left]0,p_{\max}\right[$. Then, from (\ref{EqPANE2x2}) with $k = 2$, we have that:
\begin{eqnarray}
    p_{21}^{\dagger} & = & \frac{1}{\beta_2} - \frac{\sigma^2}{g_{21}} > 0\text{ and } \\
    p_{22}^{\dagger} &= & \frac{1}{\beta_2} - \frac{\sigma^2 + g_{12}p_{\max}}{g_{22}} > 0.
\end{eqnarray}
Then, since $p_{21}^{\dagger} + p_{22}^{\dagger} = p_{\max}$, we have that $\frac{1}{\beta_2} = \frac{1}{2}\left(p_{\max} + \frac{\sigma^2 + p_{\max} g_{12}}{g_{22}} + \frac{\sigma^2}{g_{21}}\right)$, and thus,
\begin{equation}\label{EqProof4}
     p_{22}^{\dagger} = \frac{1}{2}\left(p_{\max} - \frac{\sigma^2 + g_{12}p_{\max}}{g_{22}} + \frac{\sigma^2}{g_{21}}\right),
\end{equation}
where, it must satisfy that $0 < p_{22}^{\dagger} < p_{max}$. The inequality $p_{22}^{\dagger} > 0$, holds only if $\frac{g_{21}}{g_{22}} < \frac{1 + \frac{p_{\max}g_{21}}{\sigma^2}}{1 + \frac{p_{\max}g_{12}}{\sigma^2}}$,
whereas the inequality $p_{22}^{\dagger} < p_{\max}$ holds only if $\frac{g_{21}}{g_{22}} > \frac{1}{1 + \SNR\left(g_{12} + g_{22}\right)}$.
Similarly, from (\ref{EqPANE2x2}) with $k = 1$, we have that given $p_{22}^{\dagger}$, in order to obtain $p_{11}^{\dagger} = 0 $ and $p_{12}^{\dagger} = p_{\max}$, it must hold that
\begin{eqnarray}\nonumber
p_{11} & = &\frac{1}{\beta_1} -  \frac{\sigma^2 + g_{21}\left(p_{\max}-p_{22}^{\dagger}\right)}{g_{11}} \leqslant 0 \text{ and } \\
\label{EqProof5}
p_{12} &=& \frac{1}{\beta_1} -  \frac{\sigma^2 + g_{22} p_{22}^{\dagger}}{g_{12}} \geqslant p_{\max}.
\end{eqnarray}
Hence, by doing $p_{12}  - p_{11}$ in \eqref{EqProof5}, we obtain that:
\begin{equation}\label{EqProof7}
\frac{\sigma^2 + g_{21}\left(p_{\max}-p_{22}^{\dagger}\right)}{g_{11}} - \frac{\sigma^2 + g_{22} p_{22}^{\dagger}}{g_{12}} \leqslant p_{\max}.
\end{equation}
Then, by replacing (\ref{EqProof4}) in (\ref{EqProof7}), we obtain that the condition (\ref{EqProof5}) are satisfied only if the channels satisfy that:
\begin{equation}\label{EqProof7a}
    \frac{g_{11}}{g_{12}} \leqslant \frac{g_{21}}{g_{22}}.
\end{equation}
Hence, we can conclude that whenever the vector $\bs{g} = \left(g_{11}, g_{12}, g_{21}, g_{22}\right) \in \mathcal{B}_8$, the NE is of the form $\left(p_{11},p_{\max}-p_{11},p_{\max}-p_{22},p_{22}\right)$, with $p_{11} = 0$ and $p_{22} = \frac{1}{2}\left(p_{\max} - \frac{\sigma^2 + g_{12}p_{\max}}{g_{22}} + \frac{\sigma^2}{g_{21}}\right)$. Now, assuming that $\bs{p}^{\dagger} \in \mathcal{P}^{\dagger}_n$, with $n \in \lbrace 1,\ldots,7\rbrace$, leads to the conditions of the other types of NE, i.e., the corresponding sets $\mathcal{B}_n$, such that whenever $\bs{g} \in \mathcal{B}_n$ then $\bs{p}^{\dagger} \in \mathcal{P}^{\dagger}_n$. It is important to note that, for any particular vector $\bs{g} = \left(g_{11}, g_{12}, g_{21}, g_{22}\right) \in \mathcal{R}^4$, there exists, with probability one, only one set $\mathcal{B}_n$ which satisfies that $\bs{g} \in \mathcal{B}_n$. This is basically because for all $(n,m) \in \lbrace 1, \ldots, 4\rbrace^2$, with $n \neq m$, it follows that $\mathcal{B}_n \cap \mathcal{B}_m  = \emptyset$. Now, for all $(n,m) \in \lbrace 5, \ldots, 8\rbrace^2$, with $n \neq m$, it follows that $\mathcal{B}_n \cap \mathcal{B}_m  \subset \lbrace \bs{g} = \left(g_{11},g_{12},g_{21},g_{22}\right) \in \mathds{R}^{4}: g_{11}g_{22} = g_{12}g_{21}\rbrace$ and observing a channel realization $\bs{g}$, such that, $g_{11}g_{22} = g_{12}g_{21}$ is a zero probability event, since all channel gains are drawn from continuous probability distributions. Thus, with probability one, the game $\gameone$ has unique NE. This completes the proof.
\end{proof}

\section{Proof of Prop. \ref{PropPerformanceLowSNR}}\label{ProofPerformanceLowSNR}

In this appendix, we provide a proof of the Prop.  \ref{PropPerformanceLowSNR}. The Prop.  \ref{PropPerformanceLowSNR} basically states that at low SNR regime if an action profile $\bs{p}$ is a NE of the game $\gameone$, then it is also a NE of the game $\gametwo$ and it is unique.
The proof follows from the fact that in the asymptotic regime, i.e., $\SNR \rightarrow 0$, the set $\mathcal{A}_n$ and the set $\mathcal{B}_n$ become identical, when $n \in \lbrace 1, \ldots, 4\rbrace$. Moreover, the sets $\mathcal{A}_{m}$, with $m\in \lbrace 5, \ldots 8\rbrace$ become empty. The sets $\mathcal{A}_1, \ldots, \mathcal{A}_4$ and the sets $\mathcal{B}_1, \ldots, \mathcal{B}_8$ are given by Prop. \ref{PropNEGb} and Prop. \ref{PropNEGa}, respectively.
Then, 
\begin{eqnarray}
\nonumber \ds\lim\limits_{\SNR \rightarrow 0} \mathcal{A}_1 = \ds\lim\limits_{\SNR \rightarrow 0} \mathcal{B}_1 & = & \lbrace \bs{g} \in \mathds{R}_+^4: \, \frac{g_{11}}{g_{12}} \geqslant 1 \text{ and } \\ 
\nonumber
& & \frac{g_{21}}{g_{22}} \leqslant 1 \rbrace\\
\nonumber
\ds\lim\limits_{\SNR \rightarrow 0} \mathcal{A}_2 = \ds\lim\limits_{\SNR \rightarrow 0} \mathcal{B}_2 & = &  \lbrace \bs{g} \in \mathds{R}_+^4: \, \frac{g_{11}}{g_{12}} \geqslant 1 \text{ and } \\
\nonumber
& & \frac{g_{21}}{g_{22}} \geqslant 1 \rbrace\\
\nonumber
\ds\lim\limits_{\SNR \rightarrow 0} \mathcal{A}_3  = \ds\lim\limits_{\SNR \rightarrow 0} \mathcal{B}_3 &=& \lbrace \bs{g} \in \mathds{R}_+^4: \, \frac{g_{11}}{g_{12}} \leqslant 1 \text{ and } \\
\nonumber
& & \frac{g_{21}}{g_{22}} \leqslant 1 \rbrace\\
\nonumber
\ds\lim\limits_{\SNR \rightarrow 0} \mathcal{A}_4  = \ds\lim\limits_{\SNR \rightarrow 0} \mathcal{B}_4  & = & \lbrace \bs{g} \in \mathds{R}_+^4: \, \frac{g_{11}}{g_{12}} \leqslant 1 \text{ and } \\
\nonumber 
& & \frac{g_{21}}{g_{22}} \geqslant 1 \rbrace   \end{eqnarray}
and moreover,
\begin{eqnarray}
\ds\lim\limits_{\SNR \rightarrow 0}\mathcal{A}_5 & = & \lbrace \bs{g} \in \mathds{R}_+^4: \; \frac{g_{11}}{g_{12}} = 1 \text{ and } \frac{g_{21}}{g_{22}} = 1  \rbrace,   \nonumber
\end{eqnarray}
and
\begin{equation}\label{EqLowSNR5}
\forall n \in \lbrace 6, \ldots, 8\rbrace, \quad    \ds\lim\limits_{\SNR \rightarrow 0} \mathcal{A}_n    = \emptyset.
\end{equation}
Now, since the sets $\mathcal{A}_1, \ldots, \mathcal{A}_4$ or the sets $\mathcal{B}_1, \ldots, \mathcal{B}_4$ cover, in the asymptotic regime, all the space of vectors $\bs{g}$ and both $\mathcal{A}_n$ and $\mathcal{B}_n$ determine a unique NE in  the game $\gametwo$ and $\gameone$, respectively, it follows that the NE of both games  is identical in the asymptotic regime. The uniqueness of the NE in the game
   $\gameone$ holds with probability one, independently of the
 SNR level (Prop. \ref{PropNEGa}). In the game $\gametwo$, the NE is not unique if and only if $\bs{g} \in \mathcal{A}_5$. Nonetheless, since for all $(k,s) \in \mathcal{K}\times \mathcal{S}$, $g_{k,s}$ is a realization of a random variable drawn from a continuous probability distribution, we have that
\begin{equation}
   \Pr\left(\bs{g} \in \mathcal{A}_5 \right)  = 0.
\end{equation}
Thus, with probability one, the NE of the game $\gametwo$ is unique at high SNR regime, which completes the proof.

\section{Proof of Prop. \ref{PropPerformanceHighSNR}} \label{AppendixB}

In this appendix, we provide the proof of Prop.
\ref{PropPerformanceHighSNR}, which states that at the high SNR
regime there always exists an NE action profile in the game
$\gametwo$, which leads to an equal or better global performance than the
unique NE of the game $\gameone$.
Before we start, we introduce two lemmas which are used in the proof.

\begin{lemma}\label{LemmaNEG1HighSNR} \emph{In the high SNR regime, the game $\gameone$ possesses a unique NE, which can be of six different types depending on the channel realizations $\channelset$:
\begin{itemize}
    \item Equilibrium $1$: if $\bs{g} \in \mathcal{B'}_1 = \lbrace \bs{g} \in \mathds{R}_+^4: g_{22} \geqslant  g_{12}, \; \text { and } g_{21} \leqslant g_{11} \rbrace$, then, $p_{11}^{\dagger} =  p_{\max}$ and $p_{22}^{\dagger} =  p_{\max}$.
    \item Equilibrium $4$: if $\bs{g} \in \mathcal{B'}_4 \lbrace \bs{g} \in \mathds{R}_+^4: g_{11} \leqslant g_{21}, \; \text { and } g_{12} \geqslant g_{22} \rbrace$, then, $p_{11}^{\dagger} = 0$ and $p_{22}^{\dagger} =  0$.
    \item Equilibrium $5$: if $\bs{g} \in \mathcal{B'}_{5} = \lbrace \bs{g} \in \mathds{R}_+^4: \frac{g_{11}}{g_{12}}  \geqslant  \frac{g_{21}}{g_{22}}, \; \text { and } g_{21} > g_{11}\rbrace$, then, $p_{11}^{\dagger} = p_{\max}$ and  $p_{22}^{\dagger} = \frac{1}{2}\left(p_{\max} - \frac{\sigma^2}{g_{22}} + \frac{\sigma^2+g_{11} p_{\max}}{g_{21}}\right)$.
    \item Equilibrium $6$: if $\bs{g} \in \mathcal{B'}_6 = \lbrace \bs{g} \in \mathds{R}_+^4: \frac{g_{11}}{g_{12}}  \geqslant  \frac{g_{21}}{g_{22}}, \; \text { and } g_{22}  <  g_{12} \rbrace$, then, $p_{11}^{\dagger} = \frac{1}{2}\left(p_{\max} - \frac{\sigma^2}{g_{11}} + \frac{\sigma^2 + p_{\max}g_{22}}{g_{12}}\right)$ and $ p_{22}^{\dagger} =  p_{\max}$.   %
    \item Equilibrium $7$: if $\bs{g} \in \mathcal{B'}_{7} = \lbrace \bs{g} \in \mathds{R}_+^4: \frac{g_{11}}{g_{12}} \leqslant \frac{g_{21}}{g_{22}}, \; \text { and }  g_{11} > g_{21} \rbrace$, then, $p_{11}^{\dagger} = \frac{1}{2}\left(p_{\max} - \frac{\sigma^2 + p_{\max}g_{21}}{g_{11}} + \frac{\sigma^2}{g_{12}}\right)$ and $p_{22}^{\dagger}  = 0$.
\item Equilibrium $8$: if $\bs{g} \in \mathcal{B'}_{8}  = \lbrace \bs{g} \in \mathds{R}_+^4: \frac{g_{11}}{g_{12}}  \leqslant \frac{g_{21}}{g_{22}}, \; \text { and }  g_{12} <  g_{21} \rbrace$, then, $p_{11}^{\dagger} = 0$ and $p_{22}^{\dagger} = \frac{1}{2}\left(p_{\max} - \frac{\sigma^2 + g_{12} p_{\max}}{g_{22}} + \frac{\sigma^2}{g_{21}}\right)$.
    \end{itemize}
    }
\end{lemma}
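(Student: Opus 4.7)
The plan is to derive Lemma \ref{LemmaNEG1HighSNR} as a direct corollary of Proposition \ref{PropNEGa} by passing to the $\SNR\to\infty$ limit in each of the eight equilibrium regions $\mathcal{B}_1,\ldots,\mathcal{B}_8$. Since Proposition \ref{PropNEGa} already guarantees uniqueness (with probability one) for every finite $\SNR$, uniqueness in the high-SNR regime will be inherited automatically; the content of the lemma is thus the limiting shape of the defining inequalities and the identification of which regions persist.

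First, I would handle the two regions that vanish. In $\mathcal{B}_2$ both defining inequalities read $\frac{g_{i,i}}{g_{i,-i}} \geqslant 1+\SNR(g_{11}+g_{21})$, whose right-hand side diverges with $\SNR$, so for every fixed $\bs{g}$ there exists a threshold $\SNR_2(\bs{g})$ beyond which $\mathcal{B}_2=\emptyset$. The symmetric argument, using the lower bound $\frac{1}{1+\SNR(g_{12}+g_{22})}\to 0^+$, empties $\mathcal{B}_3$. Hence the equilibrium types $2$ and $3$ of Proposition \ref{PropNEGa} disappear in the limit, which is why the lemma lists only six types.

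Second, for the remaining regions I would apply the elementary limit
\begin{equation}
\lim_{\SNR\to\infty}\frac{1+\SNR\,a}{1+\SNR\,b}=\frac{a}{b}, \qquad a,b>0,
\end{equation}
to every occurrence of such a ratio in the boundary inequalities of $\mathcal{B}_1,\mathcal{B}_4,\mathcal{B}_5,\mathcal{B}_6,\mathcal{B}_7,\mathcal{B}_8$. For $\mathcal{B}_1$ this transforms $\frac{g_{11}}{g_{12}}\geqslant\frac{1+\SNR g_{11}}{1+\SNR g_{22}}$ and $\frac{g_{21}}{g_{22}}\leqslant\frac{1+\SNR g_{11}}{1+\SNR g_{22}}$ into $g_{22}\geqslant g_{12}$ and $g_{21}\leqslant g_{11}$, matching $\mathcal{B'}_1$. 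An analogous computation yields $\mathcal{B'}_4$. For the four ``mixed'' regions $\mathcal{B}_5$--$\mathcal{B}_8$, one of the two bracketing bounds is either $1+\SNR(g_{11}+g_{21})\to\infty$ or $\frac{1}{1+\SNR(g_{12}+g_{22})}\to 0$, hence becomes vacuous, while the other bound collapses to a purely channel-dependent inequality via the same limit identity; this yields exactly the descriptions $\mathcal{B'}_5$--$\mathcal{B'}_8$. The explicit equilibrium power levels given in Proposition \ref{PropNEGa} contain only the channel gains, $p_{\max}$ and $\sigma^2$ (not the ratio $\SNR$ in a way that survives the limit), so they copy over unchanged.

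Third, I would close the argument by remarking that the limiting regions $\mathcal{B'}_1,\mathcal{B'}_4,\ldots,\mathcal{B'}_8$ cover $\mathds{R}_+^4$ up to the measure-zero set $\{g_{11}g_{22}=g_{12}g_{21}\}$ (their pairwise intersections are contained in this degeneracy locus, as already discussed at the end of the proof of Proposition \ref{PropNEGa}), so for almost every $\bs{g}$ and every $\SNR$ larger than the case-specific threshold, exactly one region applies and the NE is unique. The principal bookkeeping obstacle is verifying that in each of the six surviving regions the ``high-SNR'' threshold can be chosen uniformly on compact subsets of $\mathds{R}_+^4$ so that the limiting inequalities are not only formally correct but genuinely describe the NE for all sufficiently large $\SNR$; this is handled by a straightforward continuity argument on the ratios $\frac{1+\SNR a}{1+\SNR b}$.
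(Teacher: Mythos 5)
Your proposal is correct and matches the paper's own (very terse) argument, which likewise defines $\mathcal{B}'_n = \lim_{\SNR\to\infty}\mathcal{B}_n$ for each region of Proposition \ref{PropNEGa} and invokes the same reasoning; your version is in fact more explicit about why $\mathcal{B}_2$ and $\mathcal{B}_3$ empty out and about choosing the SNR threshold uniformly. Note only that your (correct) limit computation for $\mathcal{B}_8$ yields the condition $g_{12}<g_{22}$, which indicates that the condition $g_{12}<g_{21}$ printed in the lemma statement is a typo.
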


The proof of lemma \ref{LemmaNEG1HighSNR} follows the same reasoning of the proof of Prop \ref{PropNEGa}. Here, $\forall n \in \lbrace 1, \ldots, 8 \rbrace$, $\mathcal{B}'_n = \ds\lim_{\SNR \rightarrow \infty} \mathcal{B}_n$, where the sets $\mathcal{B}_1, \ldots, \mathcal{B}_8$ are given by Prop. \ref{PropNEGa}.

\noindent
In the following lemma, we describe the set of NE of the game $\gametwo$ in the high SNR
regime.

\begin{lemma}\label{LemmaNEG2HighSNR}\emph{In the high SNR regime, the game $\gametwo$ always possesses two NE action profiles:
\begin{equation}
\bs{p}^{*,1}_1 = (0, p_{\max}) \text{ and } \bs{p}^{*,1}_2 = (p_{\max},0)
\end{equation}
and
\begin{equation}
\bs{p}^{*,4}_1 = (p_{\max},0) \text{ and } \bs{p}^{*,4}_2 = (0,p_{\max}),
\end{equation}
independently of the channel realizations.
}
\end{lemma}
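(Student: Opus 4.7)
The plan is to derive Lemma \ref{LemmaNEG2HighSNR} directly from the full characterization of NE in $\gametwo$ provided by Prop. \ref{PropNEGb}. Recall that Prop. \ref{PropNEGb} gives, for $K=S=2$, the four regions $\mathcal{A}_1,\mathcal{A}_2,\mathcal{A}_3,\mathcal{A}_4 \subset \mathds{R}_+^4$ of channel realizations corresponding to the four candidate pure NE action profiles. The two profiles appearing in the lemma are precisely those associated with $\mathcal{A}_1$ (transmitters split the channels as $\bs{p}_1 = (p_{\max},0)$, $\bs{p}_2 = (0,p_{\max})$) and $\mathcal{A}_4$ (the swap of the previous). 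So it suffices to show that $\mathcal{A}_1$ and $\mathcal{A}_4$ both fill $\mathds{R}_+^4$ in the high-SNR limit.

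I would argue this region by region. From Prop. \ref{PropNEGb}, the defining inequalities of $\mathcal{A}_1$ are $\frac{g_{11}}{g_{12}} \geqslant \frac{1}{1+\SNR\, g_{22}}$ and $\frac{g_{21}}{g_{22}} \leqslant 1+\SNR\, g_{11}$. Since the channels $g_{k,s}$ are, with probability one, strictly positive and finite, the right-hand side of the first inequality tends to $0$ and the right-hand side of the second tends to $+\infty$ as $\SNR \to \infty$. Hence there exists a threshold $\SNR_0^{(1)}(\bs{g}) < \infty$ such that $\bs{g} \in \mathcal{A}_1$ for every $\SNR \geqslant \SNR_0^{(1)}(\bs{g})$. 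The exact same argument applied to the two inequalities defining $\mathcal{A}_4$, namely $\frac{g_{11}}{g_{12}} \leqslant 1+\SNR\, g_{12}$ and $\frac{g_{21}}{g_{22}} \geqslant \frac{1}{1+\SNR\, g_{12}}$, yields a threshold $\SNR_0^{(4)}(\bs{g}) < \infty$ such that $\bs{g} \in \mathcal{A}_4$ for every $\SNR \geqslant \SNR_0^{(4)}(\bs{g})$.

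Taking $\SNR_0(\bs{g}) = \max\bigl(\SNR_0^{(1)}(\bs{g}), \SNR_0^{(4)}(\bs{g})\bigr)$, both conditions hold simultaneously, so by Prop. \ref{PropNEGb} the two action profiles $\bs{p}^{*,1}$ and $\bs{p}^{*,4}$ are indeed NE of $\gametwo$ for every $\SNR \geqslant \SNR_0(\bs{g})$. This is what is meant by "in the high SNR regime, independently of the channel realizations". There is no genuine obstacle: the whole argument is a limit computation on the boundaries given in Prop. \ref{PropNEGb}. The only point deserving care is that the threshold $\SNR_0$ depends on the specific realization $\bs{g}$; this is inherent to the finite-$\SNR$ boundary description and does not affect the asymptotic statement, but it should be made explicit so as to distinguish this high-SNR conclusion from the almost-sure claims about $\gameone$ that are valid at every $\SNR$.
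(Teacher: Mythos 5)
Your proposal is correct and follows essentially the same route as the paper: both arguments reduce the lemma to observing that the regions $\mathcal{A}_1$ and $\mathcal{A}_4$ from Prop.~\ref{PropNEGb} expand to all of $\mathds{R}_+^4$ as $\SNR \to \infty$ (the paper states this as a set limit, together with $\mathcal{A}_2, \mathcal{A}_3 \to \emptyset$). Your explicit per-realization threshold $\SNR_0(\bs{g})$ is merely a more careful phrasing of the same limit computation.
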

\begin{proof} In the high SNR, i.e., $\SNR \rightarrow \infty$, the sets $\mathcal{A}_1, \ldots, \mathcal{A}_4$, given by Prop. \ref{PropNEGb}, become the following sets,
\begin{eqnarray}
\ds\lim\limits_{\SNR \rightarrow +\infty} \mathcal{A}_1 = \ds\lim\limits_{\SNR \rightarrow +\infty} \mathcal{A}_4 & = &  \mathds{R}_+^4\\
\ds\lim\limits_{\SNR \rightarrow +\infty} \mathcal{A}_2 = \ds\lim\limits_{\SNR \rightarrow +\infty}  \mathcal{A}_3 & = &
\emptyset.
\label{EqHighSNR4}
\end{eqnarray}

Thus, one can immediately imply that 
$$\textstyle\Pr\left(\bs{g} \in \textstyle\lim\limits_{\SNR \rightarrow +\infty} \mathcal{A}_2\right) = \textstyle\Pr\left(\bs{g} \in \textstyle\lim\limits_{\SNR \rightarrow +\infty} \mathcal{A}_3\right) = 0,$$ 
and,
 $$\textstyle\Pr\left(\bs{g} \in \textstyle\lim\limits_{\SNR \rightarrow +\infty} \mathcal{A}_1\right) = \textstyle\Pr\left(\bs{g} \in \textstyle\lim\limits_{\SNR \rightarrow +\infty} \mathcal{A}_4\right) = 1.$$ 
Hence, from Prop. \ref{PropNEGb}, we imply that both $\bs{p}^{(*,1)}$ and $\bs{p}^{(*,4)}$ are NE action profiles of the game $\gametwo$ in the high SNR regime regardless of the exact channel realizations $\channelset$, which completes the proof.
\end{proof}
From Lemma \ref{LemmaNEG1HighSNR} and Lemma \ref{LemmaNEG2HighSNR}, it is easy to see that if $\bs{g} = \channelvector \in \mathcal{B'}_{n}$, with $n \in \lbrace 1, 4\rbrace$, then (\ref{EqEqualityHighSNR}) holds since $\bs{p}^{\dagger}$ and at least one of the NE action profiles $\bs{p}^{*,n}$, with $n \in \lbrace 1, 4\rbrace$ are identical. In the cases where $\bs{g} = \channelvector \in \mathcal{B'}_{n}$, with $n \in \lbrace 5, \ldots, 8\rbrace$, we prove by inspection that in all the cases condition (\ref{EqEqualityHighSNR}) always holds for both NE action profiles $\bs{p}^{(*,1)}$ and $\bs{p}^{(*,4)}$.
For instance, assume that $\bs{g} \in \mathcal{B'}_{5}$. Then, we have that the unique NE of the game $\gameone$ is $\bs{p}^{\dagger} = (p_{11}^{\dagger}, p_{\max} - p_{11}^{\dagger}, p_{\max} - p_{22}^{\dagger},p_{22}^{\dagger})$, with $p_{11}^{\dagger} = p_{\max}$ and $p_{22}^{\dagger} = \frac{1}{2}\left(p_{\max} + \frac{\sigma^2 + p_{\max}g_{11}}{g_{21}} - \frac{\sigma^2}{g_{22}}\right)$ (See Lemma \ref{LemmaNEG1HighSNR}). Define the function $\psi: \mathds{R}_+ \rightarrow \mathds{R}_+$  as follows: $\psi(x) = 1 + \SNR x$, with $\SNR = \frac{p_{\max}}{\sigma^2}$, and denote by $\Delta_1\left(\SNR\right)$, the difference between the NSE  achieved by playing $\gameone$ and $\gametwo$, with respect to the NE $\bs{p}^{*,1}$ at SNR level $\SNR$, i.e.,
\begin{eqnarray}
\nonumber   \Delta_1\left(\SNR\right) & = & u_1\left(\bs{p}^{*,1}\right) + u_2\left(\bs{p}^{*,1}\right) - \\
\nonumber
& & \left( u_1\left(\bs{p}^{\dagger}\right) + u_2\left(\bs{p}^{\dagger}\right) \right)\\
\nonumber                   
& = & 2 \log_2 \left(2\right) - 2 \log_2\left(1 + \frac{g_{21}}{g_{22}} \frac{\psi(g_{22})}{\psi(g_{11})}\right) -  \\
\nonumber
& & \log_2\left(1 + \frac{g_{22}}{g_{21}} \frac{\psi(g_{11})}{\psi(g_{22})}\right) +  \\
\nonumber
& & \log_2\left(\frac{g_{21}}{g_{22}} +  \psi(g_{21} - g_{11}) \right).
\end{eqnarray}
Note that if $\bs{g} \in \mathcal{B'}_{5}$, then $g_{21} > g_{11}$. Hence,
\begin{eqnarray}
\nonumber       \ds\lim_{\SNR  \rightarrow \infty} \Delta_1\left(\SNR\right)
%                     & = &   2\log_2 \left(2\right) - \\
%\nonumber                     
%                     & & 2 \log_2\left(1 + \frac{g_{21}}{g_{11}}\right) - \\
%\nonumber                     
%                     & & \log_2\left(1 + \frac{g_{11}}{g_{21}}\right) + \\
%\nonumber
%                     & & \log_2\left(1 + \frac{g_{21}}{g_{22}} + \infty\right) \\
\nonumber                     
                     & = & \infty,
\end{eqnarray}
which justifies (\ref{EqEqualityHighSNR}).
Similarly, denote $\Delta_4\left(\SNR\right)$, the difference between the NSE achieved by playing $\gameone$ and $\gametwo$, with respect to the NE $\bs{p}^{*,4}$, i.e.,
\begin{eqnarray}
\nonumber   \Delta_4\left(\SNR\right) & = & \textstyle u_1\left(\bs{p}^{*,4}\right) + u_2\left(\bs{p}^{*,4}\right) - \\
& & \left( u_1\left(\bs{p}^{\dagger}\right) + u_2\left(\bs{p}^{\dagger}\right) \right)\nonumber\\
\nonumber    
& = & 2\log_2 \left(2\right) - 2 \log_2\left(1 + \frac{g_{22}}{g_{21}} \frac{\psi(g_{21})}{\psi(g_{12})}\right) \\
& & -  \log_2\left(1 + \frac{g_{21}}{g_{22}} \frac{\psi(g_{12})}{\psi(g_{21})}\right) \\
& & + \log_2\left(\frac{g_{22}}{g_{21}} +  \psi(g_{22} - g_{12}) \right).
\end{eqnarray}
Note that if $\bs{g} \in \mathcal{B'}_{5}$, then $g_{22} > g_{12}$. Hence,
\begin{eqnarray}
\nonumber       \ds\lim_{\SNR  \rightarrow \infty} \Delta_4\left(\SNR\right)
%                     & = &   2\log_2 \left(2\right) - 2 \log_2\left(1 + \frac{g_{22}}{g_{12}}\right)   - \log_2\left(1 + \frac{g_{12}}{g_{22}}\right) + \log_2\left(1 + \frac{g_{22}}{g_{21}} + \infty \right) 
                     & = &\infty,
\end{eqnarray}
which justifies (\ref{EqEqualityHighSNR}). Hence, one can imply that in the high SNR regime both NE action profiles $\bs{p}^{*,1}$ and $\bs{p}^{*,2}$, satisfy (\ref{EqEqualityHighSNR}) when $\bs{g} \in \mathcal{B'}_5$. The same result as the one obtained when $\bs{g} \in \mathcal{B'}_5$, is also obtained when  $\bs{g} \in \mathcal{B'}_n$, with $n \in \lbrace 6, \ldots, 8\rbrace$, which completes the proof.

\bibliographystyle{IEEEtran}
\bibliography{GT}

\begin{figure}
\centering
\tiny
\graphicspath{{Figures/}}
\def\svgwidth{1\linewidth}
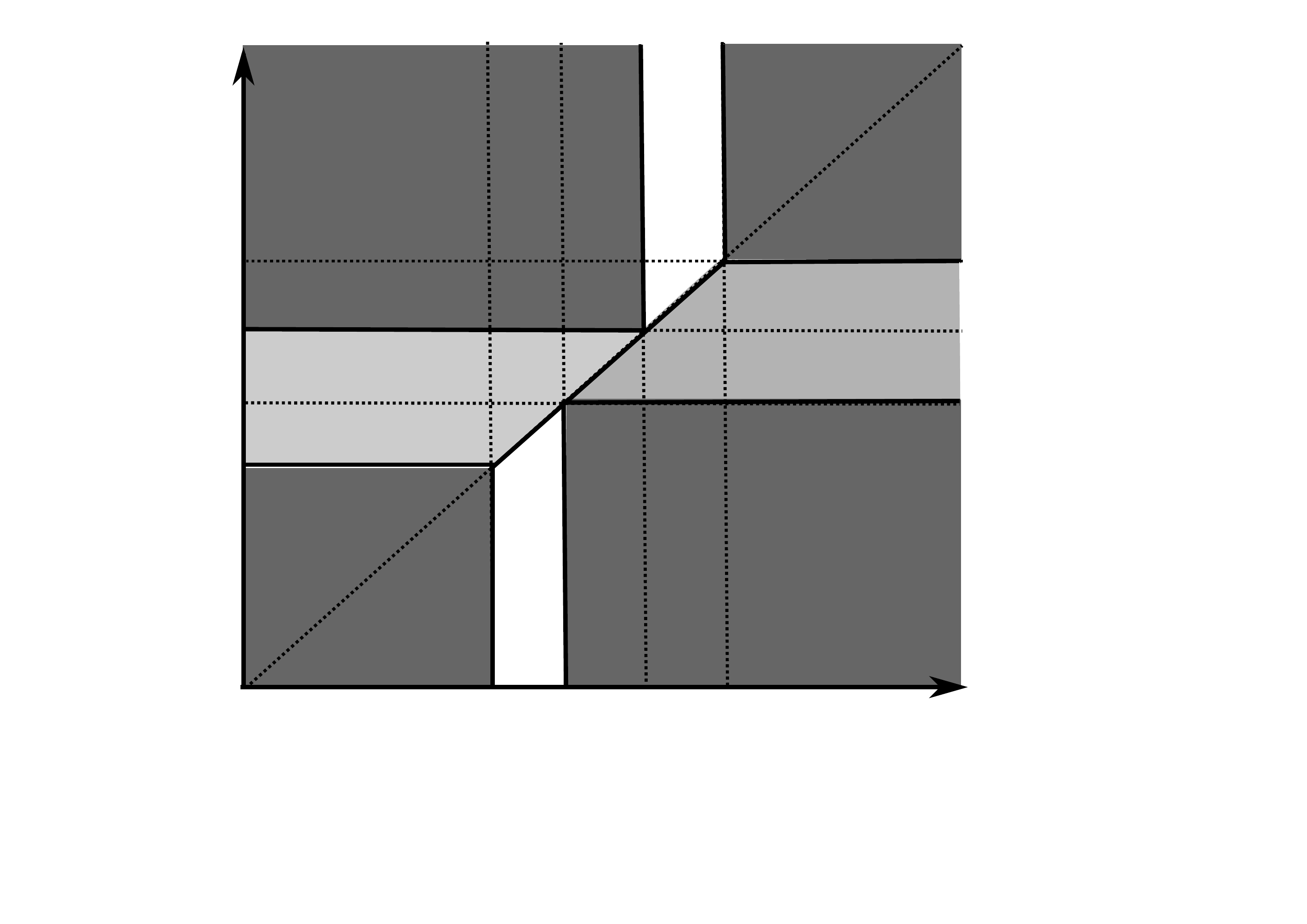
\caption{Nash equilibrium action profiles as a function of the channel ratios $\frac{g_{11}}{g_{12}}$ and $\frac{g_{21}}{g_{22}}$ for the two-player-two-channel game $\gameone$ (left) and $\gametwo$ (right), respectively. The function $\psi: \mathds{R}_+ \rightarrow \mathds{R}_+$ is defined as follows: $\psi(x) = 1 + \SNR  \, x$. The best response function $\BR_k(\bs{p}_{-k})$, for all $k \in \mathcal{K}$, is defined by (\ref{EqNEGameA}). Here, it has been arbitrarly assumed that $\frac{\psi(g_{21})}{\psi(g_{12})} < \frac{\psi(g_{11})}{\psi(g_{22})}$.}
\label{FigNashRegionsG1}
\end{figure}
\begin{figure}
\centering
\tiny
\def\svgwidth{1\linewidth}
\graphicspath{{Figures/}}
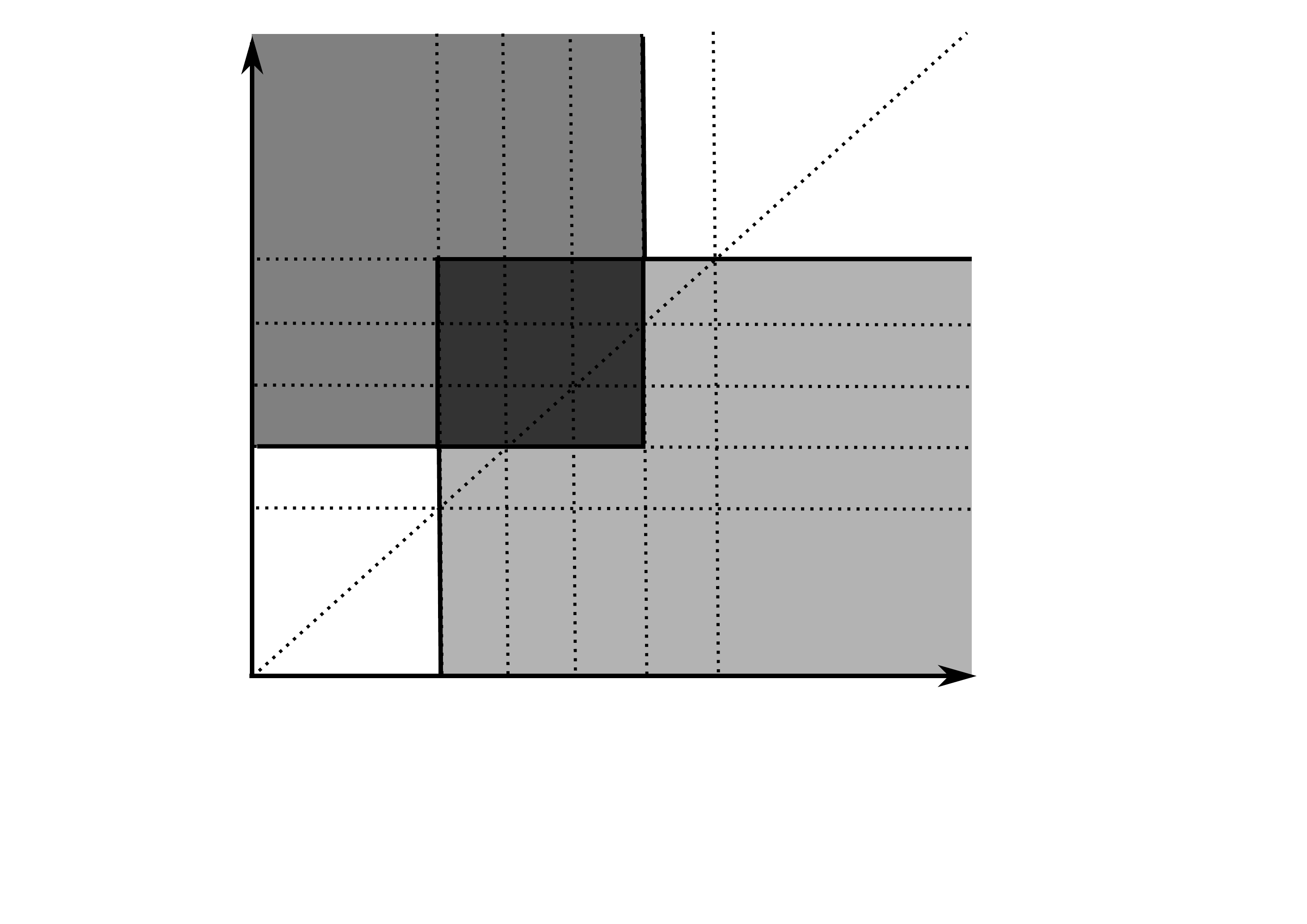
\caption{Nash equilibrium action profiles as a function of the channel ratios $\frac{g_{11}}{g_{12}}$ and $\frac{g_{21}}{g_{22}}$ for the two-player-two-channel game $\gameone$ (left) and $\gametwo$ (right), respectively. The function $\psi: \mathds{R}_+ \rightarrow \mathds{R}_+$ is defined as follows: $\psi(x) = 1 + \SNR  \, x$. The best response function $\BR_k(\bs{p}_{-k})$, for all $k \in \mathcal{K}$, is defined by (\ref{EqNEGameA}). Here, it has been arbitrarly assumed that $\psi({g_{11}}) < \psi({g_{21}})$.}
\label{FigNashRegionsG2}
\end{figure}

\begin{figure}
\centering
\def\svgwidth{1.2\linewidth}
\graphicspath{{Figures/}}
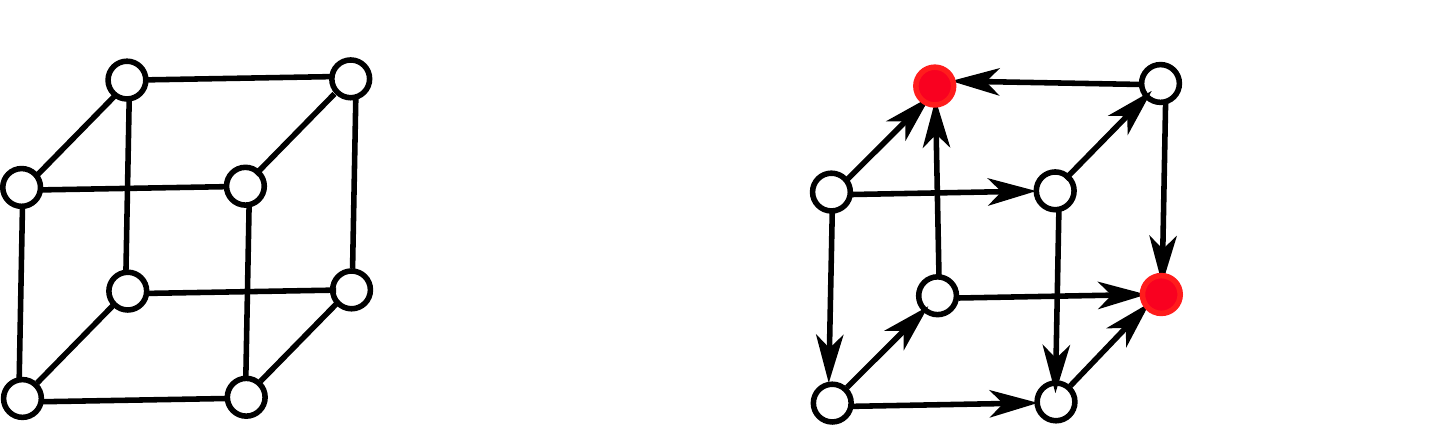
\caption{(a) Non-oriented graph and (b) oriented graph representing the game $\gametwo$ with $K = 3$, $S = 2$, under the condition $\phi(\bs{p}^{(2)}) > \phi(\bs{p}^{(6)}) > \phi(\bs{p}^{(1)}) > \phi(\bs{p}^{(5)}) > \phi(\bs{p}^{(4)}) > \phi(\bs{p}^{(7)}) > \phi(\bs{p}^{(8)}) > \phi(\bs{p}^{(3)})$. Total number of vertices: $S^{K} = 8$, number of neighbors per vertex: $K(S-1) = 3$. Maximum Number of NE: $4$. Number of NE: 2 (red vertices in (b)).}
\label{FigGraph}
\end{figure}

\begin{figure}
\centering
\includegraphics[width=\linewidth]{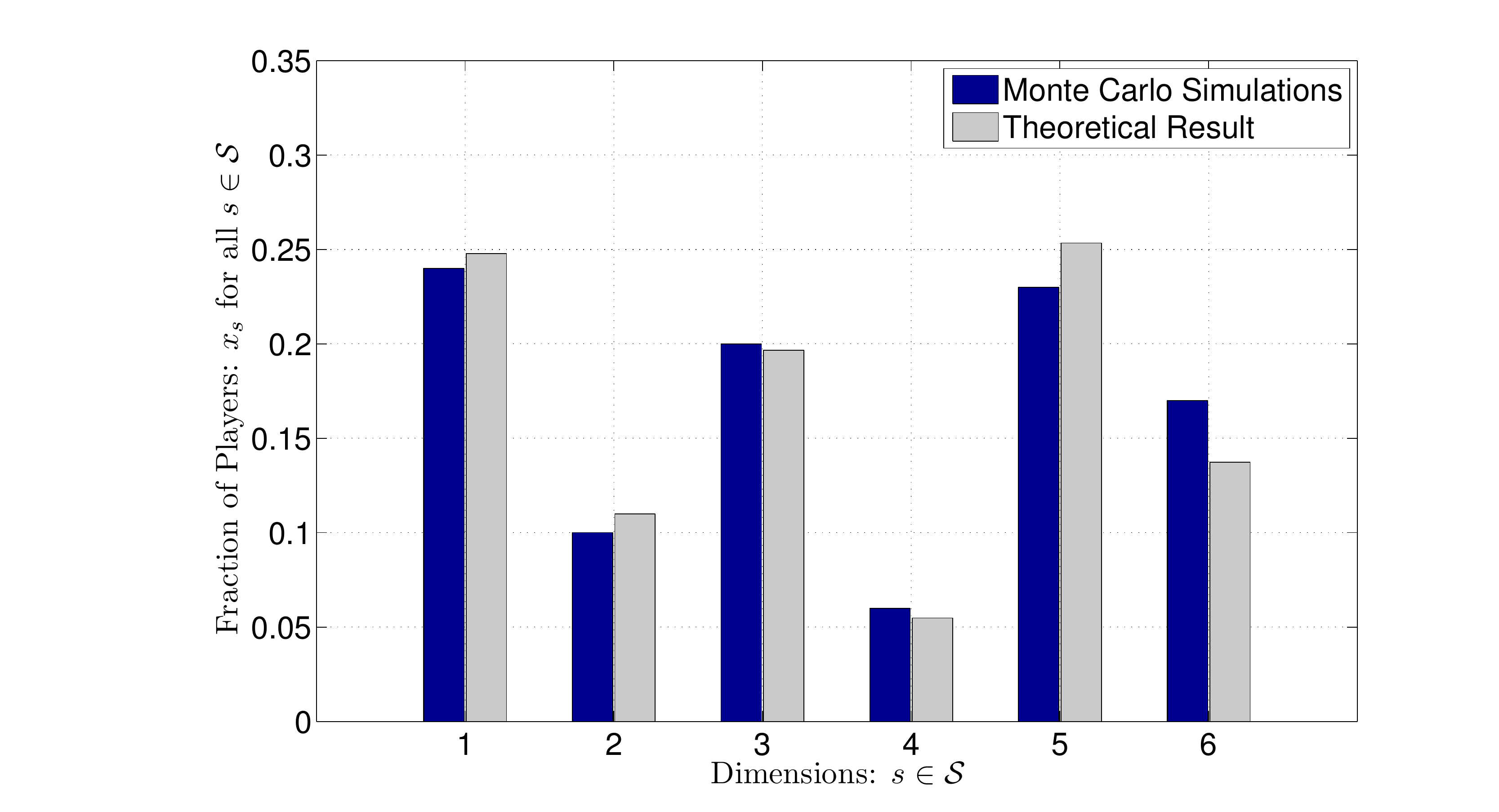}
\caption{Fraction of players transmitting over channel $s$, with $s \in \mathcal{S}$, calculated using Monte-Carlo simulations and using Eq. (\ref{EqFractionSolution}) for a network with $S = 6$ channels,  $K = 60$ players, with $\bs{b} = (b_s)_{\forall s \in \mathcal{S}} =\left( 0.25, 0.11, 0.20, 0.05, 0.25, 0.14 \right)$, and $\SNR = 10\log_{10}\left(\frac{p_{\max}}{N_0 B}\right) = 10$ dB.}
\label{FigFractions}
\end{figure}

%\begin{figure}
%\centering
%\includegraphics[width=.4\linewidth]{Figures/SumRateVsSNR.eps}
%\caption{Sum Transmission rate obtained by Monte-Carlo simulations as a function of the average $\SNR = 10\log\left(\SNR \right)$ for the games $\gameone$ and $\gametwo$, with $K = 2$ and $S = 2$.}
%\label{FigSumRateVsSNR}
%\end{figure}
%
\begin{figure}
\centering
\includegraphics[width=\linewidth]{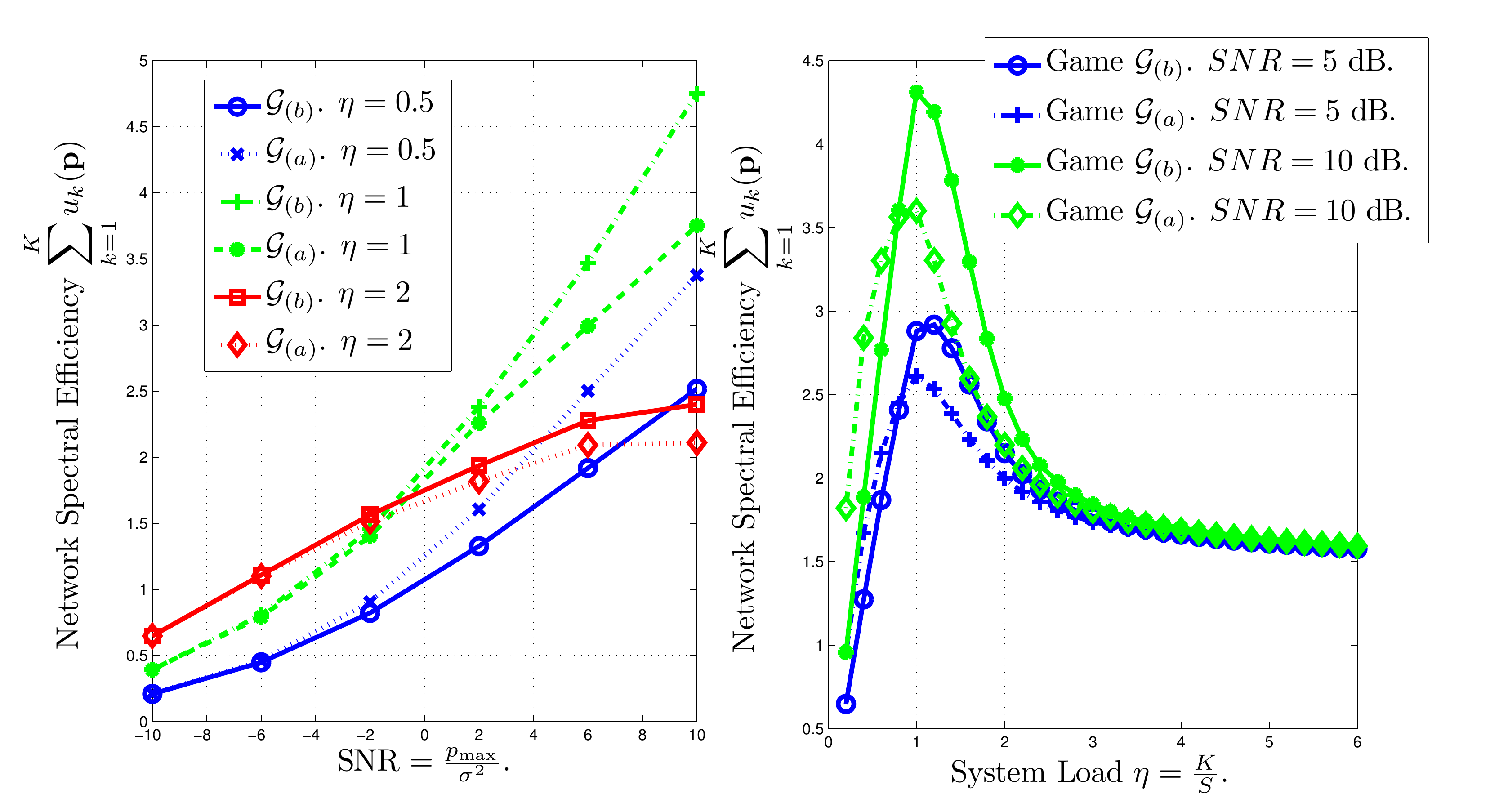}
\caption{(a) Network spectral efficiency as a function of the $\SNR = \frac{p_{\max}}{\sigma^2}$ in dBs. for the case of $\eta = \frac{K}{S} \in \lbrace\frac{1}{2}, 1 , \frac{3}{2}\rbrace$, with $K = 10$. (b) Network spectral efficiency as a function of the system load $\eta = \frac{K}{S}$ for different $\SNR = \frac{p_{\max}}{\sigma^2}$ levels in dBs. }
\label{FigLoadImpactAndSNRImpact}
\end{figure}

%\begin{figure}
%\centering
%\includegraphics[width=\linewidth]{Figures/SNRImpact.eps}
%\caption{Network spectral efficiency as a function of the $\SNR = \frac{p_{\max}}{\sigma^2}$ in dBs. for the case of $\eta = \frac{K}{S} \in \lbrace\frac{1}{2}, 1 , \frac{3}{2}\rbrace$, with $K = 10$.}
%\label{FigSNRImpact}
%\end{figure}

\begin{figure}
\centering
\includegraphics[width=\linewidth]{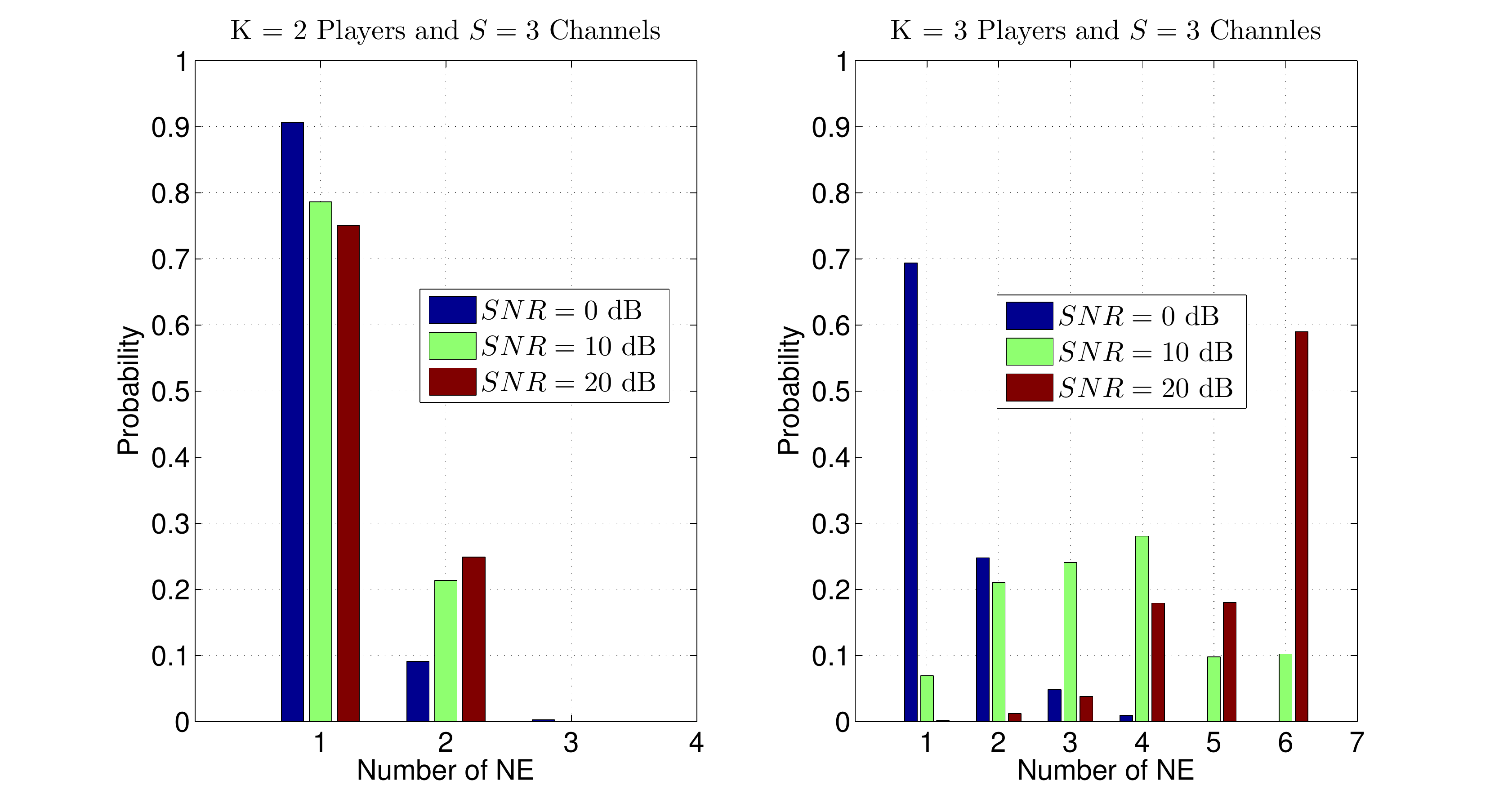}
\caption{Probability of observing a specific number of NE in the game $\gametwo$.}
\label{FigNumberOfNE}
\end{figure}

\end{document}